\newcommand{\kibitz}[2]{\ifnum\Comments=1{\color{#1}{#2}}\fi}
\newcommand{\hma}[1]{\kibitz{blue}{[HONGYAO: #1]}}
\newcommand{\rmr}[1]{\kibitz{red}{[RESHEF: #1]}}
\newcommand{\dcp}[1]{\kibitz{cyan}{[DAVID: #1]}}
\newcommand{\vsq}[1]{\ifnum\vsqCounter=1{\vspace{#1}}\fi}
\definecolor{auburn}{rgb}{0.43, 0.21, 0.1}
\colorlet{darkblue}{blue!35!black}
\theoremstyle{plain}
\newtheorem{theorem}{Theorem}
\theoremstyle{definition}
\newtheorem{definition}{Definition} 
\newtheorem{example}{Example}
\newtheorem{lemma}{Lemma}
\newtheorem{proposition}{Proposition}
\newtheorem{corollary}{Corollary}
\newcommand{\items}{M}				
\newcommand{\CDF}{F} 						
\newcommand{\fixedV}{w}				
\newcommand{\fixedP}{p}				
\newcommand{\zc}{z^0}		
\newcommand{\yzc}{y^0}		
\newcommand{\maxZ}{Z}						
\newcommand{\socialV}{W}					
\newcommand{\0}{^{(0)}}						
\newcommand{\1}{^{(1)}}						
\newcommand{\tzero}{t\0}					
\newcommand{\tone}{t\1}						
\newcommand{\mech}{\mathcal{M}}		
\newcommand{\winner}{{i^\ast}} 		
\newcommand{\wone}{{i^\ast_1}}
\newcommand{\wtwo}{{i^\ast_2}}
\newcommand{\report}{r}					
\newcommand{\reportSet}{\mathcal{R}}	
\newcommand{\bzero}{b\0}				
\newcommand{\bone}{b\1}				
\newcommand{\ut}{ut}					
\newcommand{\sw}{sw}					
\newcommand{\rev}{\mathit{rev}}		
\newcommand{\paymentSpace}{\mathcal{P}}
\newcommand{\largeCnst}{X}
\newcommand{\valeps}{V^{(\eps)}}
\newcommand{\CDFeps}{\CDF^{(\eps)}}
\newcommand{\ueps}{u^{(\eps)}}
\newcommand{\infeps}{\tau^{(\eps)}}
\newcommand{\alloc}{q}
\newcommand{\alloceps}{\alloc'}
\newcommand{\zeps}{z'}
\newcommand{\yeps}{y'}
\newcommand{\xpt}{\hat{z}}
\newcommand{\setR}{\mathbb{R}}
\newcommand{\txtwp}{~\mathrm{w.p.}~}
\newcommand{\txtif}{~\mathrm{if}~}
\newcommand{\txtfor}{~\mathrm{for}~}
\newcommand{\txtand}{~\mathrm{and}~}
\renewcommand{\th}{^{\mathrm{th}}}
\newcommand{\one}[1]{\mathds{1} \{ #1\}}
\newcommand{\E}[1]{\mathbb{E}\left[ #1 \right]}
\newcommand{\Pm}[1]{\mathbb{P}\left[ #1 \right]}
\newcommand{\txtSP}{{\mathrm{SP}}}
\newcommand{\txtCSP}{{\mathrm{CSP}}}
\newcommand{\txtCPM}{\mathrm{CP}}
\newcommand{\txtCPZ}{\mathrm{CP}(\maxZ)}
\newcommand{\eps}{\varepsilon}
\providecommand{\pwfun}[1]{\left\lbrace \begin{array}{ll} #1 \end{array} \right.}
\title{Contingent Payment Mechanisms for Resource Utilization%
\thanks{The authors thank Nick Arnosti, Yakov Babichenko, Ido Erev, Thibaut Horel, Scott Kominers, Debmalya Mandal, Jake Marcinek, Di Pei, Ilya Segal, Moshe Tennenholtz, Rakesh Vohra, and seminar and workshop participants, 
for helpful comments and discussions.}
}
\author{Hongyao Ma%
\thanks{John A. Paulson School of Engineering and Applied Sciences, Harvard University, Cambridge,
MA, 02138, USA. Email: \{hma@seas, parkes@eecs\}.harvard.edu.} 
\and 
Reshef Meir%
\thanks{Department of Industrial Engineering and Management,
Technion - Israel Institute of Technology, Technion City, Haifa 3200003, Israel. Email: reshefm@ie.technion.ac.il.}
\and
David C. Parkes$^\dagger$
\and
James Zou%
\thanks{Department of Biomedical Data Science, Stanford University, Stanford, CA 94305. Email: jamesyzou@gmail.com.}
}
\begin{document}

\maketitle

\begin{abstract}
We introduce the problem of assigning resources to improve their utilization. The motivation comes from settings where agents have uncertainty about their own values for using a resource,  and where it is in the interest of a group  that resources be used and not wasted.
Done in the right way, improved utilization maximizes social welfare--- balancing the utility of a high value but unreliable agent with the group's preference that resources be used.
%
We introduce the family of {\em contingent payment mechanisms} (CP), which may charge an agent contingent on use (a penalty). A CP mechanism is parameterized by a maximum penalty, and has a  dominant-strategy equilibrium. 
Under a set of axiomatic properties, we establish welfare-optimality for the special case $\mathrm{CP}(\socialV)$, with CP instantiated for a maximum penalty equal to societal value $\socialV$ for utilization. $\mathrm{CP}(\socialV)$ is not dominated for expected welfare by any other mechanism, and second, amongst mechanisms that always allocate the resource and have a simple indirect structure, $\mathrm{CP}(\socialV)$ strictly dominates every other mechanism. 
The special case with no upper bound on penalty, the {\em contingent second-price mechanism}, 
maximizes utilization. We extend the mechanisms to assign multiple, heterogeneous resources, and present a simulation study of the welfare properties of these mechanisms.

\end{abstract}


\section{Introduction} \label{sec:intro}

Allocated resources often go to waste, even when in scarce supply. It
is common in university departments, for example, to find that all
rooms are fully booked in advance, yet walking down the corridor one
sees that many rooms are in fact empty.
For another university related example, one of the authors of the
present paper received the following email: \vspace{-0.2em}

{\small
\begin{quote}
{\tt

SITE VISIT: XXXXX Corporation

Date:  Wednesday, January 17, 2018, 9:00am to 1:00pm

Location: XXXXXXX, XX

RESERVATIONS: Reservations are now open. Reserve your spot today! 

COST: \$15 fee to hold your reservation. There is no charge for the site-visit. You will only be charged if you cancel within a week before the trip or do not show up on the morning of the visit.


NUMBER OF PARTICIPANTS: 25 spots

%
}
\end{quote}
}

Another example that we know about considers very costly, biolab
equipment (costing as much as \$3M/yr to run), that is shared amongst
four groups and that many students use. A problem with the current,
first-come first-served reservation system is that the equipment is
always fully booked, but frequently not used when the lab technician checks.
For examples from other domains, consider allocating spots in a
spinning class to gym members, and assigning time slots for a public
electric vehicle charging station to residents in a neighborhood. Even a gym member who is highly uncertain about being able or willing to attend the class, or a resident unsure about actually needing to use the charging station, may reserve a space just in case this turns out to be convenient.


What is common to these problems is the presence of uncertainty, self-interest and down-stream utilization decisions on the part of participants, together with the broader interest of a group (a department, the corporation, or the citizens of a city) that a resource be used and not wasted: 
%
%
%
utilization often has positive externalities beyond the immediate agents, e.g. there will be less air pollution when electrical cars are charged and used; the planner might also be interested only in the utilization of the resources, e.g. the firm benefiting from potentially attracting and hiring more students if more showed up for the site visit.

We formalize the desire for utilization by introducing an additional welfare gain of $\socialV \geq 0$ when a resource is used by the assigned agent, and adopt the design objective of maximizing expected total welfare.
%
The societal value $\socialV$ models either the positive externality on the society from utilization, or the weight assigned to the planner while trading-off agents' vs. planner's welfare. 
In the special case where $\socialV = \infty$, the goal becomes one of optimizing for the planner and maximizing utilization--- the probability that the resource is used.

Despite appearing important to practice and simple to state, this problem does not appear to have been formally defined or studied in the literature.
Collecting bids and running a second-price (SP) auction
need not assign a resource to a reliable agent (the agent with the
highest expected value for the option of using a resource need not be the one most likely to use the resource).
Moreover, the SP auction does not charge payments contingent on
whether or not the resource is used, and because of this misses the
opportunity to ``shape'' incentives to use a resource once it has been assigned. A penalty of \$15 for not using a resource changes the calculus for an assigned agent: now a rational agent will choose to use the resource as long as her realized value is greater than -\$15.

Beyond our opening examples, penalties for not using a resource are
used by some hospitals in charging patients for missing an
appointment,%
\footnote{\url{https://huhs.harvard.edu/sites/default/files/HDS\%20New\%20Patient\%20Welcome\%20Letter.pdf}, visited May 10th, 2018.} 
by organizers of some conferences that charge a deposit which is returned only to students that actually attended talks,%
\footnote{\url{https://risingstarsasia2018.ust.hk/guidelines.php}, visited May 10th, 2018.} 
and by some restaurants who charge a fee if guests reserve but do not show up.%
\footnote{\url{https://www.theguardian.com/lifeandstyle/2018/feb/25/restaurants-turn-up-heat-on-no-show-diners}, visited May 10th, 2018.}
These approaches can be viewed as simple, first-come first-served schemes, and where it is not clear how the penalty should be set: a penalty that is too small is not effective, whereas a penalty that is too big will drive away participation in the scheme. 
We are not aware of any formal analysis of these kinds of mechanisms, or a design approach that takes into account the maximum penalty that an individual participant would be willing to face, which in fact is a very good signal for her reliability. This is the
main conceptual contribution of our paper.

\subsection{Our Results}

We formalize the problem of designing mechanisms for improving
resource utilization, and define a family of two-period mechanisms
that make use of payments that are contingent on whether or not a
resource is used. In our model, an agent's private type corresponds
to a distribution on her future value for using the resource--- this value models her utility from using the resource minus the utility from her outside option, and as a result may be negative. 
In period zero, agents make reports that communicate information about their type. A mechanism assigns the resource, and may both collect a payment at this time as well as determine a penalty
for the assigned agent in the event the resource is not used.
In period one, the assigned agent's value is realized, and with
knowledge of the penalty the agent decides whether or not to use the
resource.

We model the societal value for the resource being utilized as
$\socialV \geq 0$, and take as the design objective that of maximizing expected social welfare: the sum of the expected value to the assigned agent and the expected value to society. 
In the special case that $\socialV = \infty$ the design objective is to maximize utilization.
We insist on voluntary participation, and also the mechanism being
no-deficit, thus precluding charging very large penalties while also
paying the agents a very large reward to participate in the first
place.\footnote{Without the requirement of no-deficit, a simple second price auction for ``the option to use a resource and also get paid $\socialV$ when the resource is used" is welfare-optimal.}

We introduce the class of {\em contingent payment mechanisms} (CP),
parameterized by a maximum penalty $Z$. The CP mechanism has a simple, dominant-strategy equilibrium, where each agent either bids on a base payment she always pays, if she is willing to accept $Z$ as the no-show penalty, or otherwise bids on the maximum penalty she is willing to accept (Theorem~\ref{thm:dominant_strategy}). 
The main results establish the welfare-optimality of the CP mechanism
when instantiated for a maximum penalty equal to the societal value
$\socialV$ for utilization, and under a set of axioms.
First, we show that $\mathrm{CP}(\socialV)$ is not dominated
for expected welfare by any other mechanism (Theorem~\ref{thm:cpm_not_dom}). Second, we show that amongst mechanisms that always allocate the resource and support a simple indirect structure, $\mathrm{CP}(\socialV)$ optimizes social welfare profile by profile (Theorem~\ref{thm:cpm_opt}).
We formalize the simple indirect structure as requiring that mechanisms have an ordered payment space, so that all agents agree on which of any two pairs of (upfront, penalty) payments is more preferable. Such mechanisms simply ask each agent for the maximum such (upfront, penalty) payment that she is willing to accept. As a special case (Theorem~\ref{thm:CSP_beat_SP}), the societal welfare of $\mathrm{CP}(\socialV)$ dominates the SP auction.
We need a genericity assumption to state the main results, precluding ties under the mechanism. We
can dispense with this requirement in a direct-revelation analogue of
the CP mechanism.


As an interesting, and we think practically-motivated special case, the {\em contingent second-price}  (CSP)  mechanism (where the resource is assigned to the agent with the maximal willingness to pay a penalty, and the penalty faced by that agent is the second-highest such bid)  is the special case of the CP mechanism with no upper bound on penalty. 
Based on this, we obtain as an immediate corollary of Theorem~\ref{thm:cpm_opt} that CSP's utilization strictly dominates that of all other mechanisms (under the same assumptions).  The CSP mechanism also has the  appealing property that it never
collects a payment from an agent who uses the resource
(Theorem~\ref{thm:csp_uniq_opt} gives a uniqueness result for CSP
under this additional no-charge requirement).


We extend the mechanisms to the setting of multiple, heterogeneous resources (where each agent gets at most one resource) in Section~\ref{sec:multi_resource}, and present simulation results in Section~\ref{sec:simulations} to demonstrate the effectiveness of our mechanisms, comparing with second-price auctions and other benchmarks.\footnote{For assigning multiple heterogeneous resources, the generalized CP mechanisms  are dominant-strategy incentive-compatible, however, the optimality results do not generalize, and we can construct examples to show that the VCG mechanism can achieve better expected welfare or utilization. 
Still, simulation results demonstrate significantly better performance on average.}
In particular, we show that a significant improvement in societal welfare can be achieved by the CP mechanism (c.f., improvement in utilization for the CSP mechanism).

\subsection{Related Work} \label{subsec:related_work}

%
Contingent payments have arisen in previous work on auction design. Prominent examples include auctioning oil drilling licenses~\cite{hendricks1988empirical}, royalties~\cite{caves2003contracts,deb2014implementation}, ad auctions~\cite{varian2007position}, and selling a firm~\cite{ekmekci2016just}.
Unlike in our model, payments are contingent on some observable world state (e.g. amount of oil produced, a click, or the ex post cash flow) 
 rather than an agent's own downstream actions.
%
 Moreover, the major role of contingent payments in these applications
 is to improve revenue as well as to hedge
 risk~\cite{skrzypacz2013auctions}.  In contrast, the role of
 penalties in our setting is two-fold: to provide participants with a
 way to signal their own, idiosyncratic uncertainty, as well as to
 address problems of moral hazard that arise once a resource has been
 assigned.

Our problem is a principal-agent problem~\cite{hart1986theory,holmstrom1979moral}.
Classically, the
principal-agency literature
 addresses both
problems with hidden information (e.g. seller's quality~\cite{dellarocas2003efficiency})
before the time of contracting, which are termed \emph{adverse selection}, and problems for which information asymmetry arises after
the time of contracting (e.g. shipping a low quality good),
the problem of \emph{moral hazard}.
The distinction between the two settings is blurred in dynamic settings (see~\cite{stole2001lectures,bolton2005contract}) such as the
present one. This is because
there are informational asymmetries 
 both before and after contracting.
 In particular, although agents' actions are fully observable, uncertainty together with participation constraints precludes charging unbounded penalties,
 which is a standard approach when actions are observable in moral hazard problems.
 We are not aware of any 
 model or methods in the principal-agency
 literature that addresses our problem.

\if 0

Our problem is a kind of principal-agent problem~\cite{hart1986theory,holmstrom1979moral}.
Classically, this literature addresses
problems with hidden information (e.g. seller's quality~\cite{dellarocas2003efficiency}) before the time of contracting--- this is 
the problem of \emph{adverse selection}. In addition, this literature 
addresses problems for which information asymmetry arises after
the time of contracting (e.g. shipping a low quality good)--- this
is the problem of \emph{moral hazard}.
The distinction between the two settings is blurred in dynamic settings (see~\cite{stole2001lectures,bolton2005contract}), as is the case for our problem where the asymmetries exist both before and after contracting.
Although agents' actions are fully observable, uncertainty together with  participation constraints make it impossible to charge unbounded penalties, which is a standard approach when actions are observable in settings with moral hazard.  We are aware of
no model or methods in the principal-agency literature that addresses our problem.

\fi

In regard to auctions in which actions take place after the time of
contracting, \citet{atakan2014auctions} study
auctions where the value of taking each action depends on the
collective actions by others, but these actions are taken before
rather than after observing the world state, and thus the timing of
information is quite different than in our model.  A classical paper
is \citet{courty2000sequential}, who study the
problem of revenue maximization in selling airline tickets, where
passengers have uncertainty about their value for a trip at the time
of booking, and decide whether to take a trip only after realizing
their actual values. Although~\citet{courty2000sequential} model
agents' types as distributions, and the optimal mechanism in their
setting can be understood as a menu of contingent contracts, the type
spaces in their model are effectively one-dimensional, since they require the value distributions to satisfy either stochastic dominance or the ``mean preserving spread" condition. 
%
In both cases, agents' expected utility functions do not cross with each other. We do not impose such constraints on the type space, and one of the major technical difficulties in te present work is the heterogeneous preference of agents over different payment schedules. 

The closest related work is on the design of mechanisms for
incentivizing reliability in the specific setting of demand-side
response in electric power systems~\cite{Ma_ijcai16,Ma_aamas17}, where
selected agents decide whether to respond only after uncertainty in
their costs for demand response are resolved.  The objective there is
to guarantee a probabilistic target on the collective actions taken by
agents, without selecting too large a number of agents or incurring
too much of a total cost. Crucially, and in stark contrast to the
models in the present paper, there is no hard feasibility constraint
in these settings of demand response--- that is, whereas only one
agent can be assigned to a resource in our model, in demand response
problems any number of agents can reduce demand. This additional
feasible constraint has the effect of making the present problem more
challenging.

\if 0
\dcp{are you comfortable with this? feel
free to rephrase if not!} \hma{I don't think this is true, though... constraint in this problem are on the feasible allocations, whereas constraints in DR are on the probabilistic outcomes. I would actually say that constraint is harder to deal with. BUT, there is no optimality result at all in the DR paper (e.g. prove that the number of agents that is being selected is small) whereas we have optimality result here. The easier constraint that we have here made it possible to actually prove something}\dcp{hmmm, do you remember the discussion
we had when writing the grant when we disagreed on this issue of
constraints? I thought you agreed with me in the end. does it help
you to remember that discussion? anyway, please try to suggest phrasing you're ok with and we can try to converge} 

\hma{I'm happy with the above. I do remember the conversation about DR having no constraint on how many people we can select, however, there is constraint on the probabilistic decisions that the selected agents are going to make. These constraints can both be considered as constraints on the agent-independent contracts that we offer to the agents, and I think the hardness on satisfying the two are similar. Selecting a smaller number of agents is somewhat the goal of the DR mechanisms, thus the mechanism is able to select any number of agents doesn't really make it a lot easier. If we think about out problem, we can also say there is no constraint on the probability that the resource must be utilized, however, having no such constraint doesn't make the current problem easier, either...} 
\fi

Other papers study assignment problems under uncertainty, including
models with the possibility that workers assigned to tasks will prove
to be unreliable~\cite{porter2008fault}, and general models of dynamic
mechanism design, where the goal is to maximize expected total
(discounted) value in the presence of
uncertainty~\cite{parkes03c,cavallo10,bergemann2010dynamic} (dynamic
VCG mechnanisms).  What is different between these models and our
problem is that there is no need for the ``shaping'' of downstream
behavior through contingent payments. In~\citet{porter2008fault}, for
example, the probability that an assigned worker fails to complete a
task is fixed.  The solution suggested by dynamic VCG would be to
simply run a second-price auction with reserve $-\socialV$ in the second
period (this auction would allow payments of up to $\socialV$ to agents).
This is outside our design space: we seek mechanisms that assign the
resource in a period {\em before} the value of agents are realized.
We think this is important in the aforementioned motivating settings,
because it allows for planning by agents.

There is also a literature on strictly-proper
scoring-rules~\cite{gneiting2007strictly}, but this does not not
appear to helpful for eliciting the information about uncertainty in
the present context because (i) only the actions, and not realized
values are observed, and thus a scoring-rule method could not be used
to elicit beliefs about value distributions, and (ii) the utility for
using an assigned resource is entangled with the incentives to provide
accurate prediction about one's utilization action.

\section{Preliminaries} \label{sec:preliminaries}
We first introduce the model for the assignment of a single resource. 
There is a set of agents $N = \{1,~2,~\dots, ~n \}$ and two time
periods. In period $T = 0$, the value of each agent $i \in N$
for using the resources is uncertain, represented by a random variable
$V_i \in \setR$, whose exact (and potentially negative) value is not
realized until period $T = 1$ (the time line is more formally presented in the next subsection).
The cumulative distribution function (CDF) $\CDF_i$ of $V_i$ is agent $i$'s private information at period $0$, and corresponds to her \emph{type}. Let $\CDF = (\CDF_1, \dots, \CDF_n)$ denote a type profile.

The assignment is determined in period $0$, whereas the allocated
agent decides on whether to use the resource at period $1$, after she
privately learns the realization $v_i$ of $V_i$.
In addition, if the resource is utilized, then society gains value $\socialV \geq 0$. Define $V^+_i \triangleq \max \{V_{i}, \; 0\}$. We make the following assumptions about $F_i$ for each $i \in N$:

\vsq{-0.5em}

\begin{enumerate}[({A}1)]
	\item $\E{V_{i}^+}>0$, which means that $V_i$ takes positive value with non-zero probability, thus the \emph{option} to use the resource as one wishes has positive value. An agent for which this is violated would never be interested in the resource.
	 
	\item $\E{V_{i}^+} < +\infty$, which means that agents do not get  infinite expected utility from the option to use the resource, thus would not be willing to pay an underhandedly large payment for it. 
\end{enumerate}

Example~\ref{ex:vipi} is a value distribution with discrete support,
and models the type of an agent who may be unable to use the resource. 
Example~\ref{ex:exp_model} is an example type model where values are continuously distributed. Our results do not depend on any assumptions on continuity. 

\begin{example}[$(\fixedV_i,\fixedP_i)$ model] \label{ex:vipi} 
w.The value for agent $i$ to use the resource is $\fixedV_i > 0$, however, she is able to do so only with probability $\fixedP_i  \in (0,1)$. With probability $1 - \fixedP_i$, agent $i$ is unable to show up to use the resource. 
This hard constraint can be modeled as $V_i$ taking value $-\infty$ with probability $1-\fixedP_i$.
See Figure~\ref{fig:pmf_vipi}.  We have $\E{V_i^+} = \fixedV_i  \fixedP_i > 0$. 
If the resource is allocated to agent~$i$, it will be used with probability $\fixedP_i$, and the expected social welfare is $\fixedP_i(\fixedV_i+\socialV)$.
\end{example}

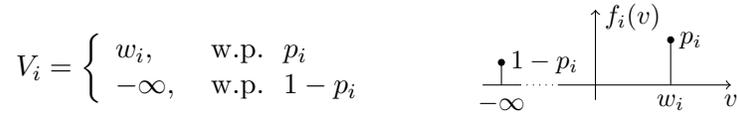
\begin{figure}[t!]
\vsq{-0.5em}
\centering
{
\subfloat{
\small{
\begin{tikzpicture}[scale = 0.9][font=\normalsize]
	\draw (0,0) node  {$V_i = \pwfun{\fixedV_i, &\txtwp\ \fixedP_i \\
		-\infty, &\txtwp\ 1 - \fixedP_i}$};		
	\draw (0,-0.5) node{{\color{white} some text}};
\end{tikzpicture}	
}}
\hspace{2em}
\subfloat{
\begin{tikzpicture}[scale = 1][font = \small]
\draw[->] 	(0.5,0) -- (2.8,0) node[anchor=north] {$v$};
\draw[->] 	(1,-0.2) -- (1,1);
\draw (1,0.9)  node[anchor=west] {$f_i(v)$};
\draw[dotted] (0,0) -- (0.5,0);
\draw (-0.5, 0) -- (0, 0);

\draw (2, 0) -- (2,0.6);
\draw [fill] (2, 0.6) circle [radius=0.04] node[anchor=west] {\small{$\fixedP_i$}};

\draw (-0.25, 0) -- (-0.25,0.3);
\draw [fill] (-0.25, 0.3) circle [radius=0.04]node[anchor=west] {\small{$1-\fixedP_i$}};

\draw	(2, 0) node[anchor= north] {\small{$\fixedV_i$}}
		(-0.25, 0) node[anchor = north] {\small{$-\infty$}};

\end{tikzpicture}
}
}
\caption{Agent value distribution under the $(\fixedV_i,~ \fixedP_i)$ type model.
\label{fig:pmf_vipi}} \vsq{-0.5em}
\end{figure}

\if 0
\vsq{-0.5em}

\begin{figure}[t!]
\vsq{-1em}
\centering 
\subfloat[($\fixedV_i,\fixedP_i)$ model]{\label{fig:pmf_vipi}
\begin{tikzpicture}[scale = 1][font = \small]
\draw[->] 	(0.5,0) -- (2.8,0) node[anchor=north] {$v$};
\draw[->] 	(1,-0.2) -- (1,1);
\draw (1, 0.9)  node[anchor=west] {$f_i(v)$};
\draw[dotted] (0,0) -- (0.5,0);
\draw (-0.5, 0) -- (0, 0);

\draw (2, 0) -- (2,0.6);
\draw [fill] (2, 0.6) circle [radius=0.04] node[anchor=west] {\small{$\fixedP_i$}};

\draw (-0.25, 0) -- (-0.25,0.3);
\draw [fill] (-0.25, 0.3) circle [radius=0.04]node[anchor=west] {\small{$1-\fixedP_i$}};

\draw	(2, 0) node[anchor= north] {\small{$\fixedV_i$}}
		(-0.25, 0) node[anchor = north] {\small{$-\infty$}};

\end{tikzpicture}
}
\hspace{2em}
\subfloat[Exponential model]{\label{fig:pdf_exp}

\begin{tikzpicture}[scale = 1][font=\small]

\draw[->] (-1.8,0.3) -- (2.2,0.3)  node[anchor=north] {$v$};
  
\draw[->] (0,0.1) -- (0,1.3);
\draw (0,1.2) node[anchor=west] {$f_i(v)$};

\draw[scale=1,domain=-1.8:1.5,smooth,variable=\x] plot (\x,{0.5*exp(0.8*(\x-1.5))+0.3});

\draw[-] (1.5, 0.3) -- (1.5,0.8);

\draw (1.5, 0.3) node[anchor=north] {$\fixedV_i$};
\draw[dotted] (0, 0.8) -- (1.5, 0.8);
\draw (0, 0.8) node[anchor=east] {$\lambda_i$};
\end{tikzpicture}

}
\vspace{-0.3em}
\caption{Agent value distributions under two type models.	
\label{fig:example_val_distributions}} 
\vsq{-0.5em}
\end{figure}
\fi

\begin{example}[Exponential model]  \label{ex:exp_model} 
The utility for agent $i$ to use the resource is a fixed value $\fixedV_i > 0$ minus a random opportunity cost, which is exponentially distributed with parameter $\lambda_i>0$. 
%
See Figure~\ref{fig:pdf_exp}. The expected value of the random value $V_i$ is $\E{V_i} = \fixedV_i - 1/\lambda_i$, where $1/\lambda_i$ is the expected value of the opportunity cost. 

\end{example}

\begin{figure}[t!]
\centering
{
\subfloat{
\small{
\begin{tikzpicture}[scale = 0.9][font=\normalsize]
	\draw (0,0) node  {$	f_i(v) = \pwfun{ \lambda_i e^{\lambda_i (v - \fixedV_i)}, & v \leq \fixedV_i \\
	0, & v > \fixedV_i}$};		
	\draw (0,-0.6) node{{\color{white} some text}};
\end{tikzpicture}	
}}
\hspace{2em}
\subfloat{
\begin{tikzpicture}[scale = 1.3][font=\small]

\draw[->] (-1.8,0.3) -- (2.2,0.3)  node[anchor=north] {$v$};
  
\draw[->] (0,0.1) -- (0,1.2);
\draw (0,1.1) node[anchor=west] {$f_i(v)$};

\draw[scale=1,domain=-1.8:1.5,smooth,variable=\x] plot (\x,{0.5*exp(0.8*(\x-1.5))+0.3});

\draw[-] (1.5, 0.3) -- (1.5,0.8);

\draw (1.5, 0.3) node[anchor=north] {$\fixedV_i$};
\draw[dotted] (0, 0.8) -- (1.5, 0.8);
\draw (0, 0.8) node[anchor=east] {$\lambda_i$};
\end{tikzpicture}
}
}
\caption{Agent value distribution under the exponential type model.
\label{fig:pdf_exp}} \vsq{-0.5em}
\end{figure}
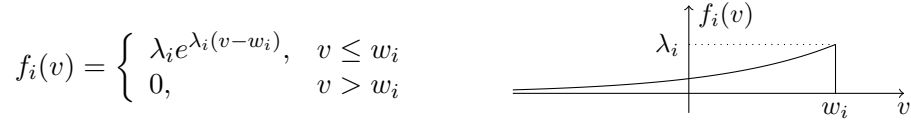



\subsection{Two-Period Mechanisms}

A {\em two-period mechanism} is defined by $\mech = (\reportSet, x, \tzero, \tone)$. At period 0, 
each agent makes a report $\report_i$ from some set of messages $\reportSet$. Let $\report = (\report_1, \dots, \report_n)\in \reportSet^n$ denote a report profile. Based on the reports, an {\em allocation rule} $x = (x_1, \dots, x_n): \reportSet^n \rightarrow \{0, 1\}^n$ assigns the right to use the resource to at most one agent, which we denote as $\winner$, for whom $x_{\winner}(\report) = 1$. $x_i(\report) = 0$ for all $i \neq \winner$. Each agent is charged $\tzero_i(\report)$ in period 0. The mechanism also determines the penalty $\tone_{\winner}(\report)$ for the allocated agent (denote $\tone_i(\report) = 0$ for all $i \neq \winner$).\footnote{More generally, we may think of mechanisms that charge\ the allocated agent a non-zero payment in period $1$ even if she used the resource. Without temporal preference for money, it is without loss to move this part of the payment to period $0$, and at the same time subtract the same amount from the penalty payment.}
%
%
%
The timeline of a two-period mechanism is as follows: \smallskip

\noindent {\em Period~$0$:} \vspace{-0.5em}
\begin{enumerate}[$\bullet$]
	\setlength\itemsep{0em}
	\item Each agent $i \in N$ reports $\report_i \in \reportSet$ to the mechanism based on the knowledge of her type $\CDF_i$.
	\item The mechanism allocates the resource to agent $\winner \in N $, thus $x_\winner(\report) = 1$ and $x_i(\report)= 0$ for $i \neq \winner$. 
	\item The mechanism collects $\tzero_i(\report)$ from each agent, and determines the penalty $\tone_{\winner}(\report)$  for $\winner$.
\end{enumerate}

\vspace{-0.3em}

\noindent {\em Period~$1$}: \vspace{-0.5em}
\begin{enumerate}[$\bullet$]
	\setlength\itemsep{0em}
	\item The allocated agent privately observes the realized value $v_{\winner}$ of $V_\winner$.
	\item The allocated agent decides whether to use the resource  based on $v_{\winner}$ and $\tone_\winner(\report)$.
	\item The mechanism collects the penalty  $\tone_\winner(\report)$ from agent $\winner$ if she did not use the resource.
\end{enumerate}

\begin{example}[Second price auction] The standard second price (SP) auction can be described as a two-period mechanism, where $\mathcal{R} =\mathbb R$, $\winner \in \arg \max_{i \in N} \report_i$, $\tzero_\winner(\report) = \max_{i\neq \winner} \report_i$, and all other payments are 0. The second price auction does not make use of the period~1 payments.
%
\end{example}

\begin{example}[Contingent second price mechanism] The \emph{contingent second price} (CSP) mechanism collects a single bid from each agent, allocates the right to use resource to the highest bidder, and charges the second highest bid, \emph{but only if the allocated agent fails to use the resource}.  Formally, $\mathcal{R} =\mathbb R$, $\winner \in \arg \max_{i \in N} \report_i$, $\tone_\winner(\report) = \max_{i\neq \winner} \report_i$, and all other payments are 0.
\end{example}

\if 0
\rmr{I think the basic 1-task mechanism of Porter at al. coincides with CSP in the wipi model, in terms of allocation and payments. We don't have time to prove it formally and it will require additional notation and space. Should we mention it?}
\dcp{not doing this in interest of time}
\fi

\medskip

We assume that agents are risk-neutral, expected-utility maximizers with quasi-linear utility functions. Assume agent $i$ is allocated the resource and is facing a \emph{two part payment} $(z,y)$, where $z$ is the period~$1$ \emph{penalty} payment and $y$ is the period~$0$ \emph{base payment}. Her utility from using the resource in period $1$ is $v_i - y$, and her utility from not using the resource is $-y - z$. Therefore, after observing $v_i$ in period $1$, the rational decision is to use the resource if and only if $v_i - y \geq -y - z \Leftrightarrow v_i \geq - z$ (breaking ties in favor of using the resource). Define $u_i(z)$ as 
\begin{align}
	u_i(z) \triangleq \E{V_i \one{V_i \geq -z}} - z \Pm{V_i < -z} = \E{\max \{V_i, -z\}}, \label{equ:exp_util_z}
\end{align}
where $\one{\cdot}$ is the indicator function, we know that the expected utility of an allocated agent facing two-part payment $(z,y)$ is $u(z) - y$. Under a two-period mechanism, given report profile $\report$, agent $i$'s expected utility is therefore $x_i(\report) u_i(\tone_i(\report)) - \tzero_i(\report)$.

Throughout the paper, we assume that agents make rational decisions in period $1$, if allocated. The interesting question is in agents' incentives regarding their reports in period~$0$.
For any vector $s = (s_1, \dots, s_n)$ and any $i \in N$, we denote $s_{-i} \triangleq (s_1, \dots, s_{i-1}, s_{i+1}, \dots, s_n)$.

\begin{definition}[Dominant strategy equilibrium] 
A two-period mechanism has a {\em dominant strategy equilibrium} (DSE) if for each agent $i \in N$, for any type $\CDF_i$ satisfying (A1) and (A2), there exists a report $\report^\ast_i \in \reportSet$ such that $\forall \report_{i} \in \reportSet, ~\forall \report_{-i} \in \reportSet^{n-1}$, 
\begin{align*}
	x_i(\report^\ast_i, ~ \report_{-i}) u_i(\tone_i(\report^\ast_i, ~ \report_{-i})) - \tzero_i(\report^\ast_i, ~ \report_{-i}) \geq x_i(\report_i, ~ \report_{-i}) u_i(\tone_i(\report_i, ~ \report_{-i})) - \tzero_i(\report_i, ~ \report_{-i}).
\end{align*}
\end{definition}


%
Let $\report^\ast(\CDF) = (\report^\ast_1, \dots, \report^\ast_n)$ denote the report profile under a DSE given type profile $\CDF$.
\begin{definition}[Individual rationality] 
A two-period mechanism is {\em individually rational} (IR) if for each agent $\forall i \in N$, for any type $\CDF_i$ satisfying (A1) and (A2), and any report profile $\report_{-i} \in \reportSet^{n-1}$, 
\begin{align*}
	x_i(\report^\ast_i, ~ \report_{-i}) u_i(\tone_i(\report^\ast_i, ~ \report_{-i})) - \tzero_i(\report^\ast_i, ~ \report_{-i}) \geq 0.
\end{align*}
\end{definition}

IR requires that an agent's expected utility is non-negative under her dominant strategy given that she makes rational decisions in period $1$ (if allocated), regardless of the reports made by the rest of the agents.
IR is based on the expected utility before uncertainty is resolved. It is still possible for an agent to get negative utility at the end of period 1. 
We cannot charge unallocated agents without violating IR, thus $\tzero_i(\report) \leq 0$ for all $i\neq \winner$ for all $\report \in \reportSet^n$.

The expected revenue of a two-period mechanism $\mech$ is the total expected payment from the agents to the mechanism in DSE, assuming rational decisions of agents in period $1$:
\begin{align}
	\rev_\mech(\CDF) \triangleq &
	\sum_{i \in N } \tzero_i(\report^\ast) + 
	\tone_{\winner}(\report^\ast) \cdot \Pm{ V_{\winner} < - \tone_{\winner}(\report^\ast) }.	\label{equ:rev}
\end{align}

\begin{definition}[No deficit] A two-period mechanism satisfies {\em no deficit} (ND) if, for any type profile $\CDF$ that satisfies (A1) and (A2), the expected revenue is non-negative: $\rev_\mech(\CDF) \geq 0$.
\end{definition}

We also consider two additional properties:
A mechanism is \emph{anonymous} if the outcome (assignment, payments) is invariant to permuting the identities of agents. 
A mechanism is \emph{deterministic} if the outcome is not randomized unless there is a tie. 
%
%

The \emph{utilization} achieved by mechanism $\mech$ in dominant strategy is the probability with which the allocated agent rationally decides to use the resource:
\begin{align}
	\ut_\mech(\CDF) \triangleq \Pm{V_{\winner} \geq - \tone_\winner (\report^\ast)}. 
\end{align}

The expected welfare gain to society from the resource being utilized is therefore $\ut_\mech(\CDF)\socialV$, and the expected \emph{social welfare} is the sum of this welfare gain, and the expected value of the agent from using this resource:
\begin{align}
	\sw_\mech(\CDF) \triangleq \E{ V_\winner \one{V_\winner \geq - \tone_\winner (\report^\ast)} } + \socialV \ut_\mech(\CDF).
\end{align}

Our objective is to design mechanisms that maximize
expected social welfare. We do not consider monetary transfers in the social welfare function. The reason $\tone_i(\report^\ast)$ appears is that it affects the decision of the allocated agent in period $1$.

\section{Contingent Payment Mechanism} \label{sec:cont_payment_mech}

We introduce in this section a class of contingent payment mechanisms parametrized by a maximum penalty $\maxZ$ an agent may be charged in period~$1$, and show that under (A1) and (A2), the contingent payment mechanism with $\maxZ = \socialV$ achieves higher welfare and utilization than the second price auction in dominant strategy equilibrium. The uniqueness and optimality are discussed in Section~\ref{sec:optimality}.

\begin{definition}[Contingent payment mechanism] \label{def:cpm}
The \emph{contingent payment mechanism} with maximum penalty $\maxZ$ (the CP($\maxZ$) mechanism) collects two-part bids $b = (b_1, \dots, b_n)$. For each $i \in N$, $b_i = (\bone_i,\bzero_i) \in \reportSet$, where  $\reportSet = \{(z,y) \in \setR^2 ~|~0 \leq z \leq \maxZ,~y = 0 \} \cup \{(z,y) \in \setR^2 ~|~ z = \maxZ, ~ y \geq 0\}$. 
%
%
\begin{enumerate}[$\bullet$]
	\setlength\itemsep{0em}
	\item Allocation rule: $x_{\winner}(b) = 1$ for $ \winner \in \arg \max_{i\in N} \{ \bzero_i + \bone_i\}$ (breaking ties at random). 
	\item Payment rule: let $i' \in \arg\max_{i \neq \winner} \{\bzero_i + \bone_i\}$. $\tzero_\winner(b) = \bzero_{i'}$; $\tone_\winner(b) =\bone_{i'}$; $\tzero_i(b) = 0$, $\forall i \neq \winner$.
\end{enumerate}
\end{definition}

Under the CP($\maxZ$) mechanism, each agent may bid a period~$0$ payment if she is willing to bid a period~$1$ no-show penalty of $\maxZ$, in which case $b_i = (\maxZ,\bzero_i)$ for some $\bzero \geq 0$. Otherwise, she may bid a maximum acceptable penalty (up to $\maxZ$) and no period~$0$ payment, i.e. $b_i = (\bone_i,0)$ for some $\bone_i \in [0, \maxZ]$.
The resource is allocated to the highest period~$0$ payment bidder, if there exist any agent with non-zero $\bzero$ bid (since $\bone_i \leq \maxZ$ thus $\bone_i + 0 \leq \bzero_i + \maxZ$).
Otherwise, the resource is allocated to the highest period~$1$ penalty bidder. The allocated agent is charged a two part payment equal to the bid of the second ``highest" bidder. 

\if 0
\begin{definition}[Contingent Payment Mechanism] \label{def:cpm}
The contingent payment mechanism (CP) collects bids $b = (b_1, \dots, b_n)$ from the agents.
\begin{enumerate}[$\bullet$]
	\setlength\itemsep{0em}
	\item Allocation rule: $x_{\winner}(b) = 1$ for $ \winner \in \arg \max_{i\in N} b_i$ (breaking ties at random); $x_{\winner}(b) = 0$ for $i \neq \winner$,
	\item Payment rule: $\tzero_\winner(b) = \max\{ \max_{i \neq \winner} b_i - \maxZ, ~0 \}$; $\tone_\winner(b) = \min\{\max_{i \neq \winner} b_i,~\maxZ \}$; $\tzero_i(b) = 0$, $\forall i \neq \winner$.
\end{enumerate}
\end{definition}
In words, if the second highest bid $\max_{i\neq \winner} b_i$ is below $\maxZ$, the allocated agent is charged $\max_{i\neq \winner} b_i$ as the penalty and no base payment. If $\max_{i\neq \winner} b_i$ is above $\maxZ$, then the allocated agent is charged penalty $\maxZ$ and the remaining $\max_{i\neq \winner} b_i - \maxZ$ as the base payment in period $0$. 
\fi

Recall that $u_i(z) = \E{\max\{ V_i, -z\}}$ as defined in \eqref{equ:exp_util_z} is the expected utility of agent $i$ if she were allocated and charged only a penalty $z$--- in period $1$, the agent gets either her realized value $v_i$, or get $-z$ from paying the penalty, whichever is higher. $u_i(z)$, as a result, is also the highest base payment agent $i$ is willing to accept, when her penalty is $z$. We first state some useful properties.

\begin{restatable}{lemma}{lemmaExpUtility} \label{lem:exp_u}  
Assuming (A2), the expected utility $u_i(z)$ as a function of the penalty $z$ satisfies:
\begin{enumerate}[(i)]
	\setlength\itemsep{0em}
	\item $u_i(0) = \E{V_i^+}$, $\lim_{z \rightarrow +\infty} u_i(z) = \E{V_i}$.
	\item $u_i(z)$ is continuous, convex and monotonically decreasing with respect to $z$.	
\end{enumerate}
\end{restatable}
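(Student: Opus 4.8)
The plan is to reduce everything to the single clean identity $u(z) = \E{\max\{V,-z\}}$ and then read off the three claims. First I would observe that, because $\max\{v,-z\}$ equals $v$ on $\{v \ge -z\}$ and $-z$ on $\{v < -z\}$, equation~\eqref{equ:exp_util_z} is exactly $u(z) = \E{\max\{V,-z\}}$; this expectation is well defined and finite under~(A2) since $-z \le \max\{V,-z\} \le V^+$ and $\E{V^+} < \infty$. From this form two of the claims are immediate: $u(0) = \E{\max\{V,0\}} = \E{V^+}$, and if $z_1 \le z_2$ then $\max\{V,-z_1\} \ge \max\{V,-z_2\}$ pointwise, so $u(z_1) \ge u(z_2)$, i.e.\ $u$ is (weakly) decreasing, consistent with the flat segments in Figure~\ref{fig:utilities}; the decrease is strict precisely where $F(-z) > 0$.

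For part~(i), I would compute $\E{V^+} - u(z) = \E{V^+ - \max\{V,-z\}}$, which is legitimate since both expectations are finite. A three-case check on the sign and magnitude of $V$ gives the pointwise identity $V^+ - \max\{V,-z\} = \min\{z,\ \max\{-V,0\}\}$ (the difference is $0$ when $V \ge 0$, equals $-V \in [0,z]$ when $-z \le V < 0$, and equals $z$ when $V < -z$, including the event $V = -\infty$ relevant to Example~\ref{ex:vipi}). Applying the layer-cake formula to this bounded nonnegative variable gives $\E{V^+} - u(z) = \int_0^\infty \Pm{\min\{z,\max\{-V,0\}\} > t}\,dt = \int_0^z \Pm{V < -t}\,dt$, the integrand vanishing for $t \ge z$. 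Since $\Pm{V < -t}$ and $F(-t) = \Pm{V \le -t}$ agree except at the at most countably many atoms of $V$, the two integrals coincide, yielding $u(z) = \E{V^+} - \int_0^z F(-v)\,dv$.

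Part~(ii)'s limit then follows by letting $z \to \infty$ in~(i): as $0 \le F(-v) \le 1$, the integral $\int_0^z F(-v)\,dv$ increases monotonically to $\int_0^\infty F(-v)\,dv = \E{\max\{-V,0\}}$ (layer cake once more, the limit possibly $+\infty$). Because $\E{V^+} < \infty$ by~(A2), $\E{V} = \E{V^+} - \E{\max\{-V,0\}}$ is well defined in $[-\infty,\infty)$, so $\lim_{z\to\infty} u(z) = \E{V}$ (equal to $-\infty$ exactly when $\E{\max\{-V,0\}} = \infty$, as in the $(w_i,p_i)$ model). For the continuity in~(iii), I would just invoke the integral representation from~(i): $z \mapsto \int_0^z F(-v)\,dv$ is $1$-Lipschitz because its integrand lies in $[0,1]$, hence $u$ is continuous.

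The genuinely delicate points — as opposed to the routine case analysis and the invocations of the layer-cake/Fubini identities — are the measure-theoretic bookkeeping: being careful about atoms of $V$ (so that $\Pm{V < -t}$ may be replaced by $F(-t)$ under the integral) and about the possibility $\E{V} = -\infty$, so that both the limit claim in~(ii) and the monotone passage to the limit remain valid without assuming $V$ is integrable. I expect this to be the main, though not severe, obstacle.
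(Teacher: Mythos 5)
Your proof is correct, and it reaches the integral representation in part (i) by a genuinely different computation than the paper. The paper's proof starts from the decomposition \eqref{equ:exp_util_z_parts} and applies Riemann--Stieltjes integration by parts to $\int_{-z}^{0} v\,dF(v)$, with the boundary term $vF(v)\big|_{-z}^{0} = zF(-z)$ cancelling the penalty term; you instead collapse \eqref{equ:exp_util_z} into the single identity $u(z) = \E{\max\{V,-z\}}$, establish the pointwise identity $V^+ - \max\{V,-z\} = \min\{z,\max\{-V,0\}\}$ by case analysis, and apply the layer-cake formula. Your route buys two things: the reformulation makes $u(0)=\E{V^+}$ and the (weak) monotonicity in part (iii) immediate without any calculus, and the layer-cake argument sidesteps the integration by parts entirely, handling atoms of $V$ (including the $-\infty$ atom of the $(w_i,p_i)$ model) and the distinction between $\Pm{V<-t}$ and $F(-t)$ explicitly rather than implicitly. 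The paper's route is shorter on the page but more formal. For the limit in part (ii) the two arguments essentially coincide: the paper's step $\E{-V^-}=\int_0^\infty F(-v)\,dv$ via the CDF of $Y=-V^-$ is the same tail-integral identity you invoke, and both treatments correctly accommodate $\E{V}=-\infty$. One minor point in your favor: you note the decrease of $u$ is strict only where $F(-z)>0$, which is the accurate reading of "monotonically decreasing" here (the paper's $\frac{d}{dz}u(z)=-F(-z)<0$ is slightly overstated, as its own Figure~\ref{fig:utilities} shows flat segments).
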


See Figure~\ref{fig:utilities}. The proof of this lemma is straightforward. For part (i), $u_i(0) = \mathbb{E}[V_i^+]$ by definition, and $\lim_{z \rightarrow \infty} \E{u_i(z)} = \E{V_i}$ holds by the monotone convergence theorem.
Part (ii) holds since $\max\{V_i, -z\}$ is monotonically non-increasing in $z$, continuous, and convex, so $u_i(z)$ inherits these properties. 
Intuitively, when $z = 0$, the agent uses the resource if and only if her realized value is non-negative thus gets expected utility $\mathbb{E}[V_i^+]$. As the penalty $z$ increases, the agent's expected utility continuously decreases. When $z = \infty$, the agent always uses the resource and never pays the penalty thus her expected utility converges to $\E{V}$. 

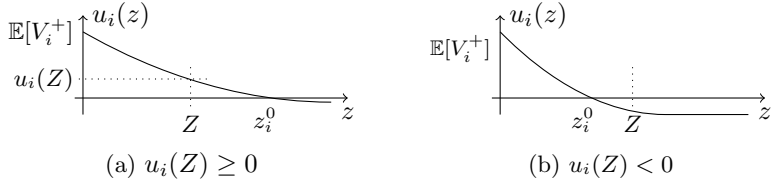
\begin{figure}[t!]
\centering 
\subfloat[\small{$u_i(\maxZ) \geq 0$}]{\label{fig:utilieis_posEv}
\begin{tikzpicture}[scale = 1.1][font=\small]
\draw[->] 	(-0.1,0) -- (3.2,0) node[anchor=north] {$z$};
\draw[->] 	(0,-0.2) -- (0,1) node[anchor=west] {$u_i(z)$};

\draw[-] 	(0, 0.8) parabola[bend at end] (3, -0.05); 

\draw[dotted] 	(-0.05,0.23) -- (1.55, 0.23);
\draw[dotted] 	(1.3,-0.1) -- (1.3, 0.4);
\draw (1.3, -0.1) node[anchor=north] {\footnotesize{$\maxZ$}};

\draw	(0, 0.8) node[anchor=east] {\footnotesize{$\mathbb{E}[V_i^+]$}};
\draw	(0, 0.22) node[anchor=east] {\footnotesize{$u_i(\maxZ)$}};
\draw (2.2, 0) node[anchor=north] {\footnotesize{$\zc_i$}};

\end{tikzpicture}
}
\hspace{1em}
\subfloat[$u_i(\maxZ) < 0$]{\label{fig:utilities_negEv}

\begin{tikzpicture}[scale = 1.1][font=\small]
\draw[->] 	(-0.1,0) -- (3.2,0) node[anchor=north] {$z$};
\draw[->] 	(0,-0.2) -- (0, 1) node[anchor=west] {$u_i(z)$};

\draw[-] 	(0, 0.8) parabola[bend at end] (2, -0.2) -- (3, -0.2);

\draw	(0, 0.6) node[anchor=east] {\scriptsize{$\mathbb{E}[V_i^+]$}};
\draw	(1,0) node[anchor = north] {\footnotesize{$\zc_i$}};

\draw[dotted] 	(1.6,-0.1) -- (1.6, 0.4);
\draw (1.6, -0.1) node[anchor=north] {\footnotesize{$\maxZ$}};

\end{tikzpicture}

}
\caption{Expected utility for being allocated the resource
as a function of penalty $z$. 	
\label{fig:utilities}} 
\vsq{-0.5em}
\end{figure}

\begin{theorem}[Dominant Strategy in CP($\maxZ)$] 
\label{thm:dominant_strategy} Given~(A1)-(A2), under the 
CP(\maxZ) mechanism, it is a dominant strategy for each agent $i \in N$ to bid $b^\ast_{i, \txtCPM} = (\maxZ, u_i(\maxZ))$ if $u_i(\maxZ) \geq 0$. Otherwise, it is a dominant strategy to bid $b^\ast_{i, \txtCPM} = (\zc_i, 0)$, where $\zc_i$ is the unique zero-crossing of $u_i(z)$.
\end{theorem}

\begin{proof}

First, observe that the message space $\reportSet$ is effectively one-dimensional. For any pair of bids $(\bone_i, \bzero_i), ~(\tilde{b}\1_i, \tilde{b}\0_i) \in \reportSet$, denote $(\bone_i, \bzero_i) \succeq (\tilde{b}\1_i, \tilde{b}\0_i)$ if $\bone_i+ \bzero_i \geq \tilde{b}\1_i + \tilde{b}\0_i$. For any type $F_i$ satisfying (A1) and (A2), the agent's expected utility is weakly lower for a higher payment, i.e. $\bone_i + \bzero_i \geq  \tilde{b}\1_i  + \tilde{b}\0_i\Rightarrow u_i(\bone_i) - \bzero_i \leq u_i(\tilde{b}\1_i) - \tilde{b}\0_i$.

We now show that for any agent, her utility at the two-part bid $b^\ast_{i, \txtCPM}$ is exactly zero. If $u_i(\maxZ) \geq 0$, the agent gets expected utility $u_i(\maxZ) - u_i(\maxZ) = 0$ if she is charged $b^\ast_{i, \txtCPM} = (\maxZ,u_i(\maxZ))$. If $u_i(\maxZ) < 0$, the continuity, convexity and monotonicity of $u_i(z)$ (part (ii) of Lemma~\ref{lem:exp_u}) implies that there is a unique zero crossing $\zc_i < \maxZ$ of $u_i(z)$ s.t. $u_i(\zc_i) = 0$. If she is charged  $b^\ast_{i, \txtCPM} = (\zc_i, 0)$, her expected utility is then $u_i(\zc_i) - 0 = 0$.
This implies that the bids $b^\ast_{i, \txtCPM}$ is an agent's ``highest acceptable payment" in the message space $\reportSet$. The argument for DSE is then standard, observing that the mechanism allocates to the highest bidder and charges a second highest bid.
\end{proof}

\if 0
\hma{I don't really like my proof since it's the agent-independence agent-maximization argument. However I don't find the following easy to understand either. Probably my brain had just shut down.}

%
%

\rmr{

\begin{proof} 
To show that bidding $b^\ast_{i, \txtCPZ}=(\bzero_i,\bone_i)$ is a dominant strategy, we only need to prove that if the payment equals $b^\ast_{i, \txtCPZ}$, then the utility of agent~$i$ is exactly $0$. This is since a negative utility means that in a profile with bidder identical to $i$, it is better to bid $b'_i=(0,0)$; and a utility of at least $\eps>0$ means that in a profile with another identical bidder, it is better to increase (some part of) the bid by $\eps/2$, and become a unique winner. 

When $u_i(\maxZ) \geq 0$ as in Figure~\ref{fig:utilieis_posEv}, agent $i$ is willing to accept a penalty $\maxZ$, and is willing to pay an additional $u_i(\maxZ)$ as the time 0 payment. 
Thus by paying $\tzero_i=\bzero_i=u_i(\maxZ), \tone_i=\maxZ$, her expected utility is exactly $u_i(\tone_i)-\tzero_i = u_i(\maxZ)-u_i(\maxZ)=0$. 

When $u_i(\maxZ) < 0$ as in Figure~\ref{fig:utilities_negEv}, part (ii) of Lemma~\ref{lem:exp_u} implies that there is a unique zero crossing $\zc_i < \maxZ$ of $u_i(z)$ s.t. $u_i(\zc_i) = 0$, i.e. the highest acceptable penalty with no base payment.  By paying $\tzero_i=\bzero_i=0, \tone_i=\zc_i$, her expected utility is $u_i(\tone_i)-\tzero_i = u_i(\zc_i)=0$.
\end{proof}
}

\fi


Intuitively, under the CP$(\maxZ)$ mechanism, it is a dominant strategy for each agent to bid the additional amount she is willing to pay at period~$0$, given a period~$1$ penalty $\maxZ$, otherwise, the dominant strategy is to bid her highest acceptable penalty when there is no period~$0$ payment.
When $\maxZ = 0$, the CP($\maxZ$) mechanism reduces to SP, where it is a dominant strategy to bid $b_{i,\txtSP}^\ast = u_i(0) = \mathbb{E}[V_i^+]$. 
When $\maxZ \rightarrow +\infty$, and with the additional assumption,

\begin{enumerate}[(A3)]
	\setlength\itemsep{0em}
	\item $\E{V_i} < 0$, meaning that being forced to always use the resource is not favorable, 
\end{enumerate}
%
then CP($\maxZ$) reduces to the CSP mechanism, where it is a dominant strategy to bid the largest acceptable penalty $\zc_i$, the unique zero crossing of $u_i(z)$ (see Figure~\ref{fig:utilities}). $\zc_i$ exists and is unique given (A3), since $u_i(z)$ is continuous, monotonically decreasing in $z$, and converges to $\E{V_i} < 0$.\footnote{(A3) should not be confused with (A1), which means that the value of the \emph{option} to use the resource is $\E{V_i^+} > 0$. (A3) only requires that an agent gets negative expected utility from committing to \emph{always} use the resource, regardless of what happens. This a very natural assumption: without (A3), an agent would accept any unboundedly large penalty for the right to use a resource.}

\subsection{Better Welfare and Utilization than Second Price Auction}

The following lemma states useful properties of utilization and welfare as functions of penalty $z$.

\begin{restatable} 
{lemma}{lemmaExpUtilityUtilization} \label{lem:util_welfare}
Assuming (A1) and (A2), when agent $i$ is allocated and charged a two part payment $(z,y)$, the utilization and social welfare are independent of the base payment $y$, and satisfy:
\begin{enumerate}[(i)]
	\setlength\itemsep{0em}
	\item the utilization $\ut_i(z)= \Pm{V_i \geq -z}$ is right continuous and monotonically non-decreasing in $z$. Moreover, $\ut_i(z) =  1 + u_i'(z+)$, where $u_i'(z+)$ is right derivative of $u_i$ at $z$. 
	\item the social welfare $\sw_i(z)= \E{V_i \one{V_i \geq -z}} + \socialV \Pm{V_i \geq -z} $ is right continuous, monotonically non-decreasing in $z$ when $z \leq \socialV$, and monotonically non-increasing in $z$ when $z > \socialV$.
\end{enumerate}
\end{restatable}

See Appendix~\ref{appx:proof_lem_util_welfare} for the proof. The continuity and monotonicity of $\ut_i(z)$ is obvious. From Fubini's theorem, we get $u_i(z) = \mathbb{E}[V_i^+] - \int_0^{z} F_i(-v) dv$ when $z \geq 0$ and $u_i(z) = -z + \int_{-z}^\infty \Pm{V_i \geq v}dv$ when $z < 0$. By the fundamental theorem of calculus, the right derivative of $u_i(z)$ is equal to the right limit of $-F_i(-v)$ at $z$, which is $\Pm{V_i \geq -z} - 1$. For part (ii), observe that $\sw_i(z) = \sw_i(z) = \E{(V_i+\socialV) \one{V_i \geq -z}}$, and that the random variable $V_i + \socialV$ is non-negative iff $V_i \geq - \socialV$. 

Intuitively, the agent uses the resource with higher probability when the penalty $z$ increases. This, in turn, results in a smaller probability of paying the penalty, thus $u_i(z)$ decreases \emph{slower} as $z$ increases, corresponding to a shallower slope of the convex function $u_i(z)$. The welfare-optimal utilization decision in period~1 is to use the resource iff the realized value $v_i \geq -\socialV$, therefore $z = \socialV$ optimizes $\sw_i(z)$.
%
%
With Lemma~\ref{lem:util_welfare}, we prove the following result:

\begin{restatable}{lemma}{lemCrossingUtils} \label{lem:crossing_utilities}
Let $u_1(z)$ and $u_2(z)$ be the expected utilities of two agents whose types satisfy (A1) and (A2), and consider $z_1,z_2\in \setR$ s.t. $z_1 < z_2$. If $u_1(z_1) \geq u_2(z_1)$, and $u_1(z_2) \leq u_2(z_2)$, we have:
\begin{enumerate}[(i)]
	\setlength\itemsep{0em}
	\item $\ut_1(z_1) \leq \ut_2(z_2)$.
	\item $\sw_1(z_1) \leq \sw_2(z_2)$ if $z_1 \leq z_2 \leq \socialV$, 
	and $\sw_1(z_1) \geq \sw_2(z_2)$ if $\socialV \leq z_1 \leq z_2$.
\end{enumerate}
\end{restatable}

When $u_2(z)$ crosses $u_1(z)$ from below, $u_1(z_2) - u_1(z_1) \leq u_2(z_2) - u_2(z_1)$. 
The convexity of $u_i (z)$ then implies that the right derivative of $u_2(z)$ at $z_2$ must be  higher than the right derivative of $u_1(z)$ at $z_1$, hence the inequality on utilization. See Appendix~\ref{appx:proof_lem_crossing_utilities} for details, and the proof for part (ii).   
The CP mechanism with the maximum penalty set to $\socialV$ will have some very nice optimality properties. As a preliminary observation, we state the following result relative to the SP auction.

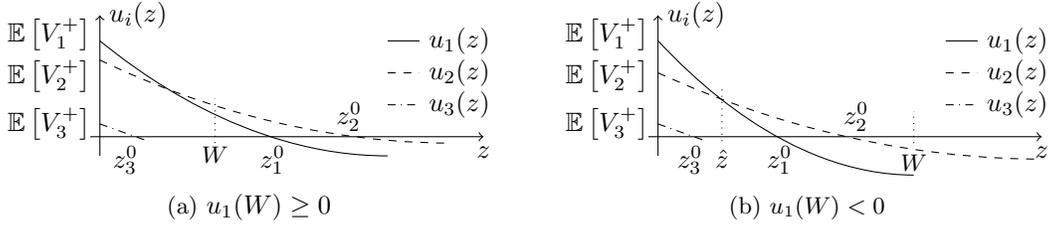
\begin{figure}
\centering 
\subfloat[\small{$u_1(\socialV) \geq 0$}]{\label{fig:u2_pos}
\begin{tikzpicture}[scale = 0.85][font=\small]

\draw[->] (-0.15,0) -- (6,0) node[anchor=north] {$z$};

\draw[->] (0,-0.3) -- (0, 1.9) node[anchor=west] {$u_i(z)$};

\draw[-] (0, 1.5) parabola[bend at end] (4.5,-0.3);

\draw[dashed] (0,1.2) parabola[bend at end] (5.5, -0.1);

\draw[dashdotted] (0, 0.2) -- (0.7, -0.05);

\draw	
		(2.7,0) node[anchor=north] {$\zc_1$}
		(3.9,-0.05) node[anchor = south] {$\zc_2$}
		(0.4, 0) node[anchor=north] {$\zc_3$};
		
\draw	(0, 1.6) node[anchor=east] {\small{$\E{V_1^+}$}}
		(0, 0.95) node[anchor=east] {\small{$\E{V_2^+}$}}
		(0, 0.2) node[anchor=east] {\small{$\E{V_3^+}$}};

\if 0
\draw[dotted] (0.9,-0.07) -- (0.9, 0.8);
\draw (0.9, -0.06) node[anchor=north] {$\hat{z}$};	
\fi

\draw[dotted] 	(1.8,-0.1) -- (1.8, 0.6);
\draw (1.8, -0) node[anchor=north] {\footnotesize{$\socialV$}};

\draw (4.5, 1.5) -- (5, 1.5) node[anchor=west] {$u_1(z)$};
\draw[dashed] (4.5, 1) -- (5, 1) node[anchor=west] {$u_2(z)$};
\draw[dashdotted] (4.5, 0.5) -- (5, 0.5) node[anchor=west] {$u_3(z)$};
\end{tikzpicture}
}
\hspace{1em}
\subfloat[$u_1(\socialV) < 0$]{\label{fig:u2_neg}

\begin{tikzpicture}[scale = 0.85][font=\small]

\draw[->] (-0.15,0) -- (6,0) node[anchor=north] {$z$};

\draw[->] (0,-0.3) -- (0, 1.9) node[anchor=west] {$u_i(z)$};

\draw[-] (0, 1.5) parabola[bend at end] (4,-0.6);

\draw[dashed] (0,1) parabola[bend at end] (6, -0.35);

\draw[dashdotted] (0, 0.2) -- (0.7, -0.05);

\draw	
		(1.9,0) node[anchor=north] {$\zc_1$}
		(3.1,-0.05) node[anchor=south] {$\zc_2$}
		(0.5,0) node[anchor=north] {$\zc_3$};
		
\draw	(0, 1.6) node[anchor=east] {\small{$\E{V_1^+}$}}
		(0, 0.95) node[anchor=east] {\small{$\E{V_2^+}$}}
		(0, 0.2) node[anchor=east] {\small{$\E{V_3^+}$}};
		
\draw[dotted] (1,-0.07) -- (1, 0.8);
\draw (1, -0.06) node[anchor=north] {$\hat{z}$};	
		
\draw (4.5, 1.5) -- (5, 1.5) node[anchor=west] {$u_1(z)$};
\draw[dashed] (4.5, 1) -- (5, 1) node[anchor=west] {$u_2(z)$};
\draw[dashdotted] (4.5, 0.5) -- (5, 0.5) node[anchor=west] {$u_3(z)$};

\draw[dotted] 	(4,-0.1) -- (4, 0.4);
\draw (4, -0.13) node[anchor=north] {\footnotesize{$\socialV$}};

\end{tikzpicture}
}
\caption{Economies where the SP winner is different from the CP($\socialV$) winner.
 \label{fig:switched_bids}} \vsq{-0.5em}
\vsq{-0.5em}
\end{figure}

\begin{restatable}{theorem}{thmCSPbeatSP} \label{thm:CSP_beat_SP}
For any set of agent types satisfying (A1)-(A2), under the dominant strategy equilibria, the CP$(\socialV)$ mechanism mechanism Pareto-dominates the SP auction in utilization and welfare.
%
\end{restatable}
\begin{proof} Consider the following two cases:

\noindent{}\emph{Case 1.} SP and CP($\socialV$) allocate the resource to the same agent. Assume that CP($\socialV$) charges penalty $z^\ast$, we know $z^\ast \in [0, \socialV]$. The utilization and welfare under CP($\socialV$) are always (weakly) higher than those under SP, given the monotonicity properties proved in Lemma~\ref{lem:util_welfare}.

\noindent{}\emph{Case 2.} SP and CP($\socialV$) allocate the resource to agent 1 and 2 respectively.  We know that for agent $2$ to be allocated under CP($\socialV$), either $u_1(\socialV) \geq 0$, in which case $\socialV = \min \{\zc_1, \socialV\}$ and $u_2(\socialV) \geq u_1(\socialV)$ (Figure~\ref{fig:u2_pos}), or $u_1(\socialV) < 0$, in which case $\zc_1 = \min\{ \zc_1, \socialV \}$ and $u_2(\zc_1) \geq u_1(\zc_1)$ (Figure~\ref{fig:u2_neg}). In both cases, $u_2(\min\{\zc_1,\socialV\}) \geq u_1(\min\{\zc_1,\socialV\})$ holds, and we have $z^\ast \in[ \min\{\zc_1,\socialV\}, \socialV]$, where $z^\ast$ is the penalty that agent $2$ is charged under CP($\socialV$). Given that agent $1$ is the SP winner we also know that  $u_1(0) \geq u_2(0)$. Applying Lemma~\ref{lem:crossing_utilities}, we know $ut_2(z^\ast) \geq ut_2(\min\{\zc_1,\socialV\}) \geq \ut_1(0)$ and $\sw_2(z^\ast) \geq \sw_2(\min\{\zc_1,\socialV\}) \geq \sw_1(0)$, i.e. CP$(\socialV)$ achieves better welfare and utilization. 
\end{proof}

\if 0
\begin{proof}[Proof sketch] Consider the following two cases:

\textit{Case 2:} SP and CP($\socialV$) allocate the resource to agent 1 and 2 respectively. 
Given the DSE in SP and CP($\socialV$) and ignoring ties, we must have $ u_1(0) = \mathbb{E}[V_1^+] > u_2(0) = \mathbb{E}[V_2^+]$. 
Moreover, at least one of $u_1(\socialV) < u_2(\socialV)$ (if $u_2(\socialV) \geq 0$, see Figure~\ref{fig:u2_pos}) and $\zc_1 < \zc_2 \leq \socialV$ (if $u_2(\socialV) < 0$, see Figure~\ref{fig:u2_neg}) holds. 
In both cases, there exists a largest crossing point $\xpt$ of $u_1$ and $u_2$ that is weakly below $\min\{\socialV, \zc_2 \}$ where $u_1(\xpt) = u_2(\xpt)$, as shown in Figure~\ref{fig:switched_bids}. We can show that $u_1'(\xpt+) < u_2'(\xpt+)$, and that the penalty $z^\ast$ agent $2$ is charged satisfies $z^\ast \in [\xpt, \socialV]$. With Lemma~\ref{lem:util_welfare}, we can then prove $\ut_2(z^\ast) \geq \ut_2(\xpt) >  \ut_1(\xpt) \geq \ut_1(0)$, and $\sw_2(z^\ast) \geq \sw_2(\xpt) > \sw_1(\xpt) \geq \sw_1(0)$. See Appendix~\ref{appx:proof_thm_CSP_beat_SP} for a full proof, and the scenario with ties under the CP($\socialV$) mechanism.
\end{proof}
\fi

%
This domination result holds for arbitrary tie-breaking rules for the two mechanisms if $\socialV > 0$. 
The same analysis on the CSP mechanism shows that it always achieves a higher utilization than the second price auction.
We illustrate through the following examples the improvement in welfare and utilization from CP($\socialV$) over SP, and show that SP can be arbitrarily worse than CP($\socialV$).

\begin{example} [Double gain in CP($\socialV$)] 
\label{ex:3_val} 
Consider $\socialV = 50$, and two agents with value distributions and expected utility functions as shown in Figure~\ref{fig:example_utilities}. Compared with agent~2, agent~1 has higher value for the resource, but lower probability of willing to use the resource and higher probability for a hard constraint.
Under SP, the DSE bids are $b_{1, \txtSP}^\ast = 20$, $b_{2, \txtSP}^\ast  = 16$ thus agent~1 is allocated. The utilization is $\ut_1(0) = \Pm{V_1 \geq 0} = 0.2$ and the welfare is $\sw_1(0) = 100 * 0.2 + 50 * 0.2 = 30$.

Whereas under CP($\socialV$) mechanism, $b_{1, \txtCPM}^\ast = (\zc_1, 0) = (30, 0)$ and $b_{2, \txtCPM}^\ast = (\socialV, u_2(\socialV)) = (50, 2)$. 
Agent~2 is allocated and charged penalty $\tone_2(b) = \zc_1 = 30$, thus the utilization is $\Pm{V_2 \geq - 30} = 0.8$, and the welfare is $\sw_2(30) = 40*0.4 - 10*0.4 + 50*0.8 = 52$. Note that the these are higher than $\ut_2(0) =  \Pm{V_2 \geq 0} = 0.4$ and $\sw_2(0) = 36$, what is achieved if agent~2 is allocated the resource under SP in some other economy and charged no penalty. 
\qed
\begin{figure}[t!]
\centering{
\subfloat{
\small{
\begin{tikzpicture}[scale = 1][font=\small]
\draw (0,1.4) node {$V_1 = \pwfun{100, & \txtwp 0.2\\ -20, & \txtwp 0.4 \\ -\infty, &\txtwp 0.4}$};
	\draw (0,0) node  {$V_2 = \pwfun{40, & \txtwp 0.4\\ -10, & \txtwp 0.4 \\ -\infty, &\txtwp 0.2}$};
	\draw (0,-0.6) node{{\color{white} some text}};	
\end{tikzpicture}	
}}
\hspace{2em}
\subfloat{
\begin{tikzpicture}[scale = 1][font=\small]

\draw[->] (-0.2,0) -- (7.5,0) node[anchor=north] {$z$};

\draw[->] (0,-0.3) -- (0,2.5) node[anchor=west] {$u_i(z)$};

\draw[-] (0, 2) -- (2, 0.4) -- (4, -0.4);
\draw[dashed] (0, 1.6) -- (1, 1) -- (7, -0.2);

\draw	
		(2.9, -0.05) node[anchor=north] {$\zc_1 = 30$}
		(6.3, -0) node[anchor=north] {$\zc_2 = 60$}
		(5, -0.1) node[anchor = north] {{\footnotesize $\socialV$}=50}
		(1, 0) node[anchor = north] {10}
		(1.9, 0) node[anchor = north] {20}
		(0, 1) node[anchor = east] {10}
		(0, 0.4) node[anchor = east] {4};
		
\draw	(0, 2.0) node[anchor=east] {$20$}
		(0, 1.6) node[anchor=east] {$16$};
		
\draw (5.5, 2) -- (6,2) node[anchor=west] {$u_1(z)$};
\draw[dashed] (5.5, 1.5) -- (6, 1.5) node[anchor=west] {$u_2(z)$};

\draw[dotted] 	(2,0) -- (2, 0.4) -- (0, 0.4);
\draw[dotted]  (0, 1) -- (1,1) -- (1, 0);
\draw[dotted] 	(5, -0.05) -- (5, 0.5);

\end{tikzpicture}}
}
\vsq{-0.5em}
\caption{Agents' value distributions and expected utilities in Example \ref{ex:3_val}. 
 \label{fig:example_utilities}} \vsq{-0.5em}
\end{figure}
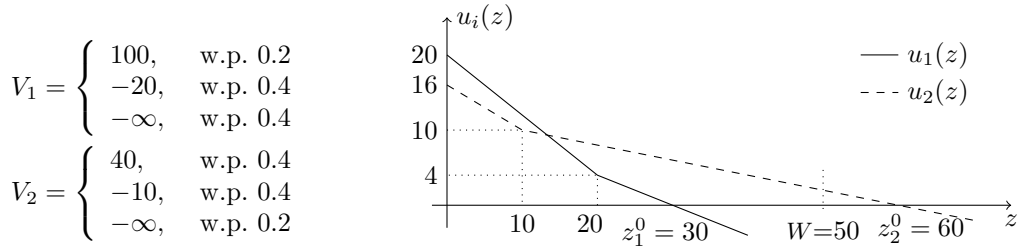
\end{example}

\begin{example}[SP arbitrarily worse] \label{ex:SP_worse} 
Under the $(\fixedV_i, \fixedP_i)$ model introduced in Example~\ref{ex:vipi}, the expected utility for agent $i$ given penalty $z$ is $u_i(z) = \fixedV_i \fixedP_i - (1- \fixedP_i)z$. Consider an economy with two $(\fixedV_i,\fixedP_i)$ agents: $\fixedP_1=\eps$, $\fixedV_{1} = 1/\eps$, and $\fixedP_2 = 1-\eps$, $\fixedV_{2} = 1$ for some very small $\eps > 0$. Agent 1 is allocated under SP since $b_{1, \txtSP}^\ast = 1 > b_{2,\txtSP}^\ast = 1-\eps$, whereas agent 2 is allocated under CP$(\socialV)$ as long as $\socialV > 1$, since $ \zc_1 = 1/(1-\eps) \approx 1$ and $\zc_2 = (1-\eps)/\eps \gg 1$. The utilization under SP and CP$(\socialV)$ are $\eps$ and $1-\eps$, respectively, and the welfare under the two mechanisms are $1 + \eps \socialV$ and $(1-\eps) (1 + \socialV)$. Thus, CP$(\socialV)$ can have arbitrarily better utilization and welfare by selecting a better winner. \qed
\end{example}

The higher welfare and utilization achieved by the CP$(\socialV)$ come from two aspects of its design. First, charging a penalty $z \in [0, \socialV]$
changes the period 1 decision of the allocated agent, promoting the
resource to be used more efficiently.
Second, the CP($\socialV$) mechanism selects a better winner:
\begin{enumerate}[$\bullet$]
	\item For $i$ s.t. $u_i(\socialV) \geq 0$, the two-part bids under CP($\socialV$) add up to $u_i(\socialV) + \socialV = \E{V_i\one{V_i \geq \socialV}}$ $ - \socialV  \Pm{V_i < -\socialV} + \socialV = \E{(V_i + \socialV) \one{V_i \geq -\socialV}} = \sw_i(\socialV)$, the highest achievable welfare from allocating the resource to agent $i$ and setting an optimal penalty $z = \socialV$. As a result, if $\max_{i \in N} \{\bzero_i + \bone_i \} \geq \socialV$, CP($\socialV$) selects the agent with highest achievable welfare. 
	\item When $\socialV$ is large and $u_i(\socialV) < 0$, agents with higher probabilities of showing up have $u_i(z)$ that decrease more slowly with $z$, thus have relatively higher zero crossing $\zc_i$, and are more likely to be allocated. With large $\socialV$, higher utilization is more likely to generate higher social welfare.
\end{enumerate}

\if 0
One might wonder about a comparison between slightly more general mechanisms. Consider, for example, introducing a reserve price $R>0$ into the CSP mechanism, where the resource is only allocated when $b_{\winner} \geq R$ for $\winner \in \arg \max_{i\in N} b_i$, and the allocated agent is charged $t_{\winner}^{(0)} = \max(\max_{i \neq \winner} b_i, R)$ as penalty. We call this the \emph{CSP+R mechanism}.
Under (A1)-(A3), it remains a dominant strategy for agents to bid the zero-crossings $b_{i,\textrm{CSP+R}}^\ast = \zc_i$ under CSP+R.
Similarly, the SP auction can be modified to charge an additional fixed penalty $\maxZ>0$ (termed the \emph{SP+C mechanism}), which introduces a penalty to the assigned agent
in the event that the resource goes wasted. We can show that when the reserve penalty $R$ is set to be equal to the fixed penalty $\maxZ$, utilization under CSP+R dominates that of the SP+C mechanism profile-by-profile (see Theorem~\ref{thm:CSPC_Utilization} in Appendix~\ref{appx:proof_thm_CSPC_Utilization}). 
We discuss the effect of reserve prices in more detail in Sections~\ref{sec:optimality} and~\ref{sec:simulations}. 
\fi

\section{Characterization and Optimality of CP} \label{sec:optimality}

In this section, we study the optimal mechanism design problem with the following properties:\footnote{For (P5) deterministic, we require that the outcome is deterministic unless multiple agents make the same reports, and that when breaking ties, the two-part payment each agent may be charged if allocated is still deterministic. We also assume that the mechanism uses minimum tie-breaking, and satisfies the \emph{positive responsiveness} requirement, i.e. if a tied agent was to make a ``strictly higher" report in an otherwise equivalent economy, then she has to be allocated with probability one in this other economy. See Appendix~\ref{appx:proof_P1P5}.}
%
\vspace{-0.5em}
\begin{multicols}{2}
\begin{enumerate}[{P}1.]

	\setlength\itemsep{0em}
	\item Dominant-strategy equilibrium
	\item Individually rational
	\item No deficit 
	\item Anonymous
	\item Deterministic (unless breaking ties)
	\item No payment to unassigned agents
\end{enumerate}
\end{multicols}
\vspace{-0.5em}

\if 0
\hma{
We prove that when the type space is the set of all value
distributions satisfying (A1)-(A3), the CSP mechanism is unique
amongst mechanisms with properties (P1)-(P7).
For this, we make use of a key lemma (Lemma~\ref{lem:lem_P1P5_characterization}), which
characterizes the set of all possible outcomes
 under mechanisms that satisfy (P1)-(P5).
 Combined with (P6) and
(P7), we can show that CSP is unique.
We also prove that the utilization of the CSP mechanism is not dominated by any
mechanism in the broader class of mechanisms
that satisfy properties (P1)-(P5),  and that 
the CSP mechanism is optimal 
for agents with the
simple $(\fixedV_i, \fixedP_i)$ types from Example \ref{ex:vipi}. 
Moreover, we prove that the CSP mechanism maximizes
utilization among a broad subclass of mechanisms satisfying (P1)-(P6),
and that relaxing (P6) by adding reserve prices rarely improves
utilization.
}

\fi

\begin{figure}
\centering   
\subfloat[Iso-profit curves of an agent]{\label{fig:indiffCurves}

\begin{tikzpicture}[scale = 1.05][font = \small]
\draw[->] (-0.1,0) -- (4.4,0) node[anchor=north] {$z$};

\draw[->] (0,-0.2) -- (0,2.2) node[anchor=west] {$y$};


\draw[-, name path = uA] (-0.1, 1.2) parabola[bend at end] (4,-0.2);

\draw[dashdotted] (-0.1, 2) parabola[bend at end] (4, 0.5);

\draw[dashed] (-0.1, 0.4)  to [out = -40, in = 160] (0.7, -0.05);

\draw	
		(2.3, 0) node[anchor=north] {{ $b_{i,\txtCSP}^\ast = \zc$}}
		(0, 1.2) node[anchor=east] {$b_{i,\txtSP}^\ast$};

\draw[dotted] (-0.1, 0.55) -- (1.45, 0.55);
\draw (0, 0.6) node[anchor = east]{$\bzero_{i, \txtCPM}$};

\draw[dotted] (1.0,-0.1) -- (1.0, 0.9);
\draw (1.0, 0) node[anchor=north]{{\scriptsize $\maxZ$}};

\if 0
\draw[densely dotted, name path = gcsp](-0.1, -0.1)--(1.2, 1.7);
\path [name intersections={of=uA and gcsp, by=intgcap}];
\draw	(0.6, 1.9) node[anchor = west] {$y = \frac{\gamma}{1-\gamma}z$};
\filldraw [black] (intgcap) circle (1pt);
\draw (intgcap) node[anchor = south]{$A$};
\fi

\draw[dashdotted] (2.3, 2) -- (2.7, 2) node[anchor=west] {{\scriptsize $u_i(z) -y = -10$}};
\draw (2.3, 1.55) -- (2.7, 1.55) node[anchor=west] { {\scriptsize $u_i(z) -y = 0$}};
\draw [dashed](2.3, 1.1) -- (2.7, 1.1) node[anchor=west] {{\scriptsize $u_i(z) -y = 10$}};

\end{tikzpicture}
}
\hspace{1em}
\subfloat[Examples of IR and ND ranges]{\label{fig:irbb}
\begin{tikzpicture}[scale = 1.15][font = \small]

\fill [pattern =north west lines, pattern color = black!40, opacity=0.5](-3, 0.68)--(-0.8, -0.2)--(-0.8, -0.6)--(-3, -0.6)--(-3, 0);	

\fill [pattern=crosshatch dots, pattern color = black!60, opacity=0.5] (-3, 0.08)--(-1.3, -0.6)--(-0.8, -0.6)--(-0.8, 0.8)--(-3, 0.8) -- (-3, 0.08);

\draw[->] 	(-3, 0) -- (-0.6,0) node[anchor=north] {$z$};
\draw[->] 	(-2.8,-0.8) -- (-2.8,1.1) node[anchor=west] {$y$};


\draw[-] (-3, 0.68) -- (-0.8, -0.2);

\draw[dashed] 	(-3, 0.08) -- (-1.3, -0.6);

\draw	(-1.2, 0) node[anchor=south] {$\zc_i$};

\draw	(-1.8, -1.4) node[anchor=south] {(i) $(\fixedV_i,\fixedP_i)$};


\fill [pattern =north west lines, pattern color = black!40, opacity=0.5](-0.1, 0.7) parabola[bend at end] (2, -0.4)--(2, -0.6)--(-0.1, -0.6)--(-0.1, 0);	

\fill [pattern=crosshatch dots, pattern color = black!60, opacity=0.5](-0.1, 0.1) parabola[bend at end] (0.6, -0.4)  to [out = 0, in=-170] (2, -0.1)--(2, 0.8)--(-0.1, 0.8) -- (-0.1, 0.1);

\draw[->] 	(-0.1,0) -- (2.2,0) node[anchor=north] {$z$};
\draw[->] 	(0,-0.8) -- (0,1.1) node[anchor=west] {$y$};

\draw[-] 	(-0.1, 0.7) parabola[bend at end] (2, -0.4);

\draw[dashed] 	(-0.1, 0.1) parabola[bend at end] (0.6, -0.4);
\draw[dashed] (0.6, -0.4)  to [out = 0, in=-170] (2, -0.1);
		
\draw	(0.85,0) node[anchor=south] {$\zc_i$};


\filldraw [black] (1.32, -0.28) circle (1pt);
\draw (1.32, -0.28) node[anchor = north]{{\scriptsize $B$}};

\draw	(1, -1.4) node[anchor=south] {(ii) Exponential};


\draw (2.5, 0.6) -- (2.8, 0.6) node[anchor=west] { {\scriptsize $u_i(z) -y = 0$}};
\draw [dashed](2.5, 0.2) -- (2.8, 0.2) node[anchor=west] { {\scriptsize $rev_i(z,y) = 0$}};
	
\fill [pattern =north west lines, pattern color = black!40, opacity=0.5](2.5, -0.1) -- (2.8, -0.1) -- (2.8, -0.25) -- (2.5, -0.25);
\draw	(2.8,-0.2) node[anchor=west] { {\scriptsize IR range}};
	
\fill [pattern=crosshatch dots, pattern color = black!60, opacity=0.5](2.5, -0.5) -- (2.8, -0.5) -- (2.8, -0.65) -- (2.5, -0.65);
\draw	(2.8, -0.6) node[anchor=west] { {\scriptsize ND range}};	
   
\end{tikzpicture}
}
\caption{Iso-profit curves, IR and ND ranges in the two-dimensional payment space. \label{fig:iso_irbb}} \vsq{-0.5em}
\end{figure}
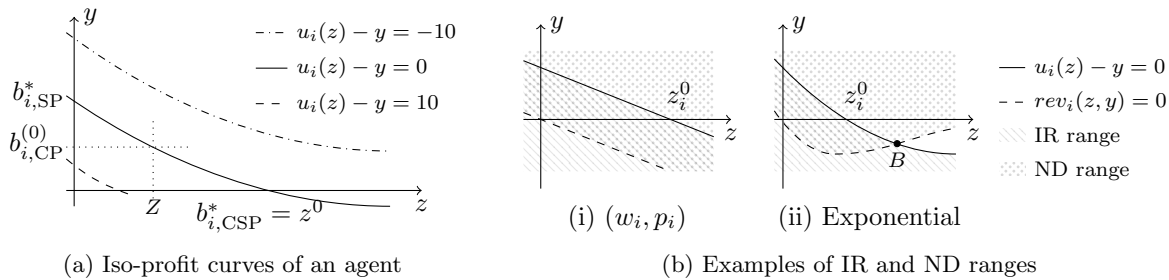

Recall that while facing a two part payment $(z,y)$, an agent's expected utility is $u_i(z) - y$. We work with \emph{iso-profit curves} in the two dimensional payment space, which are sets of $(z,y)$ pairs for which $u_i(z) - y = \alpha $ for some constant $\alpha$, i.e. an agent will be indifferent to all payments $(z,y)$ that reside on the same iso-profit curve. See Figure~\ref{fig:indiffCurves}. 
The \emph{zero-profit curve} (i.e. where $\alpha = 0$, the solid line depicted in Figure~\ref{fig:indiffCurves}) is characterized by $y = u_i(z)$, thus is continuous, convex, monotonically decreasing (Lemma~\ref{lem:exp_u}). Other iso-profit curves are vertical shifts of the zero-profit curve, and recall from Lemma~\ref{lem:util_welfare} that the utilization for an agent facing payments $(z, y)$ relates to the slope of the zero-profit curve: $\ut_i(z) = u_i'(z+)$.

An agent gets negative expected utility if she is allocated and charged a payment $(z,y)$ above her zero-profit curve. We call the area in the $(z,y)$ space weakly below an agent's zero-profit curve the \emph{IR-range} for the agent. For any $(z,y)$, the expected revenue (e.g. payment from the agent to the mechanism) is $rev_i(z, y) = y + z \cdot \Pm{V_i < -z}$.
We call the set of $(z, y)$ for which $rev_i(z, y) \geq 0$ the \emph{ND-range} (no-deficit range) for this agent.

\begin{example} \label{ex:irbb}
Consider the $(\fixedV_i, \fixedP_i)$ model (Example~\ref{ex:vipi}). The zero-profit curve is characterized by 
$y = \fixedV_i \fixedP_i - (1-\fixedP_i)z$. The expected revenue of the mechanism is $rev_i(z,y) = y + (1-\fixedP_i)z$, thus the ND range is lower-bounded by $y = -(1-\fixedP_i)z$. See Figure~\ref{fig:irbb}(i). For an agent with an exponential value distribution as in Example~\ref{ex:exp_model}, the zero-profit curve, IR and ND ranges are as shown in Figure~\ref{fig:irbb}(ii) (see derivations in Appendix~\ref{appx:bids}).
\end{example}

%
Under any two-period mechanism that is IR and ND, the payment facing the assigned agent must reside in the intersection of the IR and ND ranges of the assigned agent. 
Given the monotonicity properties in Lemma~\ref{lem:util_welfare}, the highest possible utilization for an agent subject to IR and ND constraints, i.e. the \emph{first-best utilization}, is achieved by charging a two-part payment with the highest penalty $z$ within the intersection of IR and ND ranges. The first \emph{first best social welfare} is achieved by charging the agent a penalty $z = \socialV$,  or by charging the agent the highest penalty within the intersection of IR and ND ranges (if penalty $ z = \socialV$ is  outside this intersection). See Proposition~\ref{prop:fb_sw} in Appendix~\ref{appx:proof_fb_sw}.

To illustrate this for the exponential type, the first-best utilization is achieved by charging $(z_B, y_B)$ at point $B$ in Figure~\ref{fig:irbb}(ii). This also achieves the first best welfare if $\socialV \geq z_B$, otherwise, the first best welfare is achieved by charging $(\socialV, y)$ for some $y \in [- (1-\ut_i(\socialV))\socialV, u_i(\socialV)]$. 
For the $(\fixedV_i, \fixedP_i)$ type, although there is no upper bound on the highest IR and ND penalty, as long as $z \geq -\fixedV_i$, we have $\ut_i(z) = \fixedP_i$ and $\sw_i(z) = \fixedV_ip_i + \socialV \fixedP_i$, which are not affected by the penalty $z$.

\if 0

\paragraph{DSE Bids}
The mechanisms that we have discussed so far restrict the payments to an ordered, one-dimensional subspace of the $(z,y)$ payment space: CSP considers $\{(z,y)|y=0\}$ where all agents prefer a smaller $z$, and SP restricts $\{(z,y)|z = 0\}$ where all agents prefer a smaller $y$. 
The crossing point of the an agent's zero-profit curve and the one-dimensional payment space determines the agent's maximum willingness to pay under each mechanism, thus translates into the agent's DSE bids in these second price mechanisms. 
Under CSP, the DSE bid $b_{i,\txtCSP}^\ast = \zc_i$ corresponds to the crossing point of $y=0$ and the agent's zero-profit curve, and under SP, the DSE bids $b_{i,\txtSP}^\ast = \E{V_i^+}$ correspond to the crossing points of the zero-profit curve with $z=0$. See Figure~\ref{fig:indiffCurves}.

To see another example, consider a generalization of CSP
that  collects the second highest bid as the no-show penalty 
and also charges a $\gamma$-fraction of the second highest bid as the base payment.
\begin{definition}[The $\gamma$-CSP Mechanism] The \emph{$\gamma$-CSP mechanism}, parametrized by $\gamma \in [0, 1]$, collects a single bid from each agent.

\begin{enumerate}[$\bullet$]
	\setlength\itemsep{0em}
	\item Allocation rule: $x_{\winner}(b) = 1$ for $ \winner \in \arg \max_{i\in N} b_i$, breaking ties at random.
	\item Payment rule: $t_{\winner}^{(0)} = \arg \max_{i\neq \winner} b_i$, $t_{\winner}^{(1)}(b) = \gamma \cdot t_{\winner}^{(0)}$ and $t_i(b) = 0$ for all $i \neq \winner$
\end{enumerate}
\end{definition}

Payments reside on $\{ (z,y) | y = \gamma/(1-\gamma)z\}$, 
and the SP ($\gamma = 1$) and CSP ($\gamma=0$) are special cases.   We prove that it is a dominant strategy to bid the sum of the two coordinates of the crossing point between the zero-profit curve and $y= \frac{\gamma}{1-\gamma}z$ (i.e. $b_i^\ast = z_A + y_A$ as in Figure~\ref{fig:indiffCurves}; see Theorem~\ref{thm:dse_gammaCSP} in Appendix~\ref{appx:proof_thm_dse_gammaCSP}).
We return to this mechanism below.
\fi

\subsection{Optimality of CP} \label{sec:opt_CPM}

Define the {\em frontier} of a set of agents $N$ with type profile $\CDF = (\CDF_1, \dots, \CDF_n)$ to be the upper-envelope of the zero-profit curves of all agents, i.e. for all $z \in \setR$,
$u_N(z) \triangleq \max_{i \in N} u_i(z)$. This characterizes the maximum willingness to pay (as base payment, given penalty $z$) by all agents in $N$. 
As the upper envelope of a finite set of continuous, convex, and monotonically decreasing functions, $u_N(z)$ has the same properties. When (A3) is satisfied by all agents, $u_N(z)$ also has a unique zero-crossing, which we denote as $\zc_N$. 
Define the frontier of the sub-economy without agent $i$ as $u_{N\backslash \{i\}}(z) \triangleq \max_{j\neq i}u_j(z)$, and  the \emph{$m^{\text{th}}$ frontier} of the economy as the $m^{\mathrm{th}}$ upper envelope of $\{u_i(z)\}_{i\in N}$. See Figure~\ref{fig:P1P5_characterization}.

\if 0
We first observe that the frontier $u_N(z)$ corresponds to the expected utility function of some agent, whose type also satisfies (A1)-(A2) (and also (A3) if (A3) is satisfied by $\CDF_i$ for all $i \in N$). Denote this random value as $V_N$ and its CDF as $F_N$, we know that $F_N$ is given by the left derivative of the frontier $F_N(v) = \Pm{V_N \leq v} = -u_N'((-v)-)$ (see Proposition~\ref{prop:frontier_agent} in Appendix~\ref{appx:proof_frontier_agent}). 
Denote $\ut_N(z) = \Pm{V_N \geq -z}$ and $\sw_N(z) = \E{(V_N + \socialV ) \one{V_N \geq -z}}$ as the utilization and welfare achieved by the ``frontier agent".
\fi

\begin{figure}[t!]
\centering   
\begin{tikzpicture}[scale = 1][font=\small]


\draw[-, name path = uA] (-0.2, 3.3) to[out=-65, in = 150] (2.5, -0.2);

\draw[dashed, name path = uB] (-0.2, 2.2) to[out=-35, in = 175] (6,-0.1);

\draw[dashdotted,name path = uC] (-0.2, 1.2) parabola[bend at end] (5,-0.1);

\draw[dotted] (-0.2, 0.5) parabola[bend at end] (1.5,-0.1);

\path [name intersections={of=uA and uB, by=intAB}];
\path [name intersections={of=uA and uC, by=intAC}];

\draw[blue!80, line width= 1mm, opacity=0.5] (-0.2, 3.3)  to[out=-65, in=-240] (intAB);
\draw[blue!80, line width=1mm, opacity=0.5] (intAB) to[out=-32, in = 175] (6,-0.1);

\draw[yellow!80, line width= 1mm, opacity=0.5] (-0.2, 2.2)  to[out=-35, in=-213] (intAB);

\draw[yellow!80, line width=1mm, opacity=0.5] (intAB) to[out=-60, in = -220] (intAC) parabola[bend at end] (5,-0.1);


\fill [pattern = north east lines, pattern color = black!40, opacity=0.7] (-0.2, 2.2)--(-0.2, 3.3)--(intAB)--(-0.2, 2.2);

\fill [pattern = north east lines, pattern color = black!40, opacity=0.7] (intAB) to[out=-32, in = 172] (5.2, 0) --(3.55, 0) --(intAC)--(intAB);

\draw[->] (-0.3,0) -- (6.5,0) node[anchor=north] {$z$};

\draw[->] (0,-0.3) -- (0, 3.5) node[anchor=west] {$y$};

\draw (5.3, 0) node[anchor = north]{{\scriptsize $\zc_N$}};

\filldraw [black] (5.2, 0) circle (1pt);
\draw (5.3, 0) node[anchor = south]{{\scriptsize $A$}};

\if 0
\draw[loosely dotted](3.55, -0.1)--(3.55, 0.6);
\filldraw [black] (3.55, 0) circle (1pt);
\draw (3.55, 0) node[anchor = north]{{\scriptsize $B$}};
\fi

\if 0

\draw[loosely dotted](2.7, -0.1)--(2.7, 1);
\filldraw [black] (2.7, 0.17) circle (1pt);
\draw (2.7, 0) node[anchor = north]{{\scriptsize $\maxZ$}};

\draw[densely dotted, name path = gcsp](-0.1, -0.07)--(2, 1.5);
\path [name intersections={of=uA and gcsp, by=intgcap}];

\filldraw [black] (intgcap) circle (1pt);
\draw (intgcap) node[anchor = east]{{\scriptsize $D$}};
\fi

\if 0
\filldraw [black] (intAB) circle (1pt);
\draw (intAB) node[anchor = west]{ {\scriptsize $H$}};
\draw (1.7, 0.6) node[anchor = west]{ {\scriptsize $G$}};

\filldraw [black] (0, 2.07) circle (1pt);
\draw (0, 1.9) node[anchor = east]{ {\scriptsize $E$}};
\fi

\draw [blue!80, line width=1mm, opacity=0.5] (7, 3.4) -- (7.5, 3.4);
\draw (7.5, 3.4) node[anchor=west] {\footnotesize{The frontier}};

\draw [yellow!80, line width=1mm, opacity=0.5] (7, 2.9) -- (7.5, 2.9);
\draw (7.5, 2.9) node[anchor=west] {\footnotesize{The 2nd frontier}};

\fill [pattern = north east lines, pattern color = black!40, opacity=0.5](7, 2.33) -- (7.5, 2.33) -- (7.5, 2.53) -- (7, 2.53);
\draw	(7.5, 2.4) node[anchor=west] { {\footnotesize Possible payments}};

\draw [-] (7, 1.9) -- (7.5, 1.9) node[anchor=west] {\footnotesize{$u_1(z)-y = 0$}};

\draw [dashed] (7, 1.4) -- (7.5, 1.4) node[anchor=west] {\footnotesize{$u_2(z) - y = 0$}};

\draw [dashdotted] (7, 0.9) -- (7.5, 0.9) node[anchor=west] {\footnotesize{$u_3(z) - y = 0$}};

\draw [dotted] (7, 0.4) -- (7.5, 0.4) node[anchor=west] {\footnotesize{$u_4(z) - y = 0$}};

\if 0
\draw [densely dotted] (7, -0.2) -- (7.5, -0.2) node[anchor=west] {\footnotesize{$y=\frac{\gamma}{1-\gamma}z$}};
\fi

\end{tikzpicture}
\caption{Characterization of possible outcomes for mechanisms under (P1)-(P6).	
\label{fig:P1P5_characterization}} \vsq{-0.5em}
\end{figure}
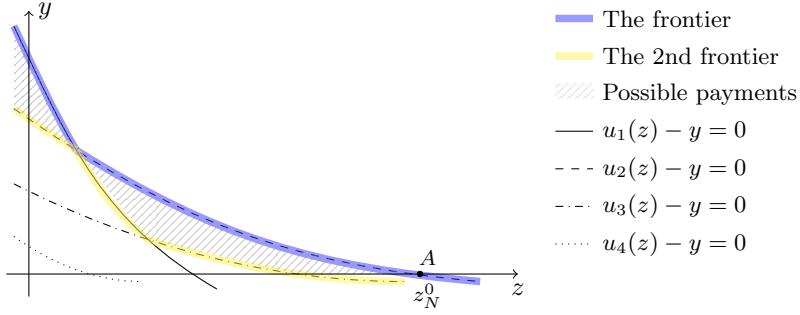

We first characterize the possible outcomes for any two-period mechanism satisfying (P1)-(P6) in the following two lemmas. This characterization is crucial for our main technical results. 


\begin{restatable} 
{lemma}{lemCharacterization}\label{lem:lem_P1P5_characterization} Assume that the type space includes all value distributions satisfying (A1)-(A3), and consider a two-period mechanism that satisfies (P1)-(P6). For any type profile $\CDF$, 
the allocated agent $\winner$ and the two-part payment $(z^\ast, y^\ast)$ agent $\winner$ is charged satisfy:
\begin{enumerate}[(i)]
	\setlength\itemsep{0em}
	\item $(z^\ast, y^\ast)$ resides weakly below $u_{\winner}(z)$.
%
	\item $(z^\ast, y^\ast)$ resides weakly above the frontier of the rest of the economy $u_{N \backslash \{\winner\}}(z)$. 
	\item The allocated agent faces a non-negative base payment $y^\ast \geq 0$.
\end{enumerate}
\end{restatable}

Instead of requiring the type space to include all value distributions satisfying (A1)-(A3), the lemma also holds assuming that the type space is the set of all $(\fixedV_i,\fixedP_i)$ types. 
We defer the full proof to Appendix~\ref{appx:proof_P1P5}, giving intuition here.
Part (i) is implied by IR. 
If (ii) is violated, i.e. there exists agent $i \neq \winner$ s.t. $u_i(z^\ast)- y^\ast > 0$, then in the economy where the type of agent $\winner$ is also given by $\CDF_{i}$, pretending that her type is $\CDF_{\winner}$ is a useful deviation. 
We show this by proving 
that any agent who is tied with some other agent cannot get strictly positive utility. 
Part (iii) is proved by showing that if the allocated agent is charged $y^\ast< 0$ in some economy, we can replace the agent's type with some $(\fixedV_i,\fixedP_i)$ type, in which case either IC or ND is violated.

\medskip

Given (P4) anonymity, regardless of whether there are ties in agents' reports, there is a two-part payment $(z^\ast, y^\ast)$ a mechanism charges its allocated agent(s). 
Lemma~\ref{lem:lem_P1P5_characterization} implies that $(z^\ast, y^\ast)$ is in between the first and second frontiers and above the horizontal axis $y = 0$ (see Figure~\ref{fig:P1P5_characterization}), and that the allocated agent(s) resides on the frontier at $z^\ast$ (i.e. $u_N(z^\ast) = u_\winner(z^\ast)$ if agent $\winner$ is allocated). The following lemma proves monotonicity properties w.r.t. the penalty $z^\ast$, on utilization and welfare achieved by mechanisms that satisfy (P1)-(P6), for any fixed economy.


\begin{restatable}{lemma}{lemUtSwMon} \label{lem:range_of_ut_sw} 
Fix any type profile $\CDF$ satisfying (A1) and (A2). Among all mechanisms that satisfy (P1)-(P6), the utilization achieved by a mechanism is (weakly) higher if it charges its allocated agent(s) a higher penalty $z^\ast$. Similarly, the achieved welfare is monotonically increasing in $z^\ast$ when $z^\ast \leq \socialV$, and monotonically decreasing in $z^\ast$ when $z^\ast > \socialV$.  
\end{restatable}

The proof is provided in Appendix~\ref{appx:proof_lem_range_of_ut_sw}, which uses the monotonicity properties in Lemmas~\ref{lem:util_welfare} and~\ref{lem:crossing_utilities} and the characterization in Lemma~\ref{lem:lem_P1P5_characterization}. An important implication of this lemma is that the highest possible welfare achievable by any mechanism under (P1)-(P6) is achieved by charging a penalty $z^\ast = \socialV$ if $\max_{i \in N} u_i(\socialV) \geq 0$ and $z^\ast = \max_{i \in N} \zc_i$ otherwise. Lemma~\ref{lem:lem_P1P5_characterization} then requires allocating to agent(s) in $\arg \max_{i \in N} u_i(z^\ast) $, which is in fact the set of agents allocated under CP($\socialV$).
Therefore, the only ways to achieve an even higher welfare than the CP($\socialV$) mechanism are (i) break ties in favor of higher welfare instead of at random, and (ii) charge a higher penalty, when the CP($\socialV$) penalty determined by the the second highest bid is lower than optimal $z^\ast$. 


\begin{definition}[Generic input] A type profile $\CDF$ satisfies the \emph{generic input} property if for any $i, j \in N$, $i \neq j$: (i) $u_i(\socialV) \neq u_j(\socialV)$, if $u_i(\socialV),~ u_j(\socialV) \geq 0$, and (ii) $\zc_i \neq \zc_j$, if $u_i(\socialV),~ u_j(\socialV) < 0$.
\end{definition}

A type profile is generic if no two agents have the same period~$0$ willingness to pay given penalty $\socialV$, or the same maximum acceptable penalty that is below $\socialV$. As a result, there would not be any tie under the CP($\socialV$) mechanism.\footnote{ 
  The generic inputs assumption is only needed for the indirect, CP($\socialV$) mechanism. 
A direct revelation version, that always breaks ties in favor of the agent with higher utilization, has all the performance guarantees stated in Corollary~\ref{cor:cpm_opt_wipi} and Theorems~\ref{thm:cpm_not_dom} and~\ref{thm:cpm_opt} without the generic input assumption.}
An immediate result is that the CP mechanism is welfare optimal for the $(\fixedV_i, \fixedP_i)$ type space with the generic input assumption. This is easy to see, since in this type domain, a higher penalty does not improve utilization, or induce more welfare-optimal time $1$ utilization decision of the allocated agent.
\begin{corollary} \label{cor:cpm_opt_wipi} Assume the type space is the set of all $(\fixedV_i, \fixedP_i)$ value distributions. With the generic input assumption, the CP($\socialV$) mechanism is welfare-optimal type profile by type profile among all two-period mechanisms that satisfy (P1)-(P6). 
\end{corollary}

We also have the following result, the first of our two main results. Theorem~\ref{thm:cpm_not_dom} states that the CP($\socialV$) mechanism is not dominated in welfare by any two-period mechanism under (P1)-(P6).
\begin{restatable}{theorem}{thmNotDominated} \label{thm:cpm_not_dom} 
Assume the type space is the set of all value distributions satisfying (A1) and (A2). With the generic input assumption, no two-period mechanism under (P1)-(P6) achieves weakly higher social welfare than the CP($\socialV$) mechanism for all type profiles, and a strictly higher social welfare than the CP($\socialV$) mechanism for at least one type profile.
\end{restatable}

See Appendix~\ref{appx:proof_opt_cpm} for the proof. Intuitively, if a mechanism $\mech$ under (P1)-(P6) always achieves weakly higher welfare than the CP($\socialV$) mechanism, lemmas~\ref{lem:lem_P1P5_characterization} and ~\ref{lem:range_of_ut_sw} require that it always allocates the resource to the winner under CP($\socialV$). We then show a violation of either IR or DSE, if $\mech$ ever charges the a higher penalty than the CP($\socialV$) mechanism does to improve welfare. 

\if 0
 {\em ordered payment space} if for any report
profile $\report \in \reportSet^n$, the allocated agent is charged
two-part payment $(\tone_\winner(\report),
\tzero_\winner(\report))\in \paymentSpace$ for some $\paymentSpace$
that is \emph{ordered}:


\fi

A payment space $\paymentSpace$ of a mechanism is the set of two-part payments that's achievable by some report profile of the agents:  $\paymentSpace =  \{ (\tone_\winner(\report), \tzero_\winner(\report)) ~|~ \report \in \reportSet^n \}$.
\begin{definition}[Ordered payment space] A payment space $\paymentSpace \subseteq \setR^2$ is \emph{ordered} if all agents with types satisfying (A1) and (A2) agree on which one of any two pairs of payments is more preferable. Formally, $\forall (z, y) \in \paymentSpace$, $\forall (\tilde{z}, \tilde{y}) \in \paymentSpace$, for all $F_1$, $F_2$ under (A1) and (A2),  
\begin{align*}
	u_1(z) - y > u_1(\tilde{z}) - \tilde{y} \Rightarrow u_2(z) - y \geq u_2(\tilde{z}) - \tilde{y}.
\end{align*}
\end{definition}



The second main result is that the CP($\socialV$) mechanism is
welfare-optimal profile by profile among a large class of mechanisms that always allocate the resource, and use an ordered payment space. 

\begin{restatable}{theorem}{thmOptOrderedSpace} \label{thm:cpm_opt} Assume the type space is the set of all value distributions satisfying (A1) and (A2). With the generic input assumption, the CP($\socialV$) mechanism is welfare-optimal  type profile by type profile, among all two-period mechanisms that satisfy (P1)-(P6), always allocate the resource, and use an ordered payment space.
\end{restatable}

We defer the full proof to Appendix~\ref{appx:proof_opt_cpm}. Intuitively, if a mechanism achieves a higher welfare by charging a penalty larger than the penalty determined the CP$(\socialV)$ mechanism, we may construct an alternative economy, and show that guaranteeing (P1)-(P6) results in a violation of the assumption that the payment space is ordered. 
%

\medskip

In fact, all mechanisms discussed so far use ordered payment spaces. The second price auction always charges no penalty and a non-negative base payment, thus $\paymentSpace_{\txtSP} = \{ (z,y) \in \setR^2 ~|~ z=0,~y\geq 0\}$, as illustrated in Figure~\ref{fig:pspace_sp}. Similarly, $\paymentSpace_{\txtCSP} = \{ (z,y) \in \setR^2 ~|~ z\geq 0,~y = 0\}$, as in Figure~\ref{fig:pspace_csp}. The CP($\maxZ$) mechanism sets payments $\paymentSpace_{\mathrm{CP}(\maxZ)} = \{ (z,y) \in \setR^2 ~|~ 0\leq z \leq \maxZ,~y = 0\} \cup \{ (z,y) \in \setR^2 ~|~ z = \maxZ, ~ y \geq 0\}$, as illustrated in Figure~\ref{fig:pspace_cpm}.
We may consider other mechanisms, for example, a \emph{SP+$\maxZ$ mechanism} collects a single non-negative bid from each agent, allocates to the highest bidder (if there is any), charges a time $1$ penalty $\maxZ$, and the second highest bid as the base payment (no base payment if only one agent submitted non-negative bid). The payment space is given in Figure~\ref{fig:pspace_spz}. We may also consider a \emph{$\gamma$-CSP mechanism}, which collects a single bid from each agent, allocates to the highest bidder, charges a $\gamma$ fraction of the second highest bid as the time $0$ base payment, and the rest of it as the no-show penalty. The payment space is as shown in Figure~\ref{fig:pspace_gammaCSP}.


\newcommand{\paymentSpaceScale}{1.3}
\newcommand{\figSpace}{1.2}
\begin{figure}
\centering 
\subfloat[\small{$\paymentSpace_{\txtSP}$}]{\label{fig:pspace_sp}
\begin{tikzpicture}[scale = \paymentSpaceScale][font=\small]
\draw[->] 	(-0.5,0) -- (1,0) node[anchor=north] {$z$};
\draw[->] 	(0,-0.5) -- (0,1) node[anchor=west] {$y$};
\draw[-, very thick] (0,0) -- (0,1);
\end{tikzpicture}}
\hspace{\figSpace em}
\subfloat[\small{$\paymentSpace_{\txtCSP}$}]{\label{fig:pspace_csp}
\begin{tikzpicture}[scale = \paymentSpaceScale][font=\small]
\draw[->] 	(-0.5,0) -- (1,0) node[anchor=north] {$z$};
\draw[->] 	(0,-0.5) -- (0,1) node[anchor=west] {$y$};
\draw[-, very thick] (0,0) -- (1,0);
\end{tikzpicture}}
\hspace{\figSpace em}
\subfloat[\small{$\paymentSpace_{\txtCPM(\maxZ)}$}]{\label{fig:pspace_cpm}
\begin{tikzpicture}[scale = \paymentSpaceScale][font=\small]
\draw[->] 	(-0.5,0) -- (1.2,0) node[anchor=north] {$z$};
\draw[->] 	(0,-0.5) -- (0,1) node[anchor=west] {$y$};
\draw[-, very thick] (0,0) -- (0.7,0) -- (0.7, 1);
\draw (0.7, -0) node[anchor=north] {\footnotesize{$\maxZ$}};
\end{tikzpicture}}
\hspace{\figSpace em}
\subfloat[\small{$\paymentSpace_{\mathrm{SP+\maxZ}}$}]{\label{fig:pspace_spz}
\begin{tikzpicture}[scale = \paymentSpaceScale][font=\small]

\draw[->] 	(-0.5,0) -- (1,0) node[anchor=north] {$z$};
\draw[->] 	(0,-0.5) -- (0,1) node[anchor=west] {$y$};

\draw[-, very thick] (0.7, 0) -- (0.7, 1);
\draw (0.7, -0) node[anchor=north] {\footnotesize{$\maxZ$}};
\end{tikzpicture}}
\hspace{\figSpace em}
\subfloat[\small{$\paymentSpace_{\mathrm{\gamma-CSP}}$}]{\label{fig:pspace_gammaCSP}
\begin{tikzpicture}[scale = \paymentSpaceScale][font=\small]
\draw[->] 	(-0.5,0) -- (1,0) node[anchor=north] {$z$};
\draw[->] 	(0,-0.5) -- (0,1) node[anchor=west] {$y$};
\draw[-, very thick] (-0, -0) -- (1,0.8);
\end{tikzpicture}}
\caption{Ordered payment spaces for various two-period mechanisms. 	
\label{fig:ordered_payment_spaces}} 
\vsq{-0.5em}
\end{figure}
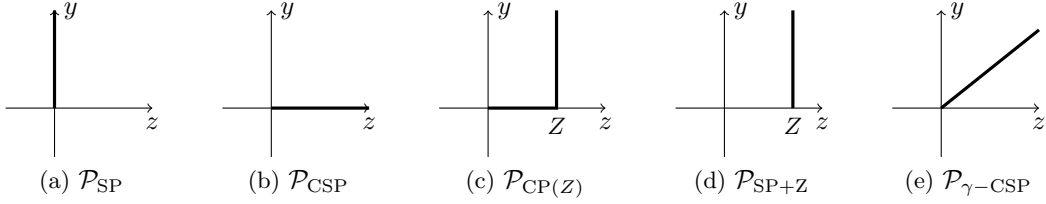

In addition, mechanisms with an ordered payment space support a simple indirect message structure: the interpretation is that the mechanism asks an agent to report the largest payment in the ordering that is acceptable. Given this, dominant-strategy mechanisms can be achieved by allocating to the agent with the highest report and charging the second-highest report (defined with respect to the payment order).

\subsection{Uniqueness and Optimality of CSP} \label{sec:uniq_opt_CSP}

The CSP mechanism can be considered as a special case of the CP$(\socialV)$ mechanism with no upper bound on the penalty an agent may be charged.  
Recognizing this, and defining \emph{generic input} as no ties in agents' zero-crossings, we can obtain the following optimality results for utilization, analogous to the optimality results of CP$(\socialV$) presented in the previous section. Under the additional assumption of no-charge when using the resource, we can state a uniqueness result for CSP.
\begin{restatable}{theorem}{thmUniqOptCSP} \label{thm:csp_uniq_opt} Assume the type space is the set of all value distributions satisfying (A1)-(A3), assume generic input, and consider only two-period mechanisms that satisfy (P1)-(P6): 
\begin{enumerate}[(i)]
	\setlength\itemsep{0em}
	\item the CSP mechanism is the unique mechanism that always allocates the resource, and does not charge the allocated agent if the resource is utilized (i.e. ``no-charge").
	\item for the $(\fixedV_i, \fixedP_i)$ type space, the CSP mechanism is optimal for utilization, profile by profile.
	\item the CSP mechanism is not dominated for utilization by any mechanism.
	\item the CSP mechanism is utilization optimal profile by profile, among all mechanisms that always allocate the resource  and use an ordered payment space.
\end{enumerate}
\end{restatable}

The proofs are similar to the proofs of the optimality results in the previous section, observing that the only way to achieve a higher utilization than CSP is to allocate the resource to the CSP winner, and charge the CSP winner a strictly higher penalty than the second highest bid. See Appendix~\ref{appx:proof_opt_csp} for a proof of this theorem.

\if 0
\hma{Add here a discussions section?

\begin{itemize}
	\item Why is each of these conditions necessary 
	\item P6 can actually be relaxed --- do we add a comment? 
	\item Always allocate and the use of reserve prices 
	\item Agent-dependent mechanisms and additional simulations
\end{itemize}

We had these in a much earlier version of the paper but took them out recently.}
\fi

\section{Assignment of Multiple Resources} \label{sec:multi_resource}

In this section, we generalize the model and mechanisms for assigning multiple resources, but where each agent remains interested in receiving at most one resource (i.e., the unit-demand model).
%
%
\if 0 

\subsection{Multiple Identical Resources}

Consider first the scenario of assigning $m$ identical resources.
An agent's type is still $F_i$, describing the distribution of her random value for using one unit of the resource. 
The CP($\maxZ$) mechanism can be generalized  as follows.

\begin{definition}[Generalized CP($\maxZ$) mechanism] For assigning $m$ identical resources, the generalized CP mechanism parametrized by maximal penalty $\maxZ$  collects two-part bids $b = (b_1, \dots, b_n)$. For each $i \in N$, $b_i = (\bone_i, \bzero_i) \in \reportSet$, where  $\reportSet = \{(z,y) \in \setR^2 ~|~0 \leq z \leq \maxZ,~y = 0 \} \cup \{(z,y) \in \setR^2 ~|~ z = \maxZ, ~ y \geq 0 \}$. 
W.l.o.g. reorder agents as $\bzero_1 + \bone_1 \geq \bzero_2 + \bone_2 \geq \dots \bzero_n + \bone_n$, breaking ties at random.
\begin{enumerate}[$\bullet$]
	\setlength\itemsep{0em}
	\item Allocation rule: $x_{i}(b) = 1$ for $ i \leq m$, and $x_{i}(b) = 0$, otherwise.
	\item Payment rule: for $i \leq m$, $\tzero_i (b) = \bzero_{m+1}$ and $\tone_i (b) =\bone_{m+1}$; for all $i \geq m+1$, $\tzero_i(b) = 0$.
\end{enumerate}
\end{definition}

Intuitively, this is the $(m+1)\th$ price analogue of the CP($\maxZ$) mechanism, and it remains a dominant strategy to bid $b^\ast_{i, \txtCPM} = (\maxZ,u_i(\maxZ))$ if $u_i(\maxZ) \geq 0$, and otherwise $b^\ast_{i, \txtCPM} = (\zc_i,0)$. Similarly, the CSP mechanism can be generalized as the contingent $(m+1)\th$ price mechanism which charges the $(m+1)\th$ bid as the no-show penalty; the SP mechanism can be generalized as the $(m+1)\th$ price auction, which charges the $(m+1)\th$ bid as the base payment. The DSE bids under these generalizations also remain the same. 

We state the following optimality results on the assignment of $m$ identical resources.

\begin{theorem} 
Assume the type space is the set of all value distributions satisfying (A1)-(A2), assume generic input, and consider only two-period mechanisms for $m$-resource allocation that satisfy (P1)-(P6). The generalized CP($\maxZ$) mechanism is: 
\begin{enumerate}[(i)]
	\setlength\itemsep{0em}
	\item welfare-optimal type profile by type profile for the $(\fixedV_i, \fixedP_i)$ type space.
	\item welfare-optimal profile by profile, among all mechanisms that always allocate the resource and use an ordered payment space.
\end{enumerate}
\end{theorem}

\begin{theorem} \label{thm:ck1p_theorem}
Assume the type space is the set of all value distributions satisfying (A1)-(A3), assume generic input, and consider only two-period mechanisms for $m$-resource allocation that satisfy (P1)-(P6). The contingent ($m$+1)$^\mathrm{th}$ price mechanism is: 
\begin{enumerate}[(i)]
	\setlength\itemsep{0em}
	\item the unique mechanism that always allocates the resource, and does not charge the allocated agent if the resource is utilized (i.e. ``no-charge").
	\item optimal type profile by type profile for utilization for the $(\fixedV_i, \fixedP_i)$ type space.
	\item utilization optimal profile by profile, among all mechanisms that always allocate the resource and use an ordered payment space.
\end{enumerate}
\end{theorem}

In particular, for any type profile that satisfy (A1)-(A2), the generalized CP($\maxZ$) mechanism dominates the ($m$+1)$^\mathrm{th}$ price auction for both utilization and social welfare. 
Similar to Lemma~\ref{lem:lem_P1P5_characterization}, we can show that in allocating $m$ identical resources, assuming (P1)-(P6), the allocated agents must be on the top $m$ upper envelopes of the economy, and the payment must be weakly above the ($m$+1)$^\mathrm{th}$ envelope. 
The summation of the slopes of the top $m$ upper envelopes, which corresponds to the total utilization, increases monotonically as the penalty increases. We can also prove that the total the social welfare, achieved by the agents residing on the top $m$ upper envelopes, is maximized when $z^\ast = \socialV$. 
The rest of the proof follows the same arguments as in the proofs of Corollary~\ref{cor:cpm_opt_wipi} and Theorems~\ref{thm:cpm_opt} and~\ref{thm:csp_uniq_opt}.

\subsection{Multiple Heterogeneous Resources}

Consider now the problem of assigning $m$ heterogeneous resources.
\fi
Let $N = \{1, 2, \dots, n\}$ be the set of agents and $M = \{a, b, \dots, m\}$ be the set of $m$ resources. For each $a \in M$, the value for each agent $i$ to use resource $a$ is a random variable $V_{i,a}$ with CDF $\CDF_{i,a}$. $\{\CDF_{i,a}\}_{a \in M}$ corresponds to agent $i$'s type, and we assume that the random values 
are independent. 

The SP auction can be generalized as the VCG mechanism~\cite{vickrey1961counterspeculation,clarke1971,groves1973}, where it is a dominant strategy for agent $i$ to bid $u_{i,a}(0) = \mathbb{E}[V_{i,a}^+]$ for resource $a$.
The naive generalization of the CP$(Z)$ mechanism (which assigns the resources to maximize the sum of the two-part bids, and charges each agent the externality that she imposes on the rest of the economy in terms of the two-part bids) fails to be incentive compatible. This is because agents' expected utilities are not quasi-linear in the period 1 penalty payments.

A set of two-part payments $\{(z_a,y_a)\}_{a\in M}$ is a set of competitive equilibrium (CE) price if when each agent selects her favorite resource given these payments, no resource is selected more than once, and a resource that is not selected has zero prices $z_a = y_a = 0$. 
Recall that in the message space of a CP$(Z)$ mechanism, a two-part payment $(z,y)$ is ``higher" if it has a larger sum $z+y$, and that an agent has a lower expected utility if she is charged a higher two-part bid.
We generalize the CP$(Z)$ mechanism as the minimum CE price mechanism~\cite{DBLP:journals/ior/AlaeiJM16,demange1985strategy}:

\if 0
\begin{definition}[Generalized CP($Z$) mechanism]   The generalized CP mechanism parametrized by  maximal penalty $Z$ (the CP($Z$) mechanism) collects two-part bids $\bid = (\bid_1, \dots, \bid_n)$. For each $i \in N$, $\bid_i = (\bzero_i, \bone_i) \in \reportSet$, where  $\reportSet = \{(z,y) \in \setR^2 ~|~0 \leq z \leq Z,~y = 0 \} \cup \{(z,y) \in \setR^2 ~|~ z = Z, ~ y \geq 0 \} $. 
W.l.o.g. reorder agents as $\bzero_1 + \bone_1 \geq \bzero_2 + \bone_2 \geq \dots \bzero_n + \bone_n$, breaking ties at random.
\begin{enumerate}[$\bullet$]
	\item Allocation rule: $x_{i}(\bid) = 1$ for $ i \leq m$, and $x_{i}(\bid) = 0$, otherwise.
	\item Payment rule: for $i \leq m$, $\tzero_i (\bid) = \bzero_{m+1}$ and $\tone_i (\bid) =\bone_{m+1}$; $\tzero_i(\bid) = 0$, $\forall i \geq m+1$.
\end{enumerate}
\end{definition}

Intuitively, this is the $(m+1)\th$ price version of the CP($Z$) mechanism, and it remains a dominant strategy to bid $\bid^\ast_{i, \txtCPM} = (u_i(Z), Z)$ if $u_i(Z) \geq 0$, and otherwise $\bid^\ast_{i, \txtCPM} = (0, \zc_i)$. 

Similarly, the CSP mechanism can be generalized as the contingent $(m+1)\th$ price mechanism which charges the $(m+1)\th$ bid as the no-show penalty; the SP mechanism can be generalized as the $(m+1)\th$ price auction, which charges the $(m+1)\th$ bid as the base payment. The DSE bids under these generalizations also remain the same. 

\fi

\begin{definition}[Generalized CP($Z$) mechanism] The generalized CP mechanism parametrized by maximal penalty $Z$ (the GCP($Z$) mechanism) collects value distributions $\{\CDF_{i,a}\}_{i \in N, a \in M}$ from the agents, and computes the minimum CE payments $\{(z_a, y_a) \}_{a \in M}$ in the payment space $\reportSet = \{(z,y) \in \setR^2 ~|~0 \leq z \leq Z,~y = 0 \} \cup \{(z,y) \in \setR^2 ~|~ z = Z, ~ y \geq 0 \}$. 
\begin{enumerate}[$\bullet$]
	\item Allocation rule: for each agent $i \in N$, if $\max_{a \in M} u_{i,a}(z_a) - y_a \geq 0$, then $x_i(F) = a_i^\ast \in \arg \max_{a \in M} u_{i,a}(z_a) - y_a$ (breaking ties to clear the market). 
	\item Payment rule: charge each agent $\tone_i(F) = z_{a_i^\ast}$ and $\tzero_i(F) = y_{a_i^\ast}$ if agent $i$ is allocated resource $a_i^\ast$. All other payments are zero. 
\end{enumerate}
\end{definition}

For the case where the $m$ resources are identical, the mechanism reduces to the $(m+1)\th$ price version of the CP($Z$) mechanism.
Intuitively, each agent is assigned one of her favorite resources given the prices $\{(z_a, y_a)\}_{a \in M}$, if she can afford any.
When $Z = 0$, the mechanism reduces to VCG, and for the case when $Z = \infty$, we get the generalized CSP (GCSP) mechanism, which prices each resource at the minimum CE penalties. Demange and Gale~\cite{demange1985strategy} prove that the minimum CE price mechanism is incentive compatible, and Alaei et al.~\cite{DBLP:journals/ior/AlaeiJM16} provide a recursive algorithm to compute these minimum CE prices. 

\begin{theorem} Given assumptions (A1)-(A2), under the generalized CP($Z$) and the generalized CSP mechanisms, it is a dominant strategy for each agent to truthfully report her type.
\end{theorem}


Unlike the assignment of a single resource, the welfare and utilization under the generalized CP mechanisms need not dominate that of the  VCG mechanism. Still, simulation results in Section~\ref{sec:simulations} show that the generalized CP mechanism achieves significantly higher average welfare and utilization than VCG.

\section{Simulation Results} \label{sec:simulations}

In this section, we compare the welfare and utilization achieved by different mechanisms for assigning a single resource, or multiple heterogeneous resources.

We adopt the natural exponential type model (see Example~\ref{ex:exp_model}), under which agent $i$'s value for using resource $a$ is $V_{i,a} = w_{i,a} - O_{i,a}$, where $w_{i,a}>0$ is the fixed value from using the resource, and $O_{i,a} \sim \mathrm{Exp}(\lambda_{i,a})$ is the exponentially distributed opportunity cost. $\E{V_{i,a}} = w_{i,a} - \lambda_{i,a}^{-1}$ where $\lambda_{i,a}^{-1}$ is the expected value of the opportunity cost. 
We consider the type distribution where the value and the expected opportunity cost $\lambda_{i,a}^{-1}$ are uniformly distributed: $\lambda_{i,a}^{-1} \sim \mathrm{U}[0,~L]$ and $w_{i,a} \sim \mathrm{U}[0,~\lambda_{i,a}^{-1}]$. With $w_{i,a} < \lambda_{i,a}^{-1}$, (A1)-(A3) are satisfied.

\subsection{Single Resource Assignment}

We first study the assignment of a single resource. We set $L = 10$ and $\socialV = 5$, corresponding to scenario where the societal value $\socialV$ is equal to the expected opportunity cost for an average agent to use a resource.\footnote{ Appendix~\ref{appx:additional_simulations} presents additional results for settings where the societal preference for utilization is weaker/stronger. 
}
Varying the number of agents from 2 to 15, we compute the average social welfare and utilization over 10,000 randomly generated profiles under the CP$(\socialV)$, CSP, SP mechanisms, and other benchmarks. 
See Figure~\ref{fig:W5}.  
The {\em First-Best} benchmark is the highest achievable welfare and utilization, subject to the assumptions of IR and \emph{no deficit} (ND), i.e. the expected total revenue of the mechanism has to be non-negative. 
%
%
The {\em Random} benchmark assigns the resource at random to one of the agents without charging any payment, modeling the first-come-first-serve system of reserving the resource.


\newcommand{\figScale}{0.85}

\begin{figure}[t!]
\centering
\subfloat[Social Welfare]{\label{fig:welfare_W5} 
	\includegraphics[scale=\figScale]{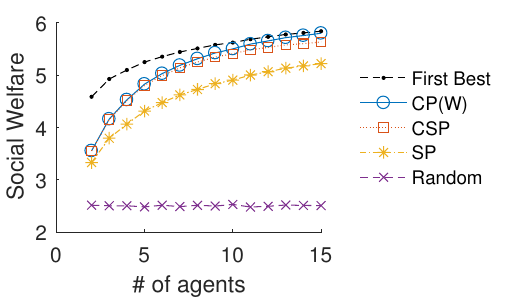}
}
%
\subfloat[Utilization]{\label{fig:utilization_W5}
 	\includegraphics[scale=\figScale]{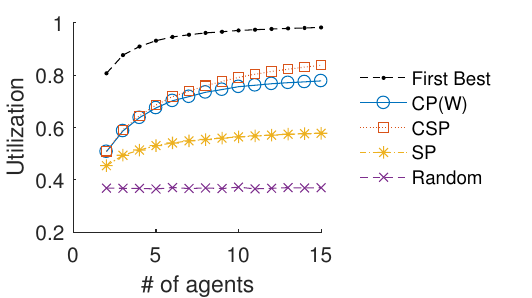}
}
\caption{Social welfare and utilization for a single resource. 
\label{fig:W5} 
}
\vsq{-1em}
\end{figure}

Figure~\ref{fig:welfare_W5} shows that 
the CP($\socialV$) mechanism achieves slightly higher welfare than the CSP mechanism, and is very close to the first-best welfare assuming IR and ND. Both CP($\socialV$) and CSP achieve better social welfare than the SP auction.
The average utilization under the mechanisms are shown in Figure~\ref{fig:utilization_W5}. 
In comparison to the SP auction, both CSP and CP$(\socialV)$ mechanisms achieve significantly higher utilization.%
\footnote{
As the number of agents increases, the CP$(\socialV)$ mechanism approaches the first-best welfare, whereas it is curious that there remains a gap to the first-best utilization for CSP. With sufficient competition, the  under CP$(\socialV)$ is the agent that achieves the first-best welfare, and moreover, CP$(\socialV)$ determines the optimal penalty $\socialV$. In contrast, the CSP winner may not be the one that achieves the first-best utilization, and the second-highest bid remains lower than the optimal (first-best) penalty.
}

\subsection{Assigning Multiple Heterogeneous Resources}

\begin{figure}[t!]
\centering
\subfloat[Social Welfare]{\label{fig:welfare_3items_W5} 
	\includegraphics[scale=\figScale]{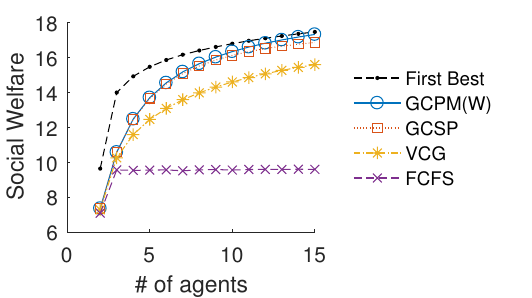}
}
%
\subfloat[Utilization]{\label{fig:utilization_3items_W5}
 	\includegraphics[scale=\figScale]{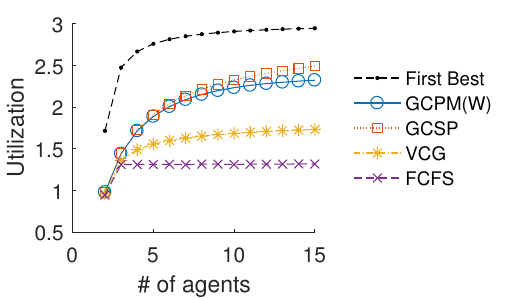}
}
\caption{Welfare and utilization for 3 heterogeneous resources, with small societal value $\socialV = 5$.
\label{fig:W5_3items} 
}
\vsq{-1em}
\end{figure}

1
We now compare the social welfare and utilization (expected number of utilized resources) for assigning $3$ heterogeneous resources, as the number of agents varies from $2$ to $15$.  Figure~\ref{fig:welfare_3items_W5} shows the average welfare and utilization achieved by different mechanisms and benchmarks, for the scenario where $\socialV = 5$ is equal to the expected opportunity cost for an agent to use a resource. 
The First Come First Serve (FCFS) benchmark allows each agent to choose her favorite remaining resource as they arrive in a random order, and does not charge any payments. The generalized CP$(\socialV)$ and generalized CSP mechanisms achieve better welfare and significantly better utilization than the VCG mechanism.

\begin{figure}[t!]
\centering
\subfloat[Social Welfare]{\label{fig:welfare_3items_W0} 
	\includegraphics[scale=\figScale]{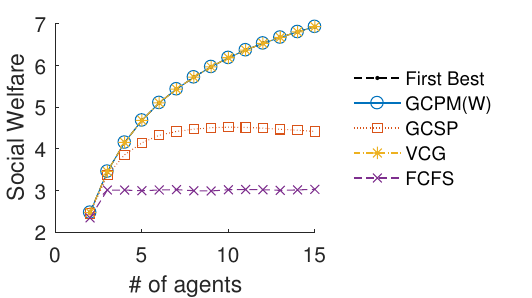}
}
%
\subfloat[Utilization]{\label{fig:utilization_3items_W0}
 	\includegraphics[scale=\figScale]{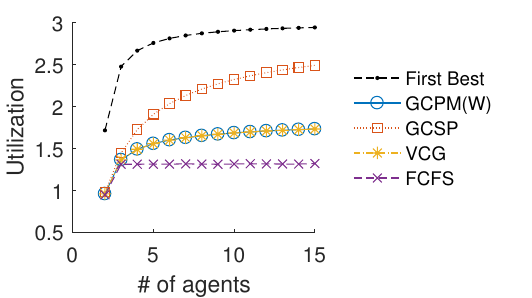}
}
\caption{Welfare and utilization for 3 heterogeneous resources, with small societal value $\socialV = 0$.
\label{fig:W0_3items} 
}
\end{figure}

We also present results when the society has weaker or stronger preference for utilization.
Figure~\ref{fig:W0_3items} shows the average welfare and utilization, 
when the societal value from utilization is $\socialV = 0$. Both the generalized CP$(0)$ mechanism and the VCG mechanism achieve the first-best welfare, whereas the generalized CSP mechanism is less efficient but achieves a higher utilization. 

\begin{figure}[t!]
\centering
\subfloat[Social Welfare]{\label{fig:welfare_3items_W10} 
	\includegraphics[scale=\figScale]{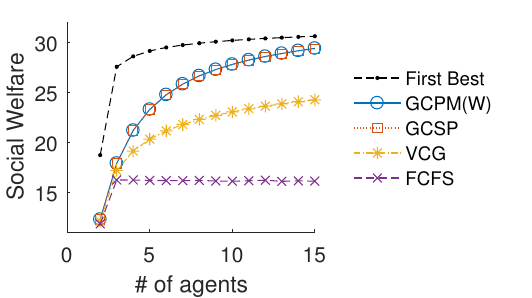}
}
%
\subfloat[Utilization]{\label{fig:utilization_3items_W10}
 	\includegraphics[scale=\figScale]{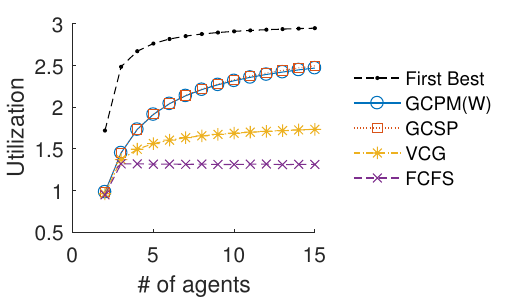}
}
\caption{Welfare and utilization for 3 heterogeneous resources, with small societal value $\socialV = 10$. \label{fig:W10_3items}  \vspace{-0.5em}
}
\end{figure}

Figure~\ref{fig:W10_3items} shows the average welfare and utilization for three resources, when the societal value from utilization is $\socialV = 10$. We can see that with a stronger desire for the resource being utilized, the generalized CP mechanism and the generalized CSP mechanism significantly outperform the VCG mechanism on both welfare and utilization. With a large number of agents, the achieved welfare is in fact close to the first-best benchmark under the IR and ND assumptions.

\section{Conclusion} \label{sec:conclusions}

We study the problem of resource assignment where agents have uncertainty about their values and where it is in the interest of the mechanism designer that resources be used and not wasted.
We introduce the contingent payment mechanisms parametrized by maximum penalty $\maxZ$, and show that among mechanisms with a set of natural axiomatic properties, the CP($\socialV$) mechanism is not dominated, and is welfare optimal among mechanisms that always allocate the resource and support an especially simple indirect structure. 
We prove similar optimality results for the contingent second price mechanism for utilization, extend the results to the assignment of multiple, identical items, and show that the mechanisms can also be generalized to assign multiple heterogeneous resources. 
Simulation results demonstrate the effectiveness and robustness of our mechanisms.

An interesting direction for future work is to generalize the model to allow for more than two time periods, where agents may arrive asynchronously, uncertainty unfolds gradually over time, and resources can be re-allocated. It would also be interesting to fold in considerations from behavioral economics, understanding how these contingent mechanisms interact with present-biased agents.  

\if 0
\begin{acks}
The authors thank Thibaut Horel, Scott Kominers, Debmalya Mandal, Jake Marcinek, Di Pei, Ilya Segal, Rakesh Vohra, the anonymous reviewers, and the participants of AMMA'15, ACM EC'16, and the Industrial Engineering and Management seminar at Technion for helpful comments and discussions.
\end{acks}
\fi




\bibliography{CSP_refs}

\begin{thebibliography}{25}
\providecommand{\natexlab}[1]{#1}
\providecommand{\url}[1]{\texttt{#1}}
\expandafter\ifx\csname urlstyle\endcsname\relax
  \providecommand{\doi}[1]{doi: #1}\else
  \providecommand{\doi}{doi: \begingroup \urlstyle{rm}\Url}\fi

\bibitem[Alaei et~al.(2016)Alaei, Jain, and
  Malekian]{DBLP:journals/ior/AlaeiJM16}
S.~Alaei, K.~Jain, and A.~Malekian.
\newblock Competitive equilibria in two-sided matching markets with general
  utility functions.
\newblock \emph{Operations Research}, 64\penalty0 (3):\penalty0 638--645, 2016.

\bibitem[Atakan and Ekmekci(2014)]{atakan2014auctions}
Alp~E Atakan and Mehmet Ekmekci.
\newblock Auctions, actions, and the failure of information aggregation.
\newblock \emph{American Economic Review}, 104\penalty0 (7), 2014.

\bibitem[Bergemann and V{\"a}lim{\"a}ki(2010)]{bergemann2010dynamic}
Dirk Bergemann and Juuso V{\"a}lim{\"a}ki.
\newblock The dynamic pivot mechanism.
\newblock \emph{Econometrica}, 78\penalty0 (2):\penalty0 771--789, 2010.

\bibitem[Bolton and Dewatripont(2005)]{bolton2005contract}
Patrick Bolton and Mathias Dewatripont.
\newblock \emph{Contract theory}.
\newblock MIT press, 2005.

\bibitem[Cavallo et~al.(2010)Cavallo, Parkes, and Singh]{cavallo10}
Ruggiero Cavallo, David~C. Parkes, and Satinder Singh.
\newblock Efficient mechanisms with dynamic populations and dynamic types.
\newblock Technical report, Harvard University, 2010.

\bibitem[Caves(2003)]{caves2003contracts}
Richard~E Caves.
\newblock Contracts between art and commerce.
\newblock \emph{Journal of economic Perspectives}, pages 73--84, 2003.

\bibitem[Clarke(1971)]{clarke1971}
Edward~H Clarke.
\newblock Multipart pricing of public goods.
\newblock \emph{Public choice}, 11\penalty0 (1):\penalty0 17--33, 1971.

\bibitem[Courty and Li(2000)]{courty2000sequential}
Pascal Courty and Hao Li.
\newblock Sequential screening.
\newblock \emph{The Review of Economic Studies}, 67\penalty0 (4):\penalty0
  697--717, 2000.

\bibitem[Deb and Mishra(2014)]{deb2014implementation}
Rahul Deb and Debasis Mishra.
\newblock Implementation with contingent contracts.
\newblock \emph{Econometrica}, 82\penalty0 (6):\penalty0 2371--2393, 2014.

\bibitem[Dellarocas(2003)]{dellarocas2003efficiency}
Chrysanthos Dellarocas.
\newblock Efficiency through feedback-contingent fees and rewards in auction
  marketplaces with adverse selection and moral hazard.
\newblock In \emph{Proceedings of the 4th ACM Conference on Electronic
  Commerce}, pages 11--18. ACM, 2003.

\bibitem[Demange and Gale(1985)]{demange1985strategy}
G.~Demange and D.~Gale.
\newblock The strategy structure of two-sided matching markets.
\newblock \emph{Econometrica}, 53:\penalty0 873--888, 1985.

\bibitem[Ekmekci et~al.(2016)Ekmekci, Kos, and Vohra]{ekmekci2016just}
Mehmet Ekmekci, Nenad Kos, and Rakesh Vohra.
\newblock Just enough or all: Selling a firm.
\newblock \emph{American Economic Journal: Microeconomics}, 8\penalty0
  (3):\penalty0 223--56, 2016.

\bibitem[Gneiting and Raftery(2007)]{gneiting2007strictly}
Tilmann Gneiting and Adrian~E Raftery.
\newblock Strictly proper scoring rules, prediction, and estimation.
\newblock \emph{Journal of the American Statistical Association}, 102\penalty0
  (477):\penalty0 359--378, 2007.

\bibitem[Groves(1973)]{groves1973}
Theodore Groves.
\newblock Incentives in teams.
\newblock \emph{Econometrica: Journal of the Econometric Society}, pages
  617--631, 1973.

\bibitem[Hart and Holmstrom(1986)]{hart1986theory}
Oliver Hart and Bengt Holmstrom.
\newblock \emph{The theory of contracts}.
\newblock Department of Economics, MIT, 1986.

\bibitem[Hendricks and Porter(1988)]{hendricks1988empirical}
Kenneth Hendricks and Robert~H Porter.
\newblock An empirical study of an auction with asymmetric information.
\newblock \emph{The American Economic Review}, pages 865--883, 1988.

\bibitem[Holmstrom(1979)]{holmstrom1979moral}
Bengt Holmstrom.
\newblock {Moral Hazard and Observability}.
\newblock \emph{The Bell Journal of Economics}, 10\penalty0 (1):\penalty0
  74--91, 1979.

\bibitem[Ma et~al.(2016)Ma, Robu, Li, and Parkes]{Ma_ijcai16}
Hongyao Ma, Valentin Robu, Na~Li, and David~C. Parkes.
\newblock Incentivizing reliability in demand-side response.
\newblock In \emph{Proceedings of The 25th International Joint Conference on
  Artificial Intelligence (IJCAI'16)}, pages 352--358, 2016.

\bibitem[Ma et~al.(2017)Ma, Robu, and Parkes]{Ma_aamas17}
Hongyao Ma, Valentin Robu, and David~C. Parkes.
\newblock {Generalizing Demand Response Through Reward Bidding}.
\newblock In \emph{Procedings of the 16th International Conference on
  Autonomous Agents and Multiagent Systems (AAMAS'17)}, pages 60--68, 2017.

\bibitem[Parkes and Singh(2003)]{parkes03c}
David~C. Parkes and Satinder Singh.
\newblock An {MDP-B}ased approach to {Online Mechanism Design}.
\newblock In \emph{Proc. 17th Annual Conf. on Neural Information Processing
  Systems (NIPS'03)}, 2003.

\bibitem[Porter et~al.(2008)Porter, Ronen, Shoham, and
  Tennenholtz]{porter2008fault}
Ryan Porter, Amir Ronen, Yoav Shoham, and Moshe Tennenholtz.
\newblock Fault tolerant mechanism design.
\newblock \emph{Artificial Intelligence}, 172\penalty0 (15):\penalty0
  1783--1799, 2008.

\bibitem[Skrzypacz(2013)]{skrzypacz2013auctions}
Andrzej Skrzypacz.
\newblock Auctions with contingent payments—an overview.
\newblock \emph{International Journal of Industrial Organization}, 31\penalty0
  (5):\penalty0 666--675, 2013.

\bibitem[Stole(2001)]{stole2001lectures}
Lars Stole.
\newblock Lectures on the theory of contracts and organizations.
\newblock \emph{Unpublished monograph}, 2001.

\bibitem[Varian(2007)]{varian2007position}
Hal~R Varian.
\newblock Position auctions.
\newblock \emph{international Journal of industrial Organization}, 25\penalty0
  (6):\penalty0 1163--1178, 2007.

\bibitem[Vickrey(1961)]{vickrey1961counterspeculation}
William Vickrey.
\newblock Counterspeculation, auctions, and competitive sealed tenders.
\newblock \emph{The Journal of finance}, 16\penalty0 (1):\penalty0 8--37, 1961.

\end{thebibliography}

\newpage

\appendix

\noindent{}\textbf{\huge{Appendix}}

\vspace{1.5em}

\noindent{}%
Derivation of expected utility functions and DSE bids for various type models are provided in Appendix~\ref{appx:bids}.
We provide in Appendix~\ref{appx:proofs} the proofs that are omitted from the body of the paper. %
Additional simulation results and discussions are presented in 
Appendix~\ref{appx:additional_simulations}.

\medskip

\section{Utilities and DSE Bids under Different Type Models} \label{appx:bids}

\subsection{The $(\fixedV_i, \fixedP_i)$ Model}

Recall that for the $(\fixedV_i, \fixedP_i)$ type model introduced in Example~\ref{ex:vipi}, the distribution of an agent's random value is given by
$$
	V_i = \pwfun{\fixedV_i & \txtwp \fixedP_i \\ -\infty & \txtwp 1 - \fixedP_i}	
$$ 
for some $\fixedV_i > 0$ and $\fixedP_i \in (0,1)$. The CDF of $V_i$ is
$$
	F_i(v) = \pwfun{1 - \fixedP_i & \txtfor v < \fixedV_i \\ 1 & \txtfor v \geq \fixedV_i }.
$$	
The expected utility of an agent, when allocated and charged a two-part payment $(z,y)$, is 
$$
	u_i(z) - y = \pwfun{\fixedV_i \fixedP_i  - (1 - \fixedP_i) z - y, & \txtfor z \geq - \fixedV_i \\
		-z -y, & \txtfor z < - \fixedV_i}.
$$
This is because when $z$ is negative, the agent uses the resource if she is able to only if $\fixedV_i \geq -z$. If $\fixedV_i < -z$, the agent would decides to never use the resource, and always pays $-z$ in period~1.
The DSE bids under the different mechanisms are:
\begin{enumerate}[$\bullet$]
	\item For the SP auction: $b_{i,\mathrm{SP}}^\ast = u_i(0) =  \fixedV_i \fixedP_i$ 
	\item For the CSP mechanism: $b_{i,\mathrm{CSP}}^\ast = \zc_i = \fixedV_i \fixedP_i / (1 - \fixedP_i)$.
	\item For the CP$(Z)$ mechanism: $b_{i, \mathrm{CP}}^\ast = (\bone_i, \bzero_i)$, where 
\begin{itemize}
	\item if $\zc_i \geq Z$: $\bone_i = Z$ and $\bzero_i = u(Z) = \fixedV_i p_i - Z(1- \fixedP_i)$, and
	\item if $\zc_i < Z$: $\bone_i = \zc_i = \fixedV_i \fixedP_i / (1-\fixedP_i)$ and $\bzero_i = 0$.
\end{itemize}	
	
\end{enumerate}

When an agent with $(\fixedV_i,\fixedP_i)$ type is assigned the resource and charged a penalty $z \geq -\fixedV_i$, the utilization is $ut_i(z) = \fixedP_i$, and the welfare is $\fixedP_i(\fixedV_i + \socialV)$. This is the first-best welfare and utilization that is achievable from allocating the resource to a $(\fixedV_i, \fixedP_i)$ type agent, since the penalty does not incentivize the agent to use the resource more or less often.

\if 0
Bids under other mechanisms that we discussed are:
\begin{enumerate}[$\bullet$]
	\item SP+C: $b_{\mathrm{SP+C}} = u(C, 0) = w_i p_i  - (1-p_i) C $,
	\item $\gamma$-CSP: $b_{\gamma\mathrm{CSP}}=(1-\gamma)w_i p_i / (1-p_i + \gamma p_i)$,
\end{enumerate}
The zero-profit curve is given by $y = w_i p_i -(1-p_i)z$, and the zero-revenue curve is given by $y = -(1-p_i)z$, as illustrated in Figure~\ref{fig:irbb}(i). Since the IR curve and ND curve never cross, the maximal penalty $z$ we can charge an agent with $(w_i,p_i)$ type s.t. the mechanism is IR and ND BB would be $\infty$. The utilization, however, would always be $ut(z,y) = p_i$, as long as $z \geq -w_i$.
\fi

\if 0

\hma{Cutting the distribution of bids, since the bid distribution of $(w_i, p_i)$ model is not used anywhere}

Consider the most simple case when $v_i$ and $p_i$ are all independently uniformly distributed on $[0, 1]$. 

\paragraph{Bid Distribution in SP} For simplicity of notation take one bidder with $W\sim U[0,1]$ and $P \sim U[0,1]$. We know that in SP it's a dominant strategy to bid $B_{sp} = WP$ (also a random variable). We know from product distribution that for $b_{sp} \in [0,1]$,
\begin{align*}
	f(b_{sp}) &= \int_{-\infty}^\infty f_W(w) f_P(b_{sp}/w) \frac{1}{|w|} dw = \int_0^1 1 \cdot \one{b_{sp}/w \in [0, 1]} \frac{1}{w}dw 
	 = \int_{b_{sp}}^1 \frac{1}{w}dv = \log\frac{1}{b_{sp}} 
\end{align*}
It's easy to check that this is a valid probability distribution. 

\paragraph{Bids in CSP}
In CSP, the dominant strategy is to bid $B_{csp} = \frac{WP}{1-P}$. It is difficult to study the distribution of $B_{csp}$ directly thus we consider the distribution of the log of the bid
\begin{equation}
	\log B_{csp} = \log W + \log\frac{P}{1-P} = \log W + \log(\frac{1}{1-P} - 1)
\end{equation}
For $W \sim U[0,1]$, denote $g(\cdot) = \log(\cdot)$ and  $ U = \log W = g(W)$. Since $g$ is strictly increasing, the distribution of $U$ would be
\begin{align*}
	f_U(u) &= \left| \frac{1}{g'(g^{-1}(u))} \right| f_{W}(g^{-1}(u)) = \frac{1}{1/e^{u}} \cdot \one{e^{u} \in [0,1]}  = \pwfun{e^{u}, & u \leq 0 \\ 0, &u > 0}
\end{align*}

Now we consider the second part. Sine $P \sim U[0,1]$, $1-P$ is also uniformly distributed on $[0,1]$. Denote $Q = \frac{1}{1-P}$, we know from inverse distribution that 
\begin{align*}
	f_Q(q) &= \frac{1}{q^2} f_{1-P}\left( \frac{1}{q} \right)  = \pwfun{\frac{1}{q^2}, & q \geq 1 \\ 0, &q < 1}
\end{align*}
Letting $R = \frac{1}{1-P} - 1 = Q - 1$, the distribution of $R$ is 
\begin{equation*}
	f_R(r) = f_Q(r+1) \left| \frac{dq}{dr} \right| =\pwfun{\frac{1}{(r+1)^2}, &r \geq 0 \\ 0, &r < 0}
\end{equation*}
Denote $S = \log(R)$, its distribution is
\begin{equation*}
	f_S(s) = \frac{1}{1/e^s} f_R(e^s) = \frac{1}{1/e^s}\frac{1}{(e^s + 1)^2} = \frac{e^s}{(e^s + 1)^2}
\end{equation*}
Finally, $\log B_{csp} = U + S$ where $U$ and $S$ are independent, thus the distribution of $B$ should be the convolution of the two pdfs:
\begin{align*}
	f_{\log B_{csp}}(c)&= \int_{-\infty}^{\infty} f_{S}(c - t) f_U(t) dt = \int_{-\infty}^0 \frac{e^{c-t}}{(e^{b-t} + 1)^2} e^{t} dt  = e^c \left( \log(1 + e^{-c}) -\frac{1}{1 + e^c} \right)
\end{align*}
This is actually not log-normal distributed, though it has a nice bell shape. 
\if 0
See Figure \ref{fig:log_bcsp_vipi} for the comparison of the simulation result (dark blue histogram) and the analytical pdf curve (red).

\begin{figure}[htpb!]
	\centering{
		\includegraphics[width=0.45\textwidth{}]{figures_appx/log_bcsp_vipi-eps-converted-to.pdf}
	}
	\caption{Bids of CSP and the derived pdf, $v_ip_i$}
	\label{fig:log_bcsp_vipi}
\end{figure}
\fi
Now, the distribution of $B_{csp}$ can be written as
\begin{align*}
	f_{B_{csp}}(b) &= \frac{1}{e^{\log b}} e^{\log b}\left(  \log(1+e^{-\log b}) - \frac{1}{1+e^{\log b}} \right) = \log(1 + 1/b)- \frac{1}{1+b}
\end{align*}
\fi

\if 0
\subsection{The Uniform Model}  \label{appx:uniform_model_bids}

Assume that the random value for an agent is uniformly distributed $V \sim U[-a_1, a_2]$ with parameters $a_1, a_2$ s.t. $-a_1 < 0 < a_2$ and $a_1 > |a_2|$. The CDF and PDF of $V$ are given by
\begin{align*}
	F(v) = \pwfun{0, & \txtfor v < -a_1 \\ 
		\frac{v-a_1}{a_2 - a_1}, &\txtfor -a_1 \leq v < v_2 \\
		1, & \txtfor v \geq a_2}, f(v) = \pwfun{ \frac{1}{a_1 + a_2}, &\txtfor -a_1 \leq v \leq a_2 \\
		0, & \text{ otherwise}}
\end{align*}
and the agent's expected utility as a function of the two-part payment is:
\begin{align*}
	u(z,y) = \pwfun{-z-y, & \txtfor z \leq -a_2 \\
		\frac{z^2-2a_1z + a_2^2}{2(a_1 + a_2)} - y, &\txtfor -a_2 < z \leq a_1 \\
		(a_2 - a_1)/2 - y, &\txtfor z > a_1}
\end{align*}
Zero-Crossings, which corresponds to the SP and CSP bids are:
\begin{enumerate}[$\bullet$]
	\item Zero-crossing with the vertical axis: $b_{\mathrm{SP}} = \yzc = u(0) = a_2^\ast/2(a_1+a_2)$.	 \hma{fix $a_2^\ast$ --- where is $\ast$ coming from?}	
	\item Zero-crossing with the horizontal axis: $b_{\mathrm{CSP}} = \zc = a_1-\sqrt{a_1^2 - a_2^2}$.
\end{enumerate}
and the bids under other mechanisms that we discussed are:
\begin{enumerate}[$\bullet$]
	\item SP+C: $b_{\mathrm{SP+C}} = u(C,0)$.
	\item $\gamma$-CSP: $b_{\gamma\mathrm{CSP}} = (a_1 + \gamma a_2 - \sqrt{a_1^2 - a_2^2 + 2\gamma a_1 a_2 + 2\gamma a_2^2 }) / (1-\gamma)$.
\end{enumerate}
Facing two-part payment $(z,y)$, utilization is of the form:
\begin{align*}
	ut(z) = \pwfun{0, & \txtfor v<-a_2 \\ 
		(z + a_2)/(a_1 + a_2), & \txtfor -a_2 \leq v < a_1\\ 
		1, v \geq a_1}
\end{align*}
and the expected revenue of the mechanism is:
\begin{align*}
	rev(z,y) = z \cdot \frac{-z +a_1}{a_1+a_2} + y
\end{align*}
First-best price, i.e. the two-part payment with the highest penalty component in the intersection of IR and ND ranges are: $z^{FB} = a_2,~y^{FB} = 2a_2(a_2-a_1)$. The utilization achieved under the FB payment would be 
$$
	ut_{FB} = 2a_2 / (a_1 + a_2).
$$

\if 0
\hma{Interestingly, I cannot remember when are the distribution of bids used, so cutting here to save space}

Consider the very simple uniform distribution s.t. the value $V \sim U[-a_1, a_2]$ where $0 < a_2 < a_1$. For simplicity, assume that:
\begin{align*}
	a_1 &\sim U[0,~ 1]\\
	a_2 &\sim U[0,~ a_1]
\end{align*}
We know that the bid under CSP:
\begin{align*}
	b_{csp} = a_1 - \sqrt{a_1^2 - a_2^2} \triangleq b_{csp}(a_1, a_2)
\end{align*}
It's easy to see that $b_{cap} < a_1$. Taking the partial derivatives, we have:
\begin{align*}
	&\frac{\partial}{\partial a_1} b_{csp}(a_1,a_2) = 1 - \frac{a_1}{\sqrt{a_1^2 - a_2^2}} < 0 \\
	&\frac{\partial}{\partial a_2} b_{csp}(a_1,a_2) =  \frac{a_2}{\sqrt{a_1^2 - a_2^2}} > 0  \\
	&\frac{\partial^2}{\partial a_1 \partial a_2} b_{csp}(a_1,a_2) =  -a_1 a_2 (a_1^2 - a_2^2)^{-3/2} < 0 
\end{align*}
Thus for any $b<1$, we know that 
\begin{align*}
	& \Pm{b_{csp} > b} = \int_0^1 \int_0^{a_1}\frac{1}{a_1} \one{b_{csp}(a_1,a_2) > b} da_2 da_1 \\
		=&\int_b^1 \frac{1}{a_1}\int_0^{a_1} \one{a_1 - \sqrt{a_1^2 - a_2^2}> b} da_2 da_1 = \int_b^1 \frac{1}{a_1}\int_0^{a_1} \one{ a_2 > \sqrt{2 a_1 b - b^2}} da_2 da_1 \\
		=& \int_b^1 \frac{1}{a_1} (a_1 - \sqrt{2 a_1 b - b^2} )da_1 = 1-b - \int_b^1\frac{\sqrt{2 a_1 b - b^2}}{a_1}da_1  \\
		= & 1-b -2\sqrt{(2-b)b} + \frac{\pi}{2}b - 2b\left(1 + \arctan\left(\frac{\sqrt{2-b}}{\sqrt{b}}  \right) \right)
\end{align*}
Thus the CDF of the bids under CSP is
\begin{align*}
	F(b) &= 1 - \Pm{b_{csp} > b} =b +2\sqrt{(2-b)b} - \frac{\pi}{2}b + 2b\left(1 + \arctan\left(\frac{\sqrt{2-b}}{\sqrt{b}}  \right) \right)
\end{align*}
and the PDF of the bids is
\begin{align*}
	f(b) =& 1 - \frac{\sqrt{b}}{\sqrt{2 - b}} - \frac{2 (-1 + b)}{\sqrt{(2-b) b}} + \frac{\pi}{2} - 2 \left(1 + \arctan \left( \frac{\sqrt{2 - b} }{\sqrt{b}} \right) \right)
\end{align*}

The histogram of the bids under CSP and the analogical form of the pdf are as shown in Figure \ref{fig:log_bcsp_uniform} --- it fits quite well.

\begin{figure}[htpb!]
	\centering{
		\includegraphics[width=0.45\textwidth{}]{figures_appx/log_bcsp_uniform-eps-converted-to.pdf}
	}
	\caption{Bids of CSP and the derived pdf, Uniform I}
	\label{fig:log_bcsp_uniform}
\end{figure}
\fi
\fi

\subsection{The Exponential Model} Consider the exponential model introduced in Example~\ref{ex:exp_model}, where the random value for an agent to use the resource is equal to a fixed value $w_i > 0$ minus an opportunity cost $O_i$ that is exponentially distributed  with parameter $\lambda_i$:
\begin{align*}
	V_i = \fixedV_i - O_i, \text{ where } O_i \sim \mathrm{Exp}(\lambda_i)
\end{align*}
The CDF and PDF of the random value are
\begin{align*}
	F_i(v) = \pwfun{e^{\lambda_i(v - w_i)}, & \txtfor v \leq w_i \\ 1, &\txtfor v > w_i},
	~f_i(v) = \pwfun{\lambda_i e^{\lambda_i (v - w_i)}, & \txtfor v \leq w_i \\ 0, & \txtfor v > w_i}.
\end{align*}
Note that the highest value an agent may get from using the resource is $w_i$. When charged a two part payment $(z,y)$ with $z < -w_i$, the agent is \emph{paid} a positive amount that is larger than $w_i$ for not using the resource, thus the agent never uses the resource, and gets expected utility $u_i(z) = -z - y$. When $z \geq -w_i$, the agent's expected utility as a function of the two-part payments is
\begin{align*}
	u_i(z) - y = \int_{-z}^{w_i} v \lambda_i e^{\lambda_i(v - w_i)}  dv - z e^{-\lambda_i(z + w_i)} - y = w_i  + \frac{1}{\lambda_i} \left( e^{-\lambda_i(w_i + z)} - 1 \right) - y.
\end{align*}
The DSE bids under the different mechanisms are:
\begin{enumerate}[$\bullet$]
	\item For the SP auction: $b_{i,\mathrm{SP}}^\ast = u_i(0) = w_i  + \frac{1}{\lambda_i} \left( e^{-\lambda_i w_i} - 1 \right)$.
	\item For the CSP mechanism: $b_{i,\mathrm{CSP}}^\ast = \zc_i = - w_i - \frac{1}{\lambda_i} \log(1 - w_i \lambda_i) $.
	\item For the CP$(Z)$ mechanism: $b_{i, \mathrm{CP}}^\ast = (\bone_i, \bzero_i)$, where 
\begin{itemize}
	\item if $\zc_i \geq Z$: $\bone_i = Z$ and $\bzero_i = u_i(Z) = w_i  + \frac{1}{\lambda_i} \left( e^{-\lambda_i(w_i + Z)} - 1 \right)$, and
	\item if $\zc_i < Z$: $\bone_i = \zc_i = - w_i - \frac{1}{\lambda_i} \log(1 - w_i \lambda_i)$ and $\bzero_i = 0$.
\end{itemize}
\end{enumerate}
\if 0
These are solved by looking for roots of $u(z,0)= 0$ and $u(0,y) = 0$.  Bids under other mechanisms that we discussed are:
\begin{enumerate}[$\bullet$]
	\item SP+C: $b_{\mathrm{SP+C}}^\ast = w + \frac{1}{\lambda} \left( e^{-\lambda(w + C)} - 1 \right)$.
	
	\item $\gamma$-CSP: 
	\begin{align*}
		b_{\gamma\mathrm{CSP}} &= \frac{1}{\gamma \lambda} \cdot \left[-1 + \gamma + w \lambda  - w \gamma \lambda \right] + \frac{1}{\lambda} \cdot \mathrm{W} \left( - \frac{e^{-1 + 1/\gamma - w \lambda /\gamma}(-1 + \gamma)}{\gamma}  \right)  \\
		&= \frac{(w \lambda-1)(1-\gamma)}{\gamma \lambda}  +\frac{1}{\lambda} \mathrm{W} \left( - \frac{e^{-1 + 1/\gamma - w\lambda /\gamma}(-1 + \gamma)}{\gamma}  \right) 
	\end{align*}	
\end{enumerate}
where $ \mathrm{W}(\cdot)$ is the Lambert W function (also called the product log function), which is defined as the inverse of  $z = w e^w$.
\fi
An agent's utilization as the function of the no-show penalty is of the form:
\begin{align*}
	ut_i(z) = \pwfun{0, &\txtfor z < - w_i \\
		1 - e^{-\lambda_i(w_i + z)}, & \txtfor z \geq - w_i },
\end{align*}
and the expected revenue of the mechanism is $rev_i(z,y) = z \cdot e^{-\lambda_i(w_i + z)} + y.$
Setting $u_i(z) = rev_i(z,0)$ (i.e. solving for the crossing point of the zero-profit curve and the zero-revenue curve, as illustrated in Figure~\ref{fig:irbb}(ii)), we get the highest possible penalty that a mechanism can charge an agent with exponentially type as:
\begin{align*}
	z^{FB}_i = -\frac{1}{\lambda_i} - \frac{1}{\lambda_i} \Omega \left(k, e^{-1 + w_i \lambda_i}(-1 + w_i \lambda_i) \right).
\end{align*}
Here, $\Omega(\cdot)$ is the omega function (which is also called the Lambert W function), the inverse function of $y = x e^x$. The $\Omega$ function is multivalued. We take the lower branch with $k = -1$, since when taking $k = 0$, $ \Omega \left(k, e^{-1 + w_i \lambda_i}(-1 + w_i \lambda_i) \right) = -1 + w_i\lambda_i \Rightarrow z_i^{FB} = -w$, which is not the highest possible penalty that we are looking for. Plugging $z_i^{FB}$ into the utilization function $ut_i(z)$, we can obtain the first-best utilization that is achievable by a mechanism that satisfies IR and No Deficit. 

When an agent with exponential type is charged a penalty $z \geq -w_i$, the achieved social welfare is $\sw_i(z) = w_i + (e^{-\lambda_i (w_i + z)} (1 + \lambda_i z)-1)/ \lambda_i + \socialV (1 - e^{-\lambda_i(z + w_i)})$. 
For the first best welfare, if $z^{FB}_i  \geq \socialV$, we can set 
$z = \socialV$, we achieve the first-best welfare of
\begin{align*}
	sw_i^{FB} = &  w_i + (e^{-\lambda_i (w_i + \socialV)} (1 + \lambda_i \socialV)-1)/ \lambda_i + \socialV (1 - e^{-\lambda_i(\socialV + w_i)}) \\
	= &  w_i + \socialV + (e^{-\lambda_i (w_i + \socialV)}-1)/ \lambda_i.
\end{align*}
When $z^{FB}_i < \socialV$, the first best welfare is simply $\sw_i(z_i^{FB})$.

\if 0

\hma{Again, cutting the distribution of bids}

Consider the case when O, the opportunity cost, is exponentially distributed with parameter $\lambda$ and assume that the type distribution is
\begin{align*}
	&\lambda^{-1} \sim U[0,~ \Lambda^{-1}] \\
	& w \sim U[0,~ \lambda^{-1}]
\end{align*}
For simplicity of presentation, denote $\eta = \lambda^{-1} \sim  U[0,~ \Lambda^{-1}]$. The bids under CSP would be
\begin{align*}
	b_{csp} =& - w - \frac{1}{\lambda} \log(1 - w\lambda)
	= - w - \eta \log(1 - w/\eta) \triangleq b_{csp} (\eta, w);
\end{align*}
Taking the partial derivatives, we get
\begin{align*}
	& \frac{\partial}{\partial \eta} b_{csp} (\eta, w) = -\log(1 - w/\eta) - \frac{w/\eta}{1-w/\eta} <0\\
	&\frac{\partial}{\partial w} b_{csp} (\eta, w) = -1 + \frac{1}{1 - w/\eta} > 0
\end{align*}
Thus for any $b>0$, we know that 
\begin{align*}
	&F(b)= \Pm{b_{csp} < b}	= \int_0^{\Lambda^{-1}}  \left\lbrace \int_0^{\eta}  \one{b_{csp}(\eta, w) \leq b} \frac{1}{\eta} dw \right\rbrace \Lambda d\eta \\
		=& \Lambda \int_0^{\eta}  \frac{1}{\eta} \left\lbrace \int_0^{\lambda^{-1}} \one{- w - \eta \log(1 - w/\eta ) \leq b} dw \right\rbrace d\eta 
		= \Lambda \int_0^{\Lambda^{-1}} \left(1 +  \mathrm{W} \left(-e^{-1 - b / \eta }\right) \right) d \eta 
\end{align*}
where $ \mathrm{W}(\cdot)$ is the Lambert W-function. This integration is hard to derive, however, we can write $ \mathrm{W}(\cdot)$ as an infinite series and compute the integration for each term separately.

First, note that $-b/\eta \leq 0$ for all possible $b \geq 0$ and $\eta \geq 0$, thus $-e^{-1 - b/\eta} \geq -e^{-1}$. For $-e^{-1} < x \leq 0$, $W(x)$ can be written as 
\begin{align*}
	 \mathrm{W}(x) = \sum_{n=1}^\infty \frac{(-n)^{n-1}}{n!}x^n
\end{align*}
Thus we can write the integrand as:
\begin{align*}
	&1 +  \mathrm{W}\left(-e^{-1 - b / \eta } \right) 	
	= 1 + \sum_{n=1} ^\infty \frac{(-n)^{n-1}}{n!} \left(-e^{-1 - b / \eta }\right)^n 
	= 1 -  \sum_{n=1}^\infty \frac{n^{n-1}}{n!} e^{-n - nb / \eta }
\end{align*}
For each term, $\int_0^{\Lambda^{-1}} e^{-n - nb / \eta } d \eta  = e^{-n} \left( \Lambda^{-1} e^{-bn\Lambda}  - b n \Gamma\left(0, ~ bn\Lambda \right) \right)$, where $\Gamma(a, z)$ is the upper incomplete Gamma function: $\Gamma(a,z) = \int_z^\infty t^{a-1} e^{-t}dt$.

Assume we are able to change the order of summation and integration, we get:
\begin{align*}
	&F(b) = \Lambda \int_0^{\Lambda^{-1}} \left(1 +  \mathrm{W} \left(-e^{-1 - b / \eta }\right) \right) d \eta  
	= 1 - \sum_{n=1}^\infty \frac{n^{n-1}}{n!} e^{-n} \left( e^{-bn\Lambda}  - b n \Lambda \Gamma\left(0, ~ bn\Lambda \right) \right)
\end{align*}
Taking the derivative w.r.t. $b$, and assume we can change the order of summation and derivative, we get:
\begin{align*}
	&f(b) = \frac{\partial}{\partial b} F(b) = -\sum_{n=1}^\infty \frac{n^{n-1}}{n!} e^{-n} \cdot \left(-n\Lambda e^{-bn\Lambda}  - n \Lambda \Gamma\left(0, ~ bn\Lambda \right)- bn\Lambda (-1)(bn\Lambda)^{-1} e^{-bn\Lambda} \right) \\
	=& -\sum_{n=1}^\infty \frac{n^{n-1}}{n!} e^{-n} \left( - n \Lambda \Gamma\left(0, ~ bn\Lambda \right) \right) 
	= \Lambda \sum_{n=0}^\infty \frac{n^n}{n!}e^{-n} \Gamma(0, bn\Lambda)
\end{align*}

The histogram of the bids under CSP and the analogical form of the pdf are as shown in Figure \ref{fig:log_bcsp_uniform} --- it fits quite well.

\begin{figure}[htpb!]
	\centering{
		\includegraphics[width=0.45\textwidth{}]{figures_appx/bcsp_exp-eps-converted-to.pdf}
	}
	\caption{Bids of CSP and the derived pdf, Exponential}
	\label{fig:bcsp_exp}
\end{figure}

Evaluating $F(b)$ up to some finite $n$, we would get upper bounds on $F(b)$ as:
\begin{align*}
	F^{(N)}(b) = 1 - \sum_{n=1}^N \frac{n^{n-1}}{n!} e^{-n} \left( e^{-bn\Lambda}  - b n \Lambda \Gamma\left(0, ~ bn\Lambda \right) \right)
\end{align*}

Moreover, we can get a simple Lower bound on $F(b)$. Note that $W(x)$ is strictly concave, $W(-e^{-1}) = -1$ and $W(0) = 0$, we can lower bound $W(x)$ by:
\begin{align*}
	W(x) \geq ex, ~\forall x \in [-e^{-1}, 0]
\end{align*}
Thus we can get:
\begin{align*}
	F(b) \geq 1 - e^{-b\Lambda} + b \Lambda \Gamma\left(0, b\Lambda \right)
\end{align*}

\begin{figure}
	\centering
	\includegraphics[scale=0.70]{figures_appx/bounds_CDF_expBids-eps-converted-to.pdf}
	\caption{Bounds on the CDF of the bids under exponential type.
\label{fig:bounds_CDF_expBids}}
\end{figure}
\fi

\section{Proofs} \label{appx:proofs}

\subsection{Proof of Lemma~\ref{lem:util_welfare}} \label{appx:proof_lem_util_welfare}

\lemmaExpUtilityUtilization*

\begin{proof} We first prove part (i) of the lemma.

\medskip

\noindent{}\textit{Part (i).} As is outlined in the body of the paper, what is left to prove is $\Pm{V_i \geq -z} = 1 + u_i'(z+)$. Denote $V_i^- \triangleq - \min\{V_i, 0\}$, we know that $V_i = V_i^+ - V_i^-$, and that both $V_i^+$ and $V_i^-$ are non-negative random variables. First consider the case when $z \geq 0$, for which the expected utility function can be rewritten as  
\begin{align*}
	u_i(z) = \E{\max \{ V_i, -z \}} =\E{V_i^+ - \min\{V_i^-, z\} } = \E{V_i^+} - \E{\min\{V_i^-, z\}}.
\end{align*}
Observing $\min\{V_i^-, z\} = \int_0^z \one{V_i^- \geq v}dv$, and applying Fubini's theorem, we get
\begin{align*}
	\E{\min\{V_i^-, z\}} = \E{\int_0^z \one{V_i^- \geq v}dv} = \int_0^z \E{\one{V_i^- \geq v}} dv = \int_0^z \Pm{V_i^- \geq v} dv.
\end{align*}
Therefore,
\begin{align*}
	u_i(z)  = \E{V_i^+}  - \int_0^z \Pm{V_i^- \geq v} dv =  \E{V_i^+}  - \int_0^z \Pm{-V_i^- \leq -v} dv = \E{V_i^+} - \int_0^{z} F(- v) dv.
\end{align*}
From the fundamental theorem of calculus, the right derivative is equal to the right limit of the integrand, thus
\begin{align*}
	u_i'(z+) = - \lim_{v \rightarrow z+} F(-v) = - \lim_{v \rightarrow (-z)-}F(v) = - \Pm{V_i < -z} = \Pm{V_i \geq -z} - 1.
\end{align*}
%
We now consider the case where $z < 0$. The expected utility $u_i(z)$ can be rewritten as
\begin{align*}
	u_i(z) = &  \E{\max \{V_i, -z \}} 
	=  \int_0^\infty \Pm{\max \{V_i, -z \} \geq v} dv = -z + \int_{-z}^\infty \Pm{V_i \geq v}dv.
\end{align*}
Taking the right derivative, we again get $u_i'(z+) = -1 + \Pm{V_i \geq -z}$.

\if 0
Since $v$ is continuous, from integration by parts, we can rewrite \eqref{equ:exp_util_z} as 
\begin{align*}
	u_i(z) &= \E{V_i^+}  + \E{V_i\cdot \one{-z \leq V_i \leq 0}} - z  \Pm{V < -z}  \\
			&= \E{V_i^+} + \int_{-z}^0 v dF(v) - z  \Pm{V < -z}   \\
			&= \E{V_i^+} + \lim_{v \rightarrow 0+} v F(v) - \lim_{v \rightarrow (-z)-} vF(v) 
			- \int_{-z}^0 F(v) dv - z \Pm{V < -z}   \\ 
			& = \E{V_i^+} + z \Pm{V_i < -z} 
			- \int_{-z}^0 F(v) dv - z  \Pm{V < -z}   \\
			&= \E{V_i^+} - \int_0^{z} F(- v) dv.
\end{align*}
\fi

\medskip

\noindent{}\textit{Part (ii).} 
For any $z_1, z_2$ such that $ z_1 \leq z_2 \leq \socialV$, $-z_2 \leq V_i < -z_1 \Rightarrow V_i  +\socialV \geq 0$, thus we know 
\begin{align*}
	\sw_i(z_1) - \sw_i(z_2) =&  \E{(V_i +\socialV) \cdot \one{V_i \geq -z_1}} - \E{(V_i+\socialV) \cdot \one{V_i \geq -z_2}} \\
 	=& - \E{(V_i+\socialV) \cdot \one{-z_2 \leq V_i < -z_1}} \leq 0.
\end{align*}
Therefore $\sw(z_2) \geq \sw(z_1)$, and $\sw_i(z)$ is monotonically non-decreasing in this range. Similarly, for any $z_1, z_2$ such that $\socialV \leq z_1 \leq z_2$, $-z_2 \leq V_i < -z_1 \Rightarrow V_i + \socialV \leq 0$, thus
\begin{align*}
	\sw_i(z_1) - \sw_i(z_2) =&  \E{(V_i+\socialV) \cdot \one{V_i \geq -z_1}} - \E{(V_i+\socialV) \cdot \one{V_i \geq -z_2}} \\
 	=& - \E{(V_i+\socialV) \cdot \one{-z_2 \leq V_i < -z_1}} \geq 0.
\end{align*}
This completes the proof of this lemma.
\end{proof}

\subsection{Proof of Lemma~\ref{lem:crossing_utilities} } \label{appx:proof_lem_crossing_utilities}

\lemCrossingUtils*

\begin{proof} For part (i), we know from part (i) of Lemma~\ref{lem:util_welfare} that $u_i'(z+) = \ut_i(z) - 1$. Therefore:
\begin{align*}
	&u_1(z_2) - u_1(z_1) \leq u_2(z_2) - u_2(z_1)  \\
		\Rightarrow & \int_{z_1}^{z_2} (\ut_1(v)  - 1) dv \leq \int_{z_1}^{z_2} (\ut_2(v)  - 1) dv \\
		\Rightarrow & \int_{z_1}^{z_2} \ut_1(v) dv \leq \int_{z_1}^{z_2}  \ut_2(v) dv
\end{align*}
Since $\ut_1(z)$ and $\ut_2(z)$ are both non-negative and monotonically non-decreasing in $z$, we have:
\begin{equation*}
	(z_2 - z_1) \cdot \ut_1(z_1) \leq \int_{z_1}^{z_2} \ut_1(v) dv \leq \int_{z_1}^{z_2} \ut_2(v) dv \leq (z_2 - z_1) \cdot \ut_2(z_2).
\end{equation*}
Which implies $ \ut_2(z_2) \geq \ut_1(z_1)$.
For part (ii), first observe that 
\begin{align*}
	\sw_i(z) = \E{(V_i + \socialV)\one{V_i \geq -z}} = 
	u_i(z) + z(1 - \ut_i(z)) +  \socialV \ut_i(z).
\end{align*}
When $z_1 \leq z_2 \leq \socialV$, given $u_2(z_2) \geq u_1(z_2)$, $\ut_i(z) = u_i'(z+) + 1$ and the convexity of $u_1(z)$, 
\begin{align*}
	& \sw_2(z_2) - \sw_1(z_1)\\
	= & u_2(z_2) + z_2(1 - \ut_2(z_2)) +  \socialV \ut_2(z_2) - \left( u_1(z_1) + z_1(1 - \ut_1(z_1)) +  \socialV \ut_1(z_1)\right) \\
	\geq & u_1(z_2) - u_1(z_1) + z_2 - z_1  - z_2 \ut_2(z_2) + z_1 \ut_1(z_1) + \socialV(\ut_2(z_2) - \ut_1(z_1)) \\
	= & \int_{z_1}^{z_2} (ut_1(z) - 1) dz + z_2 - z_1  - z_2 \ut_2(z_2) + z_1 \ut_1(z_1) + \socialV(\ut_2(z_2) - \ut_1(z_1)) \\
	\geq & \int_{z_1}^{z_2} ut_1(z_1) dz - z_2 \ut_2(z_2) + z_1 \ut_1(z_1) + \socialV(\ut_2(z_2) - \ut_1(z_1)) \\
	= & ut_1(z_1)(z_2 - z_1) - z_2 \ut_2(z_2) + z_1 \ut_1(z_1) + \socialV(\ut_2(z_2) - \ut_1(z_1)) \\
	= & (\socialV - z_2) (\ut_2(z_2) - \ut_1(z_1)) \geq 0.
\end{align*}
When $\socialV \leq z_1 \leq z_2$, we have:
\begin{align*}
	& \sw_2(z_2) - \sw_1(z_1)  \\
	= & u_2(z_2) + z_2(1 - \ut_2(z_2)) +  \socialV \ut_2(z_2) - \left( u_1(z_1) + z_1(1 - \ut_1(z_1)) +  \socialV \ut_1(z_1)\right) \\
	\leq & u_2(z_2) - u_2(z_1) + z_2 - z_1  - z_2 \ut_2(z_2) + z_1 \ut_1(z_1) + \socialV(\ut_2(z_2) - \ut_1(z_1)) \\
	= & \int_{z_1}^{z_2} (ut_2(z) - 1) dz + z_2 - z_1  - z_2 \ut_2(z_2) + z_1 \ut_1(z_1) + \socialV(\ut_2(z_2) - \ut_1(z_1)) \\
	\leq & \int_{z_1}^{z_2} ut_2(z_2) dz - z_2 \ut_2(z_2) + z_1 \ut_1(z_1) + \socialV(\ut_2(z_2) - \ut_1(z_1)) \\
	= & ut_2(z_2)(z_2 - z_1) - z_2 \ut_2(z_2) + z_1 \ut_1(z_1) + \socialV(\ut_2(z_2) - \ut_1(z_1)) \\
	= & (\socialV - z_1) (\ut_2(z_2) - \ut_1(z_1)) \leq 0.
\end{align*}
This completes the proof of this lemma. 
\end{proof}

\subsection{The First Best Social Welfare} \label{appx:proof_fb_sw}

\begin{proposition}[First Best Welfare] \label{prop:fb_sw} For any agent $i$ with type $\CDF_i$ satisfying (A1)-(A2), the first best social welfare subject to IR and ND constraints is either achieved by charging a penalty $z = \socialV$, or by charging the highest penalty within the intersection of her IR and ND ranges.
\end{proposition}

\begin{proof}
From the monotonicity properties proved in Lemma~\ref{lem:util_welfare}, we know that if there exists $y \in \setR$ s.t. $u_i(z) - y \geq 0$ and $\rev_i(z,y) \geq 0$, meaning that the intersection of the agent's IR and ND ranges include $(\socialV, y)$ for some $y \in \setR$, then the first best social welfare is achieved by charging the agent such a two-part payment where the penalty is $z = \socialV$. What is left to consider is the case where the intersection of the IR and ND range does not include penalty $z = \socialV$. We show that if the intersection of the IR and ND ranges includes some $(z,y)$, then for any $z' \in [0, z]$, there exists $y' \in \setR$ s.t. $(z', y')$ is also in the intersection of the IR and ND ranges. This implies that when $z = \socialV$ is not included, then the highest achievable welfare must be achieved by the highest penalty in the intersection. Note that given $u_i(z) - y \geq 0$ and $\rev_i(z,y) \geq 0$, we have
\begin{align*}
	u_i(z) - y + \rev_i(z,y) = \E{V_i \one{V_i \geq -z}}  \geq 0.
\end{align*}
From the monotonicity of $\E{V_i \one{V_i \geq -z}}$, we know that for any $z' \in [0, z]$, $u_i(z') - y' + \rev_i(z',y') = \E{V_i \one{V_i \geq -z'}}  \geq 0$ holds for any $y' \in \setR$, and as a result, it is possible to find $y'$ s.t. $u_i(z') - y' \geq 0$ and $\rev_i(z',y') \geq 0$ both hold. 
\end{proof}

\subsection{Proof of Lemma~\ref{lem:lem_P1P5_characterization}} \label{appx:proof_P1P5}

Before proving the lemma, we first provide a few useful results. 
The first result proves some more properties of the expected utility function $u_i(z)$: for any type that satisfies (A1) and (A2), $u_i(z)$ always resides above $-z$, and converges to $-z$ as $z \rightarrow -\infty$. 

\begin{lemma}
\label{lem:exp_u_appx} Assuming (A1) and (A2), the expected utility $u_i(z)$ as a function of penalty $z$ satisfies:

\begin{enumerate}[(i)]
	\setlength\itemsep{0em}
	\item $u_i(z) + z$ is non-negative, and is monotonically non-decreasing in $z$.
	\item $\lim_{z \rightarrow -\infty} u_i(z) + z = 0	$.
\end{enumerate}
\end{lemma}

\begin{proof} 

Define $g_i(z) \triangleq u_i(z) + z$, we know that $g_i(z)$ is continuous and convex. For part (i), the non-negativity $g_i(z) \geq 0$ holds, since $u_i(z) = \E{\max \{V_i,-z\}} \geq -z$. The monotonicity holds, since Lemma~\ref{lem:util_welfare} implies that the right derivative of $g_i(z)$ is given by $g_i'(z+) = u_i'(z+) + 1 = ut_i(z) = \Pm{V_i \geq -z}$, which is non-negative for all $z \in \setR$. Part (ii) can now be rewritten as $\lim_{z \rightarrow -\infty} g_i(z)  = 0$. Given the non-negativity and the monotonicity of $g_i(z)$, we know that if part (ii) does not hold, then there exists some $\eps > 0$, s.t. $g_i(z) \geq \eps$ for all $z \in \setR$. We show that this results in a contradiction.

First, note that for any $z < 0$, $g_i(z) = \E{ \max \{V_i, -z\} } + z = \E{ \max \{V_i^+, -z \} } + z$.  Given assumption (A2), we know that $\E{V_i^+}$ is finite. This implies that $\lim_{z \rightarrow -\infty} \E{\min\{ V_i^+, -z\}} = \E{V_i^+}$ (from monotone convergence theorem), and that for the positive constant $\eps/2 > 0$, there exists a large constant $\largeCnst \in \setR_{\geq 0}$ s.t. $\forall z < -\largeCnst$, $\E{\min\{ V_i^+, -z\}} \geq \E{V_i^+} - \eps/2$. This is a contradiction, since for any $z < -\largeCnst$, we have $\E{V_i^+} = \E{\max \{ V_i^+, -z\}} + \E{\min \{ V_i^+, -z\}} - (-z)  = u_i(z) + z + \E{\min \{ V_i^+, -z\}} \geq \eps + \E{V_i^+} - \eps/2 = \E{V_i^+} + \eps/2 \Rightarrow \eps \leq 0$. 

This completes the proof of this lemma. 
\end{proof}

The following lemma shows that there exist valid agent types satisfying (A1)-(A2) that correspond to expected utility functions with certain properties.

\begin{lemma} \label{lem:u_to_F} For any function $u:\setR \rightarrow \setR$ that is continuous, convex, monotonically decreasing, and satisfies $\lim_{z \rightarrow -\infty} u(z) + z = 0$ and $u(0) > 0$, there exists a random variable $V$ with CDF $\CDF$ such that $\CDF$ satisfies (A1) and (A2), and that $u(z) = \E{\max\{V, -z\}}$ for all $z \in \setR$. Moreover, $\CDF$ also satisfies (A3) if $\lim_{z \rightarrow +\infty}u(z) < 0$. 

\end{lemma}

\begin{proof} We prove this lemma by constructing an agent type $\CDF$ from $u(z)$, and showing that  it has the desired properties. Given that $u$ is convex, we know that it is semi-differentiable. Let  $\CDF$ be defined as $\CDF(v) = - u'((-v)-)$, i.e. the negation of the left derivative of $u$ at $-v$. We show that $\CDF$ is a valid CDF that satisfies (A1), (A2), and for a random variable $V$ with distribution $\CDF$, $u(z) = \E{\max\{V, -z\}}$ holds for all $z \in \setR$.

We first show that $\CDF$ is a valid cumulative distribution function, i.e. $\CDF(v)$ is monotonically increasing, right-continuous, lower bounded by $0$ and upper bounded by $1$. The monotonicity is obvious, since $u$ is convex, thus $u'(v-)$ is non-decreasing in $v$, thus $-u'((-v)'-)$ is also non-decreasing in $v$. The right continuity of $\CDF$ is implied by the left-continuity of $u'(v-)$, and $u'(v-)$ is left-continuous since the left derivative of convex function is monotone and cannot have jump discontinuities. $u$ being non-increasing implies that $u'(v) \leq 0$ for all $v \in \setR$ thus $F(v) \geq 0$ for all $v \in \setR$. Finally, if $F(v) > 1$ for some $v \in \setR$, we know that $u'((-v)-) < -1$ holds for some $v$, and as a result of the convexity of $u$, $u(z) + z$ must diverge as $z \rightarrow -\infty$, which contradicts the assumption that $\lim_{z \rightarrow -\infty} u(z) + z = 0$. 

We now show that the type $F$ satisfies (A1) and (A2), i.e. $\E{V^+} > 0$ and $\E{V^+} < +\infty$. Since $V^+$ is a non-negative random variable where $\Pm{V^+ \leq v} = F(v)$ for all $v \geq 0$, we have
\begin{align*}
	\E{V^+} =& \int_0^{+\infty} (1- \Pm{V^+ \leq v}) dv =  \int_0^{+\infty} (1- F(v)) dv = \int_0^{+\infty} (1 + u'((-v)-)) dv 
\end{align*}
Define $g(v) \triangleq u(-v)$ for all $v \in \setR$, we know that $g'(v+) = -u'((-v)-)$, therefore
\begin{align*}
	&\int_0^{+\infty} (1 + u'((-v)-)) dv  = \lim_{z \rightarrow + \infty} \int_{0}^z (1 + u'((-v)-))dv  
	= \lim_{z \rightarrow + \infty} \int_{0}^z (1 -g'(v+) )dv \\  
	= & \lim_{z \rightarrow +\infty}  z - g(z) + g(0)  
	=  \lim_{z \rightarrow +\infty} z - u(-z) + u(0) = u(0) - \lim_{z \rightarrow -\infty} (u(z)+z) = u(0),
\end{align*}
which implies that $\E{V^+} = u(0) > 0$ and is finite. What is left to prove is $u(z) = \E{\max\{V, -z\}}$. This is obvious, since $u(z)$ as a function has the same right derivative as the right derivative of $\E{\max\{V, -z\}}$ w.r.t. $z$, and we had just shown above that the two functions coincide at $z = 0$.

Now assume that $\lim_{z \rightarrow +\infty}u(z) < 0$. We know that there exists $z \in \setR$ s.t. $u(z) < 0$. As a result, $\E{V}  \leq \E{\max\{V, -z\}} = u(z) < 0 \Rightarrow \E{V} < 0$, thus (A3) is satisfied. 
This completes the proof of this lemma.
\end{proof}

We now formally define first order stochastic dominance and positive responsiveness.  

\begin{definition}[First order stochastic dominance] \label{defn:fosd} Let $\CDF_1$ and $\CDF_2$ be two value distributions. $\CDF_1$ first-order stochastic dominates $\CDF_2$, which we denote $\CDF_1 \succeq_{FOSD} \CDF_2$, if $\CDF_1(v) \leq \CDF_2(v)$ for all $v \in \setR$ and $\CDF_1(v) < \CDF_2(v)$ for some $v \in \setR$. The dominance is strict if $\CDF_1(v) < \CDF_2(v)$ for all $ v \in \setR$ s.t. $\CDF_1(v) < 1$ and $\CDF_2(v) > 0$, in which case we denote $\CDF_1 \succ_{FOSD} \CDF_2$.
\end{definition}

\begin{definition}[Positive Responsiveness] \label{defn:PR} A two-period mechanism with DSE $\report^\ast$ is \emph{positively responsive} (PR) if for any economy $\CDF$, 
and each agent $i$, agent $i$ getting allocated with probability $\alloc \in (0,1)$ under the DSE implies that if her type was $\CDF'_i$ s.t. $\CDF'_i \succ_{FOSD} \CDF_i$, then reporting $\report^\ast(\CDF'_{i})$,  she gets allocated with probability one.
\end{definition}

Intuitively, the PR condition requires that if an agent is allocated with some positive probability that is less than one, then if she instead has a strictly ``higher" type in terms of FOSD, then she will get allocated with probability under the DSE of the mechanism.

The following lemma proves that when an agent is allocated with probability less than one, then she cannot get strictly positive utilities. 

\begin{lemma} \label{lem:zero_utility_when_tied} Assume that the type space includes all value distributions satisfying (A1)-(A3). Under any two-period mechanism that satisfies (P1)-(P6) and under the DSE, agents who are allocated with probability less than one get zero expected utilities. 
This result still holds if the type space is the set of all $(\fixedV_i, \fixedP_i)$ types. 
\end{lemma}

\begin{proof} 
Given assumption (P2) IR and (P6) no payment to unassigned agents, we know that agents who gets assigned with probability zero gets zero expected utilities. 
Now assume toward a contradiction, that there exists a mechanism that satisfies (P1)-(P6) and an economy $\CDF$, where there exists an agent $i \in N$ who gets positive expected utility, but is only allocated with probability $\alloc_i \in (0,1)$. In the events that she is allocated, denote $z_i \triangleq \tone_i(\report^\ast(\CDF))$ as her penalty and $y_i \triangleq \tzero_i(\report^\ast(\CDF))$ as her base payment. Agent $i$ getting positive utility requires that $\alloc_i(u_i(z_i) - y_i)  = \pi$ for some $\pi > 0$, i.e. payment $(z_i,y_i)$ resides on the curve $u_i(z_i) - \pi/\alloc_i$. See Figure~\ref{fig:dominated_general_v}. %
We first show a contradiction for general types that satisfy (A1)-(A3), then for the $(\fixedV_i, \fixedP_i)$ types.

\begin{figure}[t!]
\centering 
\subfloat[\small{General $V_i$}]{\label{fig:dominated_general_v}
\begin{tikzpicture}[scale = 1.2][font=\small]
\draw[->] 	(-2.5,0) -- (3.2, 0) node[anchor=north] {$z$};
\draw[->] 	(0,-0.8) -- (0, 2.6) node[anchor=west] {$u_i(z)$};

\draw[dotted] (-2.5, 2.5) -- (0.8, -0.8);

\draw[-, name path = uA] (-2.25, 2.5) to [out=-35, in = 172] (2.8, 0.8);
\draw[dashdotted, name path = uA] (-2.25, 1.7) to [out=-35, in = 172] (2.8, -0.01);
\draw[dashed] (-2.25, 2.25)--	(-1.7, 1.7)  to [out=-32, in = 172] (2.8, 0.13);

\draw [decorate,decoration={brace,amplitude=4pt},xshift=0cm,yshift=0pt]
      (-1.7,2.11) -- (-1.7,1.7) node [midway,right,xshift=-0.05cm,yshift=-0.05cm] {{ \scriptsize 
      $\eps$}};

\draw [decorate,decoration={brace,amplitude=4pt},xshift=0cm,yshift=0pt]
      (1,1.1) -- (1,0.5) node [midway,right,xshift=-0.05cm,yshift=-0.1cm] {{ \scriptsize  $\alpha \frac{\pi}{\alloc_i}$}};

\filldraw [black] (1, 0.29) circle (1pt);
\draw[loosely dotted](1, 1.3) -- (1,-0.1);
\draw (1, 0) node[anchor=north] {$z_i$};
\draw[loosely dotted] (0, 0.27) -- (1, 0.27);
\draw (0, 0.27) node[anchor=east] {$y_i$};

\draw[-] (1.5, 2.4) -- (1.8, 2.4) node[anchor=west] {{ $u_i(z)$}};
\draw[dashed] (1.5, 2) -- (1.8, 2) node[anchor=west] {{$u_{i'}(z)$}};
\draw [dashdotted](1.5, 1.6) -- (1.8, 1.6) node[anchor=west] {{  $u_i(z) \hspace{-0.2em} - \hspace{-0.2em} \frac{\pi}{\alloc_i}$}};
\draw [ dotted](1.5, 1.2) -- (1.8, 1.2) node[anchor=west] {{ $y = -z$}};

\draw[loosely dotted](-1.7, 2.3) -- (-1.7,-0.1);
\draw (-1.7, 0) node[anchor=north] {$\tau$};

\end{tikzpicture}
}
\hspace{1em}
\subfloat[$(\fixedV_i,\fixedP_i)$ type]{\label{fig:dominated_wipi}
\begin{tikzpicture}[scale = 1.2][font=\small]
\draw[->] 	(-2.5,0) -- (3.2, 0) node[anchor=north] {$z$};
\draw[->] 	(0,-0.8) -- (0, 2.6) node[anchor=west] {$u_i(z)$};

\draw[dotted] (-2.5, 2.5) -- (0.8, -0.8);

\draw[-] 	(-2.3, 2.3) -- (-2.0, 2.0) -- (3, 0.5);
\draw[dashed] (-2.3, 2.3)--	(-1.3, 1.3) -- (3, 0.5 - 0.8); 
\draw[dashdotted] (-2.3, 2.3-1) -- (-2.0, 2.0-1) -- (3, 0.5-1);

\draw [decorate,decoration={brace,amplitude=4pt},xshift=0cm,yshift=0pt]
      (-1.3,1.8) -- (-1.3,1.3) node [midway,right,xshift=.05cm,yshift=.-0.03cm] {$\eps$};

\draw [decorate,decoration={brace,amplitude=4pt},xshift=0cm,yshift=0pt]
      (0.4,1.25) -- (0.4,0.7) node [midway,right,xshift=-0.05cm,yshift=-0.1cm] {{ \scriptsize  $\alpha \frac{\pi}{\alloc_i}$}};

\draw[-] (1.5, 2.4) -- (1.8, 2.4) node[anchor=west] {{ $u_i(z)$}};
\draw[dashed] (1.5, 2) -- (1.8, 2) node[anchor=west] {{$u_{i'}(z)$}};
\draw [dashdotted](1.5, 1.6) -- (1.8, 1.6) node[anchor=west] {{  $u_i(z) \hspace{-0.2em} - \hspace{-0.2em} \frac{\pi}{\alloc_i}$}};
\draw [ dotted](1.5, 1.2) -- (1.8, 1.2) node[anchor=west] {{ $y = -z$}};

\draw[loosely dotted](-2, 2.2) -- (-2,-0.1);
\draw (-2, 0) node[anchor=north] {$\fixedV_i$};

\draw[loosely dotted](-1.3, 2) -- (-1.3,-0.1);
\draw (-1.3, 0) node[anchor=north] {$\tau$};

\filldraw [black] (0.4, 0.27) circle (1pt);
\draw[loosely dotted](0.4, 1.5) -- (0.4,-0.1);
\draw (0.4, 0) node[anchor=north] {$z_i$};
\draw[loosely dotted] (0, 0.27) -- (0.5, 0.27);
\draw (0, 0.27) node[anchor=east] {$y_i$};

\end{tikzpicture}
}
\caption{Illustrations for the proof of lemma~\ref{lem:zero_utility_when_tied}.
\label{fig:dominated_rv}} 
\end{figure}
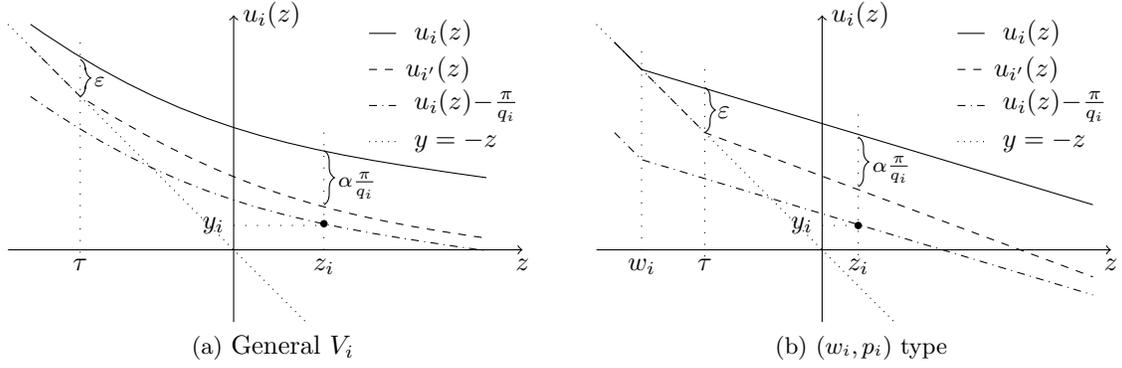

\bigskip

\noindent{}\emph{Step 1. General types under (A1)-(A3).} We first construct the expected utility function $u_{i'}(z)$ as shown in the dashed line in Figure~\ref{fig:dominated_general_v}. First, let $\eps$ be defined as $\eps \triangleq \min \left\lbrace \frac{u_i(0)}{2}, ~\frac{\pi}{2\alloc_i} \right\rbrace$, we know $\eps > 0$. Moreover, let $\beta$ be defined as $\beta \triangleq \eps \alloc_i/\pi$ (therefore $\eps = \beta \pi/\alloc_i$ holds), we know that there are two cases: when $u_i(0) \geq \pi/\alloc_i$ (as shown in Figure~\ref{fig:dominated_general_v}), we have $\beta = 1/2$, and when $u_i(0) < \pi/\alloc_i$, in which case $\beta < 1/2$. Now, define $\tau$ as the point where $u_i(z) - \eps$ and the $y = -z$ cross each other, i.e.  
\begin{align}
	\tau \triangleq \inf \{ z \in \setR~|~ u_i(z) \geq -z + \eps \}, \label{equ:infimum_piqi}
\end{align}
we show that $\tau$ exists and is finite. 
We know from part (ii) of Lemma~\ref{lem:exp_u_appx} that the set $\{ z \in \setR~|~ u_i(z) \geq -z + \eps \}$ is bounded from below, since $u_i(z) + z$ approaches zero as $z \rightarrow -\infty$. We also know that $\{ z \in \setR~|~ u_i(z) \geq -z + \eps \}$ is not empty, since $\eps \leq u_i(0)/2$, therefore $u_i(0) > -0 + \eps$, which implies $0 \in \{ z \in \setR~|~ u_i(z) \geq -z + \eps \}$. Therefore, the infimum $\tau$ is finite, and we know from the monotonicity and continuity of $u_i(z) + z$ that $u_i(\tau) + \tau = \eps$, and that $\tau < 0$.  Moreover, $u_i'(\tau+) > -1$ must hold, since otherwise we must have $u_i(\tau) = -\tau$ in order not to violate Lemma~\ref{lem:exp_u_appx}. 
Now we need to consider two cases, depending on whether $\tau < z_i$ holds. 

\medskip

\noindent{}\emph{Case 1: $\tau < z_i$.}
We first consider the case where $\tau < z_i$, as shown in Figure~\ref{fig:dominated_general_v}. Define $\alpha$ as
\begin{align*}
	\alpha \triangleq \min \left\lbrace \frac{3}{4},~ \left(\frac{1}{2} + \frac{1}{2\alloc_i} \right) \beta \right\rbrace,
\end{align*}
we know $\alpha > \beta$ and $\alpha < \beta/\alloc_i$, since $1/\alloc_i > 1$. 
Moreover, define $\delta$ as
\begin{align*}
	\delta = \min\{\delta_1,~\delta_2,~\delta_3 \}, \text{ where } \delta_1 = -\frac{u_i(0)}{4\tau},~ \delta_2 = \frac{ (\alpha - \beta) \pi/\alloc_i}{z_i - \tau},\txtand \delta_3 = u_i'(\tau+) + 1,
\end{align*}
we know $\delta$ is strictly positive. 
Now, define the utility function $u_{i'}(z)$ as the following:
\begin{align}
	u_{i'}(z) = \pwfun{-z, & \txtif z < \tau \\
		u_i(z) - \eps - \delta(z - \tau), & \txtif z \geq \tau}, \label{equ:constructed_uprime}
\end{align}
we know that
\begin{align*}
	u_{i'}(0) = u_i(0) - \eps + \delta \tau  \geq u_i(0) - u_i(0)/2  - u_i(0)/4 = u_i(0)/4 > 0. 
\end{align*}
Moreover, it is easy to check that $u_{i'}(z)$ is continuous, convex, monotonically decreasing, and satisfies $\lim_{z \rightarrow -\infty} u_{i'}(z) + z = 0$. Lemma~\ref{lem:u_to_F} then implies that $u_{i'}(z)$ is the expected utility of some agent whose type satisfies (A1) and (A2). Let's denote the CDF as $\CDF_{i'}$ and the random value of this agent as $V_{i'}$. Assuming that (A3) is satisfied by $\CDF_i$, we know $\lim_{z \rightarrow \infty} u_{i'}(z) <  \lim_{z \rightarrow \infty} u_{i}(z) < 0$, thus $\CDF_{i'}$ also satisfies (A3). It is also easy to show that $\CDF_i \succ_{FOSC} \CDF_{i'}$, since for all $v > -\tau$, $\CDF_{i'}(v) = -u'_{i'}((-v)-) = 1$, which is strictly greater than $\CDF_{i}(v)$ if $\CDF_i(v) < 1$, and for all $v \leq -\tau$, we have $\CDF_{i'}(v) = -u'_{i'}((-v)-) = - (u'_{i}((-v)-) - \delta) = \CDF_{i}(v) + \delta > \CDF_{i}(v)$. 

Given the construction of $u_{i'}(z)$, we can also check that 
\begin{align*}
	u_{i'}(z_i) - y_i = u_i(z_i)- y_i - \eps - \delta(z_i - \tau) \geq 
	 \frac{\pi}{\alloc_i} - \beta \frac{\pi}{\alloc_i} - (\alpha - \beta) \frac{\pi}{\alloc_i} = (1 - \alpha) \frac{\pi}{\alloc_i}  > 0.
	%
\end{align*}
As a result, in the economy where agent $i$ is replaced by agent $i'$ with type $\CDF_{i'}$, if agent $i'$ pretends that her type is actually $\CDF_i$ and reports $\report_i^\ast(\CDF_{i})$, we know that she is allocated with probability $\alloc_i$, charged $z_i$ as penalty and $y_i$ as her base payment if allocated, thus her expected utility would be
\begin{align*}
	\alloc_i (u_{i'}(z_i) - y_i ) = (1-\alpha) \pi > 0.
\end{align*}
Therefore, in the economy $(\CDF_{i'}, \CDF_{-i})$, agent $i'$ must get expected utility at least $(1-\alpha) \pi$ --- otherwise, pretending to have type $\CDF_{i}$ would be a useful deviation. We also know that agent $i'$ cannot be tied with any other agent, since if she is tied, i.e. getting allocated with some positive probability that's smaller than one, then positive responsiveness requires that agent $i$ to be allocated with probability $1$ in the economy $(\CDF_i, \CDF_{-i})$, which contradicts the assumption that agent $i$ is tied with some other agent. Let $z_{i'} \triangleq \tone_{i'}(\CDF_{i'}, \CDF_{-i})$ and $y_{i'} \triangleq \tzero_{i'}(\CDF_{i'}, \CDF_{-i})$, we know that $u_{i'}(z_{i'}) \geq y_{i'} + (1-\alpha) \pi$ holds.

Now consider two cases: (i) $z_{i'} < \tau$, and (ii) $z_{i'} \geq \tau$. When $z_{i'} < \tau$, we know that agent $i'$ never uses the resource (since $\Pm{V_{i'} \geq -z_{i'}} = 0$). As a result, the agent getting expected utility at least $(1-\alpha) \pi$ means that in expectation the mechanism pays the agent $(1-\alpha) \pi > 0$, which violates ND. For case (ii), we know by construction that $u_{i}(z_{i'}) \geq u_{i'}(z_{i'}) + \eps$ holds, therefore, in the original economy, if agent $i$ pretends that her type is in fact $\CDF_{i'}$, she gets expected utility:
\begin{align*}
	u_{i}(z_{i'}) - y_{i'} \geq u_{i'}(z_{i'}) + \eps - y_{i'} \geq (1-\alpha) \pi + \beta \frac{\pi}{\alloc_i}  > \pi.
\end{align*}
The last inequality holds since $\alpha < \beta/\alloc_i $ by construction. This means that pretending to be of type $\CDF_{i'}$ is a useful deviation for agent $i$, therefore violating (P1) DSE. Since neither of (i) or (ii) can hold, this is a contradiction, and completes the proof for Case 1 $\tau < z_i$.

\medskip

\noindent{}\emph{Case 2: $\tau \geq z_i$.}
For the case where $\tau \geq z_i$, define $u_{i'}(z)$ in the same way as in \eqref{equ:constructed_uprime}, but for $\delta \triangleq (u_i'(\tau+) + 1)/2$. We can similarly show that $u_{i'}(z)$ as constructed is the utility function corresponding to a valid agent type $\CDF_{i'}$ satisfying (A1)-(A3) which is strictly first-order stochastic dominated by $\CDF_i$. 
Note that $u_{i'}(z) \geq u_i(z) - \eps$ holds for all $z \leq \tau$. 
In the original economy, replacing $i$ with $i'$, we know that if agent $i'$ pretends to have type $\CDF_i$, her expected utility is at least $\alloc_i (u_i'(z_i) - y_i) \geq \alloc_i(u_i(z_i)- \eps - y_i) \geq \pi - \eps \alloc_i \geq  \pi/2 > 0$.  Therefore, in the economy $(\CDF_{i'}, \CDF_{-i})$, agent $i'$ must be allocated the resource with probability one and get utility at least $ \pi - \eps \alloc_i$, in order not to violate DSE and positive-responsiveness. Moreover, similar to the above case, the payment she is charged $(z_{i'}, y_{i'})$ has to satisfy $z_{i'} \geq \tau$ in order not to violate ND. Now, if agent $i$ gets the outcome of agent $i'$ instead, her utility will be
\begin{align*}
	u_i(z_{i'}) - y_{i'} \geq u_{i'}(z_{i'})+ \eps - y_{i'} \geq \pi - \alloc_i \eps + \eps > \pi,
\end{align*}
and this is a useful deviation for agent $i$. This completes the proof of Step 1.

\bigskip

\noindent{}\emph{Step 2. $(\fixedV_i,\fixedP_i)$ types.} 
To complete the proof of this theorem, what is left to show is that when agent $i$'s type follows the $(\fixedV_i, \fixedP_i)$ model, then the constructed type $\CDF_{i'}$ in the two cases in Step 1 also follows the $(\fixedV_i, \fixedP_i)$ model. This obvious, since assuming that $\Pm{V_i = \fixedV_i} = \fixedP_i$ and $\Pm{V_i = -\infty} = 1 - \fixedP_i$, the random variable $V_{i'}$ corresponding to the expected utility function $u_{i'}(z)$ (see Figure~\ref{fig:dominated_wipi}) satisfies $V_{i'} = -\tau$ with probability $\fixedP_i - \delta > 0$, and $V_{i'} = -\infty$ with probability $1 - \fixedP_i + \delta$. 
\end{proof}

\medskip

The last lemma characterizes the range of admissible two-part payments for any two-period mechanism that satisfies (P1)-(P6).

\begin{lemma}[Admissible payments] \label{lem:admissible_payments}
Assume the type space includes all $(\fixedV_i, \fixedP_i)$ types, and fix any two-period mechanism that satisfies (P1)-(P6). For any type profile $\CDF$, the two-part payment $(z^\ast, y^\ast)$ that the allocated agent $\winner$ is charged satisfy: (i) $y^\ast \geq 0$, and (ii) $y^\ast \geq -z^\ast$. 
\if 0
\begin{enumerate}[(i)]
	\item $y^\ast \geq 0$.
	\item $y^\ast \geq -z^\ast$.
\end{enumerate}
\fi
\end{lemma}

\begin{proof} Consider an agent $\winner$ who is allocated with probability $\alloc_\winner > 0$, and is charged  $(z^\ast, y^\ast)$ when allocated. Agent $\winner$'s expected utility is $\alloc_\winner (u_\winner(z^\ast) - y^\ast)$. 

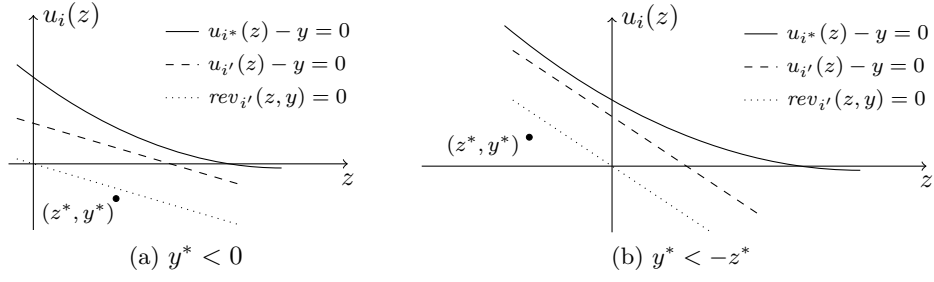
\begin{figure}[t!]
\centering 
\subfloat[\small{$y^\ast < 0$}]{\label{fig:negative_y}
\begin{tikzpicture}[scale = 1.1][font=\small]
\draw[->] 	(-0.3,0) -- (3.8,0) node[anchor=north] {$z$};
\draw[->] 	(0,-0.8) -- (0,1.8) node[anchor=west] {$u_i(z)$};

\draw[-] 	(-0.2, 1.2) parabola[bend at end] (3, -0.05); 
\draw[dashed] (-0.2, 0.55) -- (2.5, -0.25);
\draw[dotted] (-0.2, 0.06) -- (2.5, -0.25-0.49);

\filldraw [black] (1, -0.42) circle (1pt);
\draw (1.1, -0.6) node[anchor = east]{{\scriptsize $(z^\ast,y^\ast)$}};

\draw[-] (1.6, 1.6) -- (2, 1.6) node[anchor=west] {{\scriptsize $u_{i^\ast}(z) -y = 0$}};
\draw[dashed] (1.6, 1.2) -- (2, 1.2) node[anchor=west] { {\scriptsize $u_{i'}(z) -y = 0$}};
\draw [dotted](1.6, 0.8) -- (2, 0.8) node[anchor=west] {{\scriptsize $\rev_{i'}(z,y) = 0$}};

\end{tikzpicture}
}
\hspace{1em}
\subfloat[$y^\ast < -z^\ast$]{\label{fig:y_less_than_minus_z}

\begin{tikzpicture}[scale = 1.1][font=\small]
\draw[->] 	(-2.3,0) -- (3.8, 0) node[anchor=north] {$z$};
\draw[->] 	(0,-0.8) -- (0, 1.8) node[anchor=west] {$u_i(z)$};

\draw[-] 	(-1.3, 1.7) parabola[bend at end] (3, -0.05); 
\draw[dashed] (-1.2, 1.4) -- (1.8, -0.6);
\draw[dotted] (-1.2, 0.8) -- (1.2, -0.8);

\filldraw [black] (-1, 0.35) circle (1pt);
\draw (-1, 0.25) node[anchor = east]{{\scriptsize $(z^\ast,y^\ast)$}};

\draw[-] (1.6, 1.6) -- (2, 1.6) node[anchor=west] {{\scriptsize $u_{i^\ast}(z) -y = 0$}};
\draw[dashed] (1.6, 1.2) -- (2, 1.2) node[anchor=west] { {\scriptsize $u_{i'}(z) -y = 0$}};
\draw [dotted](1.6, 0.8) -- (2, 0.8) node[anchor=west] {{\scriptsize $\rev_{i'}(z,y) = 0$}};

\end{tikzpicture}
}
\caption{Illustration for the proof of Lemma~\ref{lem:admissible_payments} admissible payments.
\label{fig:proof_lemma_admissible_payments}} 
\end{figure}

\medskip

\noindent{}\textit{Part (i): $y^\ast \geq 0$.} Assume towards a contradiction, that there exists an economy, where the allocated agent $\winner$ is charged a two part payment $(z^\ast, y^\ast)$ where the base payment $y^\ast < 0$. First, $z^\ast > 0$ must hold: for the case where $\alloc_\winner < 1$, i.e. when agent $\winner$ is tied with some other agent, Lemma~\ref{lem:zero_utility_when_tied} implies that $\alloc_\winner (u_i(z^\ast) - y^\ast) = 0$, and this can only be the case if $z^\ast > 0$; for the case where $\alloc_\winner = 1$, if $z^\ast \leq 0$, the expected revenue of the mechanism is negative and this violates ND. Therefore, we have a situation as shown in Figure~\ref{fig:negative_y}, where the solid line depicts the zero-profit curve of agent $\winner$, and the dot indicates the two-part payment in the two-dimensional payment space.

Now consider a new agent $i'$ with random value $V_{i'}$ that follows the $(\fixedV_i, \fixedP_i)$ type model, where the parameters are given by 
\begin{align*}
	\fixedP_{i'} = & \max \left\lbrace  1 + \left( \frac{1}{4} \alloc_\winner + \frac{1}{2} \right) \frac{y^\ast}{z^\ast} ,~ 1/2 \right\rbrace, \\
	\fixedV_{i'} = & -  \frac{1}{\fixedP_{i'}} \frac{\alloc_\winner }{4}  y^\ast 
\end{align*}
Since $y^\ast < 0$ and $z^\ast > 0$, we know that $\fixedP_{i'}  \in [1/2,1)$, and that $\fixedV_{i'}> 0$, thus this is a valid agent type.  For any $z \geq 0$, the expected utility function of agent $i'$ is given by $u_{i'}(z) = \fixedV_{i'} \fixedP_{i'} - z(1 - \fixedP_{i'})$, thus the zero-profit curve of agent $i'$ is the dashed line in Figure~\ref{fig:negative_y}. Note that given this construction, $\fixedV_{i'}\fixedP_{i'} = - \alloc_\winner y^\ast /4$. Also note that $(z^\ast, y^\ast)$ is below the agent's budget balance curve $\rev_{i'}(z,y) = 0$ (the dotted line): $\rev_{i'}(z^\ast, y^\ast) = y^\ast + z^\ast (1 - \fixedP_{i'}) \leq y^\ast + z^\ast ( 1 - (1 + (\frac{1}{4} \alloc_\winner + \frac{1}{2}) \frac{y^\ast}{z^\ast}  ) ) = (1/2 - \alloc_\winner /4  )y^\ast < 0$.


Now consider the economy, where the reports of the rest of the agents are fixed, however, the type of agent $\winner$ is replaced with agent $i'$. We know that if agent $i'$ got the outcome of agent $\winner$, her expected utility is going to be:
\begin{align*}
	\alloc_\winner (u_{i'}(z^\ast)- y^\ast) = & \alloc_\winner ( \fixedV_{i'} \fixedP_{i'} - (1-  \fixedP_{i'}) z^\ast - y^\ast) \\
	 \geq &  \alloc_\winner \left( -\alloc_\winner \frac{y^\ast}{4} + \left( \frac{1}{4} \alloc_\winner + \frac{1}{2} \right) \frac{y^\ast}{z^\ast}   z^\ast - y^\ast \right) \\
	 =&  - \alloc_\winner y^\ast/2,
\end{align*}
\if 0
\begin{align*}
	\geq  & \alloc \left( -\alloc \frac{y^\ast}{4} + \left( \left( \frac{1}{4} \alloc_\winner + \frac{1}{2} \right) \frac{y^\ast}{z^\ast}  \right)  z^\ast - y^\ast \right) \\
	 = & \alloc( - \alloc y^\ast / 4 + \alloc y^\ast / 4 + y^\ast/2 - y^\ast) \\ 
	 =& - \alloc y^\ast/2 > 0.
\end{align*}
\fi
which is strictly positive. Therefore, if agent $i'$ is not allocated, or if she is tied with some other agent (in which case she gets zero utility given Lemma~\ref{lem:zero_utility_when_tied}), or if she is allocated and gets utility lower than $ - \alloc_\winner y^\ast/2 $, she will have an incentive to report agent $\winner$'s type and get a higher expected utility. This violates IC, therefore, agent $i'$ must be assigned the resource with probability one, and let $(z_{i'}, y_{i'})$ be the two-part payment that she is charged. For $(\fixedV_i, \fixedP_i)$ types, we know that for all $z \geq 0$, the social surplus is fixed and is equal to the sum of agent's expected utility and mechanism's expected revenue:
\begin{align*}
	\sw_{i}(z) = \fixedV_{i'} \fixedP_{i'} = u_{i'} (z_{i'})- y_{i'} + \rev_{i'}(z_{i'}, y_{i'}) 
\end{align*}
As a result, $u_{i'} (z_{i'})- y_{i'} \geq - \alloc_\winner y^\ast/2 $ implies that $ \rev_{i'}(z_{i'}, y_{i'})  \leq \fixedV_{i'} \fixedP_{i'} + \alloc_\winner y^\ast/2  = - \alloc_\winner y^\ast/4 + \alloc_\winner y^\ast/2  < 0$. This violates ND, thus we conclude $y^\ast \geq 0$ must hold. 

\medskip

\noindent{}\textit{Part (ii): $y^\ast \geq -z^\ast$.} Given part (i), we know that if $z^\ast \geq 0$, $y^\ast \geq 0 \geq -z^\ast$ must  hold. Therefore, we only need to show that when an allocated agent $\winner$ is charged $(y^\ast, z^\ast)$ where $z^\ast < 0$ and $y^\ast < -z^\ast$,  then there is a contradiction. Similar to the previous case, we need to analyze whether agent $\winner$ is tied with any other agent. If $\winner$ is tied, i.e. allocated with probability $\alloc_\winner < 1$, she cannot get positive utility according to Lemma~\ref{lem:zero_utility_when_tied}. However, with $z^\ast < 0$ and $y^\ast < - z^\ast$, the agent's expected utility is positive (since Lemma~\ref{lem:exp_u_appx} implies that $u_\winner(z^\ast) - y^\ast \geq -z^\ast - y^\ast > 0$).

Therefore, the only case to consider is that agent $\winner$ is allocated the resource with probability one. In this case, $y^\ast \leq 0$ cannot hold without violating ND, therefore, the two-part payment $(z^\ast, y^\ast)$ charged must be as shown in Figure~\ref{fig:y_less_than_minus_z}.
Consider now an agent with value $V_{i'}$ that follows the $(\fixedV_i,\fixedP_i)$ type, where $\fixedV_{i'} > (-z^\ast)$ and $\fixedP_{i'} \in (0, 1 + y^\ast/z^\ast)$. Given $y^\ast < -z^\ast$, we know that $1 + y^\ast/z^\ast \in (0,1)$. With the same analysis as above, we know that when $z > -\fixedV_{i'}$, the zero-profit and budget balanced curves for agent $i'$ are as shown in Figure~\ref{fig:y_less_than_minus_z}. The utility for agent $i'$ if allocated and charged $(z^\ast, y^\ast)$ is
\begin{align*}
	u_{i'}(z^\ast) - y^\ast = \fixedV_{i'}\fixedP_{i'} - (1 - \fixedP_{i'}) z^\ast - y^\ast  = \fixedV_{i'}\fixedP_{i'} + \fixedP_{i'} z^\ast + (-z^\ast - y^\ast) > 0,
\end{align*}
thus in the economy where agent $\winner$ is replaced by agent $i'$, agent $i'$ has to be allocated and cannot tie with any other agent (otherwise she gets zero utility given Lemma~\ref{lem:zero_utility_when_tied}, thus has a useful deviation). However,  since the agent's utility and the mechanism's revenue add up to $\fixedV_{i'}\fixedP_{i'}$, the mechanism's revenue is upper bounded by $ - \fixedP_{i'} z^\ast + (z^\ast + y^\ast)  < (1 - (1 + y^\ast/z^\ast)) z^\ast + y^\ast = 0$, which contradicts ND. Therefore, we conclude that $y^\ast \geq -z^\ast$ holds, and this completes the proof of the lemma. 
\end{proof}

\subsubsection{Proof of Lemma~\ref{lem:lem_P1P5_characterization}}

We are now ready to prove Lemma~\ref{lem:lem_P1P5_characterization}, which characterizes the set of possible outcomes under any mechanism that satisfies (P1)-(P6).

\lemCharacterization*

\begin{proof} Part (i) is implied by IR. We have already proved part (iii) in Lemma~\ref{lem:admissible_payments}. 
For part (ii), we show that if exists agent $i' \neq \winner$ s.t. $u_{i'}(z^\ast) - y^\ast > 0$, then there is a contradiction.
To simplify notation, consider a type profile $\CDF = (\CDF_1,\dots, \CDF_n)$, where agent $\winner = 1$ is allocated with non-zero probability, and assume that agent $2$ has $u_{2}(z^\ast) - y^\ast > 0$. 
Now consider agent $1'$, whose type is identical to that of agent $2$: $\CDF_{1'} = \CDF_2$. For economy $(\CDF_{1'}, \CDF_{-1})$, given anonymity, we know that one of the following two situations must be true: case (1), neither of agent $1'$ or $2$ ever gets allocated, in which case they both get zero utilities; case (2), both agents $1'$ and $2$ are allocated with some probability less than one (given that have the same DSE bids and the mechanism is anonymous), in which case Lemma~\ref{lem:zero_utility_when_tied} implies that it is also the case that they both get zero utility. As a result, each of them has a useful deviation, which is to report as if her type was $\CDF_i$, get allocated, and get strictly positive utility. This contradicts DSE, and completes the proof of this lemma.
\if 0

either none of agents $1'$ and 2 gets allocated, or they are both allocated (i.e. tied). In both cases, both agents are getting zero utility, thus we know that agent $1'$ has an incentive to pretend that her type was actually $\CDF_1$, gets allocated and charged $(z^\ast, y^\ast)$, and as a result gets utility $\pi > 0$. 
we know from anonymity that agents $1'$ and $2$ must have the same outcome (i.e. allocation and payments). Under this outcome, if both agents are getting zero utility, then each one of them has incentive to report $\CDF_1$ as her type, gets allocated and charged $(z^\ast, y^\ast)$, and as a result gets utility $\pi > 0$. 
Therefore, given economy $(\CDF_{1'}, \CDF_{-1})$, both agents $1'$ and $2$ need to get expected utility at least $\pi$. Since the mechanism has to be anonymous, each of them is allocated with probability 0.5. Denote the two-part payment that each of the two agents is charged as $(z',y')$, we know $0.5u_2(z') - y' \geq \pi \Rightarrow u_2(z') \geq 2(y'+\pi)$.

Now consider an agent $1''$, whose type is given by $\CDF_{1''} = \CDFeps_2$ (as defined in \eqref{equ:truncated_val_eps}, the CDF of the truncated $V_2$ with distribution $\CDF_2$) for $\eps = \min\{ \pi, \E{V_2^+}/2 \}$. We know from Lemma~\ref{lem:shifted_u_eps} that if $\CDF_2$ satisfies (A1)-(A3) then $\CDF_{1''}$ also satisfies (A1)-(A3). We also know that if $\CDF_2$ is a $(\fixedV_i, \fixedP_i)$ type then $\CDF_{1''}$ also follows the $(\fixedV_i, \fixedP_i)$ type model. In economy $(\CDF_{1''}, \CDF_{-1})$, if agent $1''$ reports the type of agent $1'$, she gets expected utility $0.5\ueps_2(z') - y'$. Part (iii) of Lemma~\ref{lem:shifted_u_eps} implies that $\ueps_2(z') \geq u_2(z') - \pi$, therefore $0.5\ueps_2(z') - y' \geq 0.5(u_2(z') - \pi) - y' \geq 0.5(2(y'+\pi) - \pi) - y'  \geq  y' + \pi - 0.5\pi - y' =0.5\pi > 0$. Therefore, agent $1''$ must get non-zero utility in $(\CDF_{1''}, \CDF_{-1})$, in order not to violate IC. As a result, the mechanism being deterministic requires that $x_{1''}(\CDF_{1''}, \CDF_{-1}) = 1$ (assuming that agent $1''$ is not tied with any other agent--- if tied, then set $\eps$ to be slightly smaller than $\pi$).

Now comparing the economies $(\CDF_{1'}, \CDF_{-1})$ and $(\CDF_{1''}, \CDF_{-1})$, we know that $x_{1'}(\CDF_{1'}, \CDF_{-1}) = 0.5$ and $x_{1''}(\CDF_{1''}, \CDF_{-1}) = 1$. Since $\CDF_{1''} = \CDFeps_2$ is a truncation of $\CDF_{1'} = F_2$, this violates condition (C1) in Lemma~\ref{lem:monotonicity}, therefore (C2) must hold, i.e. agent $1''$ gets zero utility in economy $(\CDF_{1''}, \CDF_{-1})$. This contradicts the fact that agent $1''$ must get utility at least $\pi/2$ in $(\CDF_{1''}, \CDF_{-1})$ in order not to violate IC. This completes the proof of part (ii) of the lemma. 
\fi
\end{proof}

\subsection{Proof of Lemma~\ref{lem:range_of_ut_sw}} \label{appx:proof_lem_range_of_ut_sw}

\lemUtSwMon*

\begin{proof}

For any economy, and any mechanism that satisfies (P1)-(P6), if there is a single agent that gets allocated, the agent has to be charged a two-part payment $(z^\ast, y^\ast)$ for some $z^\ast$. If more than one agents are tied, i.e. submitted the same reports, then anonymity requires that they are charged the same two-part payment $(z^\ast, y^\ast)$. Therefore, for each economy, each mechanism that satisfies (P1)-(P6) charges the allocated agent(s) a penalty $z^\ast$. We know from Lemma~\ref{lem:lem_P1P5_characterization} that when the penalty is $z^\ast$, the allocated agent(s) must reside on the frontier of the economy at penalty $z^\ast$.

Now consider an economy, and two mechanisms $\mech_1$ and $\mech_2$ both satisfying (P1)-(P6), which charge the allocated agents penalties $z_1^\ast$ and $z_2^\ast$ respectively, and assume without loss that $z_1^\ast < z_2^\ast$. Let $N_1 \triangleq \arg \max_{i \in N} u_i(z_1^\ast)$ and let $N_2 \triangleq \arg \max_{i \in N} u_i(z_2^\ast)$, i.e. the sets of agents that might be allocated in the two mechanisms. Let $\wone \in \arg\max_{i \in N_1} \ut_i(z_1^\ast)$, and $\wtwo \in \arg \min_{i \in N_2} \ut_i(z_2^\ast)$, we know that regardless of how the mechanism breaks ties, the the utilization achieved by $\mech_1$ is upper bounded by $\ut_{\wone}(z_1^\ast)$, and that the utilization achieved by $\mech_2$ is lower bounded by $\ut_{\wtwo}(z_2^\ast)$. 
Now consider two cases. If $\wone = \wtwo$, $\ut_{\wone}(z_1^\ast) \leq \ut_{\wtwo}(z_2^\ast)$ is implied by Lemma~\ref{lem:util_welfare}. If $\wone \neq \wtwo$, we know that $u_{\wone}(z_1^\ast) \geq u_{\wtwo}(z_1^\ast)$ and $u_{\wone}(z_2^\ast) \leq u_{\wtwo}(z_2^\ast)$ hold. In this case, Lemma~\ref{lem:crossing_utilities} implies that $\ut_\wone(z_1^\ast) \leq \ut_\wtwo(z_2^\ast)$ holds. This completes the proof that $\mech_2$ achieves (weakly) higher utilization than $\mech_1$. 

For the monotonicity properties of social welfare, when $z_1^\ast < z_2^\ast \leq \socialV$, let $\wone \in \arg\max_{i \in N_1} \sw_i(z_1^\ast)$, and $\wtwo \in \arg \min_{i \in N_2} \sw_i(z_2^\ast)$, then apply Lemma~\ref{lem:util_welfare} and Lemma~\ref{lem:crossing_utilities}, we know that regardless of whether $\wone = \wtwo$, $\mech_2$ achieves weakly better welfare than $\sw_\wtwo(z_2^\ast)$, which is weakly higher than $\sw_\wone(z_1^\ast)$, which is an upper bound on welfare achieved by $\mech_1$. The monotonicity of welfare can be proved similarly for the case when $\socialV \leq z_1^\ast < z_2^\ast$. This completes the proof of this lemma.
%
\end{proof}

\subsection{Optimality of CP($\socialV$)} \label{appx:proof_opt_cpm}

\subsubsection{Proof of Theorem~\ref{thm:cpm_not_dom}}

\thmNotDominated*

\begin{proof}

Assume that there is a mechanism $\mech$ satisfying (P1)-(P6) s.t. the social welfare under $\mech$ is always as good as that under CP($\socialV$) for every type profile $\CDF$ that satisfy (A1)-(A2). We  show that $\mech$ either has the same outcome as in CP($\socialV$), or achieve the same social welfare.

Assume w.l.o.g that agent $1$ is allocated under CP($\socialV$) given some type profile $\CDF$, and let $(z^\ast, y^\ast)$ be her two-part payment under CP($\socialV$). For the case when $\max_{i \neq 1} u_i(\socialV) \geq 0$, we know that $u_1(\socialV) > 0$, and that $z^\ast = \socialV$ under CP$(\socialV)$, and this outcome is already welfare-optimal. Thus $\mech$ cannot achieve higher social welfare, therefore what is left to consider is the case where $u_i(\socialV) < 0$ for all $i \neq 1$. Assume w.l.o.g. that $\zc_2 = \max_{i \neq 1} \zc_i$ and that $\zc_2 < \socialV$. The generic input assumption implies that $\zc_2 < \zc_1$.

We now show that $\mech$ must also allocate the resource to agent $1$ in order to achieve weakly higher social welfare than the CP($\socialV$) mechanism. First, $\mech$ must allocate the resource to some agent, otherwise it achieves worse welfare. The monotonicity property in Lemma~\ref{lem:range_of_ut_sw} requires that $\mech$ charges the allocated agent a penalty weakly higher than $z^\ast = \zc_2$. Since the base payment $y^\ast$ cannot be smaller than zero, the only agents that found a penalty of $\zc_2$ or higher acceptable are agents $1$ and agent $2$. Allocating to agent $2$ at penalty $\zc_2$ is not possible, since this violates part (ii) of Lemma~\ref{lem:lem_P1P5_characterization}. Therefore, agent $1$ must still be allocated under $\mech$.


\begin{figure}[t!]
\centering 
\begin{tikzpicture}[scale = 1.1][font=\small]
\draw[->] 	(-0.3,0) -- (4.2,0) node[anchor=north] {$z$};
\draw[->] 	(0,-0.5) -- (0,1.9) node[anchor=west] {$u_i(z)$};

\draw[-, name path = uA] (-0.2, 1.7) to [out=-35, in = 170] (4, -0.1);
\draw[dashed] (-0.2, 1) -- (1.5, -0.25);

\draw[dashdotted] (-0.2, 1.3) to [out=-35, in = 170] (4, -0.5);

\draw[loosely dotted](2, 0.4) -- (2, -0.1);

\draw[loosely dotted](2.1, 0.2) -- (-0.1, .2)node[anchor=east] {$y'$};

\draw (1.1, 0) node[anchor = north]{$\zc_2$};
\draw (2, 0) node[anchor = north]{$z'$};

\filldraw [black] (2, 0.2) circle (1pt);

\draw[-] (2.6, 1.6) -- (3.1, 1.6) node[anchor=west] {$u_1(z)$};
\draw[dashed] (2.6, 1.2) -- (3.1, 1.2) node[anchor=west] {$u_2(z)$};
\draw [dashdotted](2.6, 0.8) -- (3.1, 0.8) node[anchor=west] {$u_{1'}(z)$};

\end{tikzpicture}
\caption{Illustration for the proof of Theorem~\ref{thm:cpm_not_dom}.
\label{fig:proof_thm_not_dom}} 
\end{figure}
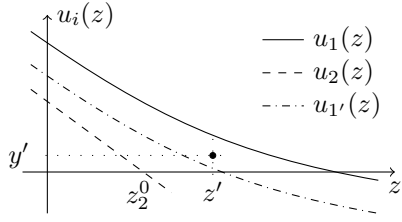

If the penalty charged by mechanism $\mech$ is also $\zc_2$, then the outcome coincides with that under CP$(\socialV)$. 
Therefore, the only case to consider is when $\mech$ charges agent $1$ a two-part payment $(z', y')$ where $z' > \zc_2$. Part (iii) of Lemma~\ref{lem:lem_P1P5_characterization} requires that $y' \geq 0$, thus we have $u_2(z') - y' < 0$. See Figure~\ref{fig:proof_thm_not_dom}. Define $\eps  \triangleq u_1(\zc_2) - (u_1(z') - y')$. We know from the monotonicity of $u_1(z)$ that $\eps \geq y' \geq 0$. If $\eps = 0$, it must be the case that $u_1(\zc_2) = u_1(z')$ since $y'$ is non-negative, in which case $u_1(z)$ is flat in this range, and as a result the right derivative of $u_1(z)$ is zero. From Lemma~\ref{lem:util_welfare}, we know that $\Pm{V_1 \geq -\zc_2} = \Pm{V_1 \geq -z'} = 1$, and $\mech$ achieves the same welfare and utilization as the CP$(\socialV)$ mechanism. 
Therefore, what is left to consider the case where $\eps > 0$. Let $u_{1'}(z) $ be
\begin{align*}
	u_{1'}(z) \triangleq \max\{ u_1(z) - (u_1(z') - y') - \eps/2, ~-z \}, ~\forall z \in \setR.
\end{align*}
See Figure~\ref{fig:proof_thm_not_dom}. For $z = \zc_2$, $u_{1'}(\zc_2) = u_1(\zc_2) - (u_1(z') - y') - \eps/2 = \eps - \eps / 2 = \eps/2 > 0$, therefore, $u_{1'}(0) \geq u_{1'}(\zc_2) > 0$. 
We know from Lemma~\ref{lem:u_to_F} that $u_{1'}(z)$ is the expected utility function of some type that satisfies (A1)-(A2). Let's call this agent $1'$, and we know that her zero-crossing satisfies $\zc_{1'} > \zc_2$. 
In the original economy, if we replace agent $1$ with agent $1'$, agent $1'$ will be selected by mechanism $\mech$ as well, since $\mech$ must allocate to the CP$(\socialV)$ winner. The two-part payment that agent $1'$ faces, which we denote as $(z'', y'')$, must satisfy $z'' \geq \zc_2$, and that $u_{1'}(z'') - y'' \geq 0$, in order to achieve welfare as high as the CP$(\socialV)$ mechanism and not violate IR. For $z'' \geq \zc_2$, we know $u_1(z'') = u_{1'}(z'') + (u_1(z') - y') + \eps/2$. Given the construction of $u_{1'}(z)$, we know that if agent $1$ reports the type of agent $1'$, she will get allocated under mechanism $\mech$, charged $(z'', y'')$, and get expected utility
\begin{align*}
	u_1(z'') - y'' = u_{1'}(z'') + (u_1(z') - y') + \eps/2 - y'' >  u_1(z') - y'.
\end{align*}
This is a useful deviation for agent $1$ under $\mech$, therefore mechanism $\mech$ cannot strictly dominate the CP$(\socialV)$ mechanism in welfare, and this completes the proof of this theorem.
\end{proof}

\subsubsection{Proof of Theorem~\ref{thm:cpm_opt}} 

Before proving the theorem, we first provide the following lemma on the ordered payment space.

\newcommand{\tldz}{\tilde{z}}
\newcommand{\tldy}{\tilde{y}}

\begin{lemma} \label{lem:payment_not_ordered} Assume that the type space includes all $(\fixedV_i, \fixedP_i)$ types. Let $(z, y)$ and $(\tldz, \tldy)$ be two sets of two-part payment in the same ordered payment space. It cannot be the case that $0 < y - \tldy < \tldz - z$. 
\end{lemma}
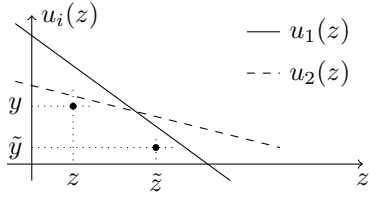
\begin{figure}[t!]
\centering 
\begin{tikzpicture}[scale = 1.1][font=\small]
\draw[->] 	(-0.3,0) -- (4,0) node[anchor=north] {$z$};
\draw[->] 	(0,-0.2) -- (0,1.8) node[anchor=west] {$u_i(z)$};

\draw[-] (-0.2, 1.7) -- (2.4, -0.2);
\draw[dashed] 	(-0.2, 1) -- (3, 0.2); 

\filldraw [black] (1.5, 0.2) circle (1pt);
\filldraw [black] (0.5, 0.7) circle (1pt);

\draw[dotted] (1.5, 0.3) -- (1.5, 0)  node[anchor = north]{$\tldz$};
\draw[dotted] (0.5, 0.9) -- (0.5, 0)  node[anchor = north]{$z$};
\draw[dotted] (1.7, 0.2) -- (0, 0.2)  node[anchor = east]{$\tldy$};
\draw[dotted] (0.7, 0.7) -- (0, 0.7)  node[anchor = east]{$y$};

\draw[-] (2.6, 1.6) -- (3, 1.6) node[anchor=west] {$u_1(z)$}; 
\draw[dashed] (2.6, 1.1) -- (3, 1.1) node[anchor=west] { $u_2(z)$};
\end{tikzpicture}

\caption{Illustration for the proof of Lemma~\ref{lem:payment_not_ordered}.
\label{fig:proof_lemma_orderedP}} 
\end{figure}

\begin{proof}

Assume towards a contradiction, that there exists an ordered payment space $\paymentSpace$, and two-part payments $(z, y),~(\tldz, \tldy) \in \paymentSpace$ s.t. $0 < y - \tldy < \tldz - z$, as illustrated in Figure~\ref{fig:proof_lemma_orderedP}. Denote $q = 1 - (y - \tldy)/(\tldz - z)$, we know that $q \in (0,1)$. Now consider two $(\fixedV_i, \fixedP_i)$ types, with $\fixedV_1, \fixedV_2 > \max\{-z, -\tldz, 0\}$, and with $\fixedP_1 = q/2$, and $\fixedP_2 = q/2 + 1/2$. We know that $0 < \fixedP_1 < q < \fixedP_2 < 1$, and that $(\fixedV_1, \fixedP_1)$ and $(\fixedV_2, \fixedP_2)$ are both valid parameters of $(\fixedV_i, \fixedP_i)$ types. Moreover:
\begin{align*}
	u_1(z) - y - (u_1(\tldz) - \tldy) = -(1-\fixedP_1)(z - \tldz)  - y + \tldy  > -(1-q)(z - \tldz)  - y + \tldy = 0, \\
	u_2(z) - y - (u_2(\tldz) - \tldy) = -(1-\fixedP_2)(z - \tldz)  - y + \tldy < -(1-q)(z - \tldz)  - y + \tldy = 0,
\end{align*}
meaning that agents $1$ and $2$ has different, strict preferences over $(z,y)$ and $(\tldz, \tldy)$. 
See Figure~\ref{fig:proof_lemma_orderedP}. This is a contradiction to the definition of an ordered payment space.
\end{proof}

We now prove the optimality result. 

\thmOptOrderedSpace*

\begin{proof} Let $\mech$ be a mechanism that satisfies (P1)-(P6), always allocates the resource, and uses an ordered payment space $\paymentSpace$. We know from Lemma~\ref{lem:lem_P1P5_characterization} and Lemma~\ref{lem:range_of_ut_sw} that given any economy, if mechanism $\mech$ achieves better social welfare than the CP$(\socialV)$ mechanism, then it must allocate to the same agent as the CP$(\socialV)$ mechanism does, and charge a higher penalty than the second highest zero-crossing. We show a conflict to the ordered payment space if this is the case.

\newcommand{\shift}{0.45}

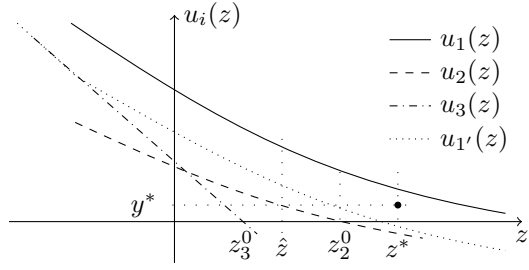
\begin{figure}[t!]
\centering 
\begin{tikzpicture}[scale = 1.1][font=\small]
\draw[->] 	(-2,0) -- (4.2,0) node[anchor=north] {$z$};
\draw[->] 	(0,-0.5) -- (0,2.5) node[anchor=west] {$u_i(z)$};

\draw[-, name path = uA] (-1.25, 2.4) to [out=-34, in = 170] (4, 0.1);
\draw[dashed] (-1.2, 1.2) to [out=-25, in = 170] (3, -0.2);
\draw[dashdotted] (-1.7, 2.22) -- (1, -0.15);

\draw[dotted](-1.9, 2.4) -- (-1.18, 2.2-\shift) to [out=-30, in = 170] (4, 0.1-\shift);

\draw (0.8, 0) node[anchor = north]{$\zc_3$};
\draw (2, 0) node[anchor = north]{$\zc_2$};
\draw[loosely dotted](2, 0.6) -- (2, -0.1);

\draw[loosely dotted](1.3, 1) -- (1.3, -0.2);
\draw (1.3, -0.08) node[anchor = north]{$\xpt$};

\filldraw [black] (2.7, 0.2) circle (1pt);
\draw[loosely dotted](2.7, 0.6) -- (2.7, -0.1);
\draw (2.7, -0.1) node[anchor = north]{$z^\ast$};
\draw[loosely dotted](2.8, 0.2) -- (-0.1, .2)node[anchor=east] {$y^\ast$};

\draw[-] (2.6, 2.2) -- (3.1, 2.2) node[anchor=west] {$u_1(z)$};
\draw[dashed] (2.6, 1.8) -- (3.1, 1.8) node[anchor=west] {$u_2(z)$};
\draw[dashdotted] (2.6, 1.4) -- (3.1, 1.4) node[anchor=west] {$u_3(z)$};
\draw [dotted](2.6, 1) -- (3.1, 1) node[anchor=west] {$u_{1'}(z)$};

\end{tikzpicture}
\caption{Illustration for the proof of Theorem~\ref{thm:cpm_opt}.
\label{fig:proof_thm_cpm_opt}} 
\end{figure}

Assume w.l.o.g. that agent $1$ is allocated under mechanism $\mech$, and assume toward a contradiction, that agent $1$ is charged the two part payment $(z^\ast, y^\ast)$ with $z^\ast > \zc_2 = \max_{i \neq 1} \zc_i$, as illustrated in Figure~\ref{fig:proof_thm_cpm_opt}. If $\mech$ achieves better welfare, it must be the case that $\zc_2 < \socialV$. IR requires that $\pi \triangleq u_1(z^\ast) - y^\ast \geq 0$. 
If $u_1(\zc_2) = \pi + y^\ast = u_1(z^\ast)$, from the convexity of $u_1(z)$, we know that 
$u_1'(z+) = 0$ for all $z \geq \zc_2$, meaning that $\Pm{V_1 \geq \zc_2} = \Pm{V_1 \geq z^\ast} = 1$, i.e. in comparison to the CP$(\socialV)$ mechanism, although $\mech$ charges a higher penalty, it has no improvement on welfare or utilization. Therefore, we only need to consider the case when $u_1(\zc_2) > \pi + y^\ast$. Define $\eps \triangleq \pi + (u_1(\zc_2) - \pi - y^\ast)/2$, we know that $\eps > \pi$.

Now consider $u_{1'}(z)$ defined as $u_{1'}(z) \triangleq \max\{u_1(z) - \eps, ~ u_{ N \backslash \{1\} }  (z) \}$ for all $z \in \setR$, as illustrated in Figure~\ref{fig:proof_thm_cpm_opt}. Lemma~\ref{lem:u_to_F} implies that $u_{1'}(z)$ is the expected utility function of some agent, whose type, which we denote as $\CDF_{1'}$, also satisfies (A1)-(A2). Now consider the economy $(\CDF_{1'}, \CDF_{-1})$, and assume that the allocated agent is charged a two-part payment $(z',y')$ in this economy. We need to consider two cases, depending on whether agent $1'$ is allocated.

\emph{Case 1:} Assume agent $1'$ is allocated, in which case $u_{1'}(z') - y' \geq 0$ given IR. If the payment $(z',y')$ satisfies $u_{1'} (z') - y' > 0$, we know that it cannot be the case that $u_{1'}(z') = u_{ N \backslash \{1\} } (z')$. This is because $u_{1'}(z') = u_{ N \backslash \{1\} } (z')$ implies that $u_{ N \backslash \{1\} } (z') - y' > 0$ holds, and this violates part (ii) of Lemma~\ref{lem:lem_P1P5_characterization}. As a result, $u_{1'}(z') = u_{1} (z') - \eps$ given the construction of $u_{1'}(z)$, and this implies a useful deviation for agent $1$ in the original economy: if she pretends that her type is actually $\CDF_{1'}$, gets allocated and charged $(z', y')$, her utility will be $u_1(z') - y' = u_{1'}(z') + \eps - y' > \eps > \pi$. As a result, we must have $u_{1'} (z') - y' = 0$. We claim that it must also be the case that $u_{ N \backslash \{1\} } (z') - y' = 0$, since otherwise, it has be the case that $u_{ N \backslash \{1\} } (z') - y' < 0$, which again implies $u_{1'}(z') = u_{1} (z') - \eps$ holds, and this also results in a useful deviation for agent $1$ in the original economy.

\emph{Case 2:} If agent $1'$ is not allocated in the economy $(\CDF_{1'}, \CDF_{-1})$, the payment $(z',y')$ satisfies $u_{ N \backslash \{1\} } (z') - y' \geq 0$, since some other agent has to be allocated. Part (ii) of Lemma~\ref{lem:lem_P1P5_characterization}  and the construction of $u_{1'}(z)$ imply that $u_{1'}(z') - y' \leq 0 \Rightarrow u_{ N \backslash \{1'\} } (z') - y' \leq 0$. As a result, $u_{ N \backslash \{1'\} } (z') - y' =0$ holds, which also implies $u_{1'}(z') - y' = 0$, since $u_{1'}(z') \geq u_{ N \backslash \{1'\} } (z')$.

Therefore, for both cases, we know $u_{1'}(z')-y' = 0$ and $u_{ N \backslash \{1\} }(z') - y' = 0$ both hold, implying $u_{1'}(z') = u_{ N \backslash \{1\} }(z')$. Given part (iii) of Lemma~\ref{lem:lem_P1P5_characterization}, we know that $y' \geq 0$, therefore $z' \leq \zc_2$ since otherwise $u_{ N \backslash \{1\} }(z') - y' < 0$. 
Let $\xpt$ be the unique point where $u_{ N \backslash \{1'\}}(\xpt) = y^\ast$. Since $y^\ast \geq 0$ and $\zc_2 < \socialV$ (which implies $u_{ N \backslash \{1'\}}(\zc_2 ) < 0$), we know that $\xpt$ is well defined given Lemma~\ref{lem:exp_u} and Lemma~\ref{lem:exp_u_appx}, and that $\xpt$ is strictly smaller than $\zc_2$. The monotonicity of $u_1(z)$ guarantees that for all $z \in [\xpt, \zc_2]$, $u_1(z) \geq u_1(\zc_2)$, therefore $u_{1'}(z) \geq u_1(z) - \eps \geq u_1(\zc_2) - \pi - (u_1(\zc_2) - \pi - y^\ast  )/2 = y^\ast + (u_1(\zc_2) - \pi - y^\ast  )/2 > y^\ast$. As a result, $u_{1'}(z) > y^\ast \geq u_{ N \backslash \{1'\}}(z)$ holds for all $z \in [\xpt, \zc_2]$, thus $z' \geq \xpt$ violates $u_{1'}(z') = u_{ N \backslash \{1\} }(z')$. As a result, we know $z' < \xpt$.

We now claim that $0 < y' - y^\ast < z^\ast - z'$ holds, which violates the ordered payment space assumption, according to Lemma~\ref{lem:payment_not_ordered}. First, $u_{ N \backslash \{1\} }  (z') - y' = 0$, $u_{ N \backslash \{1\} }  (\xpt) - y^\ast = 0$ and $z' < \xpt$ implies that $y' > y^\ast$. Moreover, $u_{ N \backslash \{1\} }  (z') - y' = 0$, $u_{ N \backslash \{1\} }  (\xpt) - y^\ast = 0$ and the fact that the slope of $u_{ N \backslash \{1\} } (z)$ is lower bounded by $-1$ implies that $y' - y^\ast < \xpt - z' < \zc_2 - z' < z^\ast - z'$. Thus we get $0 < y' - y^\ast < z^\ast - z'$.
This completes the proof that no mechanism with the stated properties can charge the allocated agent a higher penalty the penalty under CP$(\socialV)$ in any economy and improve welfare. We thus conclude that the CP$(\socialV)$ mechanism is welfare-optimal profile by profile.    
\end{proof}

\subsection{Proof of Theorem~\ref{thm:csp_uniq_opt}} \label{appx:proof_opt_csp}

\thmUniqOptCSP*

\begin{proof}

With the generic input assumption, we consider w.l.o.g an economy, where agents are ordered in decreasing-order in their zero-crossings $\zc_1 > \zc_2 > \dots > \zc_n$. Agent $1$ is assigned the resource under CSP, and she is charged penalty $z^\ast = \zc_2$, and a zero base payment $y^\ast = 0$. 
A similar analysis as in the proof of Theorem~\ref{thm:cpm_not_dom} (see Appendix~\ref{appx:proof_opt_cpm}) shows that the only way to achieve a higher utilization than the outcome under CSP is to also assign the resource to agent $1$, and charge a penalty that is higher than $\zc_2$. This immediately give us part (ii) of this theorem, since for the $(\fixedV_i, \fixedP_i)$ types, a higher penalty does not improve utilization. The exact same construction as in the proof of Theorem~\ref{thm:cpm_not_dom} shows that charging a higher penalty results in a violation of one of DSE, IR or ND, unless the mechanism charges a smaller penalty than CSP does in some other economy. Therefore the CSP is not dominated in utilization by any mechanism, which is part (iii).

We now prove part (i) of this theorem. From Lemma~\ref{lem:lem_P1P5_characterization} and the ``no charge" assumption, we know that for any type profile $\CDF$, the payment two-part $(z^\ast, y^\ast)$ facing the allocated agent $\winner$ facing the allocated agent $\winner$ must satisfy $y^\ast=0$ and $z^\ast \in [\zc_2, ~\zc_1]$. The only agents that can be allocated are agents $1$ and $2$, without violating IR. Allocating to agent $2$ violates part (ii) of Lemma~\ref{lem:lem_P1P5_characterization}, therefore agent $1$ is allocated. We now argue $z^\ast = \zc_2$ must hold, and thus the CSP outcome. Assume otherwise, that there exist a mechanism $\mech$ and an economy where and $z^\ast > \zc_2$. Consider the economy in which agent 1 is replaced by agent $1'$ with zero-crossing $\zc_{1'} \in (\zc_{2}, z^\ast)$. Agent $1'$ must be allocated from the above argument and the requirement that the resource is allocated, thus the penalty that she faces $z'$ must be smaller than $z^\ast$ (otherwise IR is violated). This gives agent $1$ in the original economy a useful deviation, which is to report the type of agent $1'$, getting allocated and charged a smaller penalty. This is a contradiction, and therefore proves the uniqueness of CSP.

Part (iv) can be proved by the same arguments as in the proof of Theorem~\ref{thm:cpm_opt}: if a mechanism charges the allocated agent a two-part payment $(z^\ast, y^\ast)$ with $z^\ast > \zc_2$ and improves utilization, then we can construct an alternative economy, where the new payment $(z', y')$ collected by the mechanism cannot reside in the same ordered payment space together with $(z^\ast, y^\ast)$. 
%
%
\if 0
When there is a tie at the highest zero-crossings, Lemma~\ref{lem:lem_P1P5_characterization} implies that the only possible outcomes would be one of the agents with the highest zero-crossings getting allocated and payment must be also be the highest zero-crossing $(\zc_N, ~0)$ --- the only point weakly below the frontier, above the second frontier and with $y=0$. This is also the CSP outcome.
A direct-revelation version of CSP may break ties in favor of the agent with highest utilization at the zero-crossing, and achieve higher utilization than breaking ties uniformly at random. This does not affect incentives since when tied, both allocated and unallocated agents get expected utility zero.
\fi
\end{proof}

\if 0
\subsection{On Mechanisms with Order Payment Space}

Mechanisms with an ordered payment space also have a useful indirect
structure, meaning that we can ask an agent to point to her ``highest" (according to the ordering of the payment space, where a ``higher" payment is less favorable for all agents) acceptable two-part payment in this payment space, allocate to the agent with the highest reported payment, and charge the allocated agent the second highest reported payment. In particular, with the following assumptions (S1) and (S2) on $\paymentSpace$ we can show that this indirect mechanism is truthful.
\begin{enumerate}[({S}1)]
	\setlength\itemsep{0em}
	\item $\forall (z,y),~(\hat{z}, \hat{y}) \in \paymentSpace$, for any type $\CDF_i$ satisfying (A1) and (A2), for any $\fixedV \in [\min\{ u_i(z)-y, u_i(\hat{z})- \hat{y}),\max\{ u_i(z)-y, ~ u_i(\hat{z})- \hat{y})]$, there exists $(\tilde{z}, \tilde{y}) \in \paymentSpace$ s.t. $u_i(\tilde{z}) - \tilde{y} = \fixedV$. 
	\item $\inf_{(z,y) \in \paymentSpace } \{ u_i(z) - y \} < 0$ for any type $F_i$ satisfying (A1) and (A2).
	\item 
	$\sup_{(z,y) \in \paymentSpace} \{ u_i(z) - y \} \geq 0$ for any type $F_i$ satisfying (A1) and (A2),
\end{enumerate}

Assumption (S1) means that the space of payments is properly ``continuous", and (S2) means that no agent is willing to accept the highest payment in this space. With the additional assumption (S3) that there exists some acceptable payment for any agent, the indirect mechanism described above always allocates the resource. 

$\paymentSpace_{\txtSP}$, $\paymentSpace_{\txtCSP}$, $\paymentSpace_{\txtCPZ}$, and $\paymentSpace_{\gamma-\txtCSP}$ all satisfy (S1)-(S3), however, (S3) is not satisfied by $\paymentSpace_{\txtSP + \maxZ}$. Indeed, there is no dominance relationship in social welfare between CP($\socialV$) and the SP+$\socialV$ mechanism: CP($\socialV$) may charge a lower penalty than $\socialV$ for certain economies even when $u_N(\socialV) \geq 0$, however, SP+$\socialV$ may leave the resource unallocated and achieve zero-welfare for some other economies.

\if 0

\hma{Got rid of these complicated stuff}

In particular, the Theorem~\ref{thm:cpm_opt} implies that the CP($\socialV$) mechanism is optimal among the following general family of mechanisms. Given an ordered payment space $\paymentSpace$, denote $(z,y) \succeq (\hat{z}, \hat{y})$ if $u_i(z) - y \leq u_i(\hat{z}) - \hat{y}$ for a type $\CDF_i$ satisfying (A1) and (A2).

\hma{Name of the mechanism?}

\begin{definition} [Indirect mechanism with ordered payment space $\paymentSpace$] An indirect mechanism with ordered payment space $\paymentSpace$ asks each agent to report her highest (according to $\succeq$) acceptable payment in $\paymentSpace$, allocates to the agent with the highest reported acceptable payments, and charges the second highest report as her two-part payments.
\end{definition}

\begin{enumerate}[({S}1)]
	\setlength\itemsep{0em}
	\item 
	$\sup_{(z,y) \in \paymentSpace} \{ u_i(z) - y \} \geq 0$ for any type $F_i$ satisfying (A1) and (A2).
	\item 
	$\forall (z,y),~(\hat{z}, \hat{y}) \in \paymentSpace$, for any type $\CDF_i$ satisfying (A1) and (A2), for any $\fixedV \in [\min\{ u_i(z)-y, u_i(\hat{z})- \hat{y}),\max\{ u_i(z)-y, ~ u_i(\hat{z})- \hat{y})]$, there exists $(\tilde{z}, \tilde{y}) \in \paymentSpace$ s.t. $u_i(\tilde{z}) - \tilde{y} = \fixedV$. 
	\item $\inf_{(z,y) \in \paymentSpace } \{ u_i(z) - y \} < 0$ for any type $F_i$ satisfying (A1) and (A2).
\end{enumerate}

\begin{theorem} Given (A1)-(A2), for any indirect mechanism with ordered payment space $\paymentSpace$ that satisfies (S1)-(S3), it is a dominant strategy for each agent to truthfully report her highest acceptable two-part payment in $\paymentSpace$. Moreover, the resource is always allocated.
\end{theorem}
\fi

\fi

\section{Additional Simulation Results} \label{appx:additional_simulations}



For the assignment of a single resource, the value for each agent to use the resource is $V_i = w_i - O_i$, where $O_i$ is the random opportunity cost with exponential distribution parametrized by $\lambda_i$. For uniform distribution of the opportunity cost $\lambda_i^{-1} \sim \mathrm{U}[0,~10]$ and random distribution of the value $w_i \sim \mathrm{U}[0, \lambda_i^{-1}]$, we examined the setting when $\socialV = 5$ is equal to the average opportunity cost. 

For the scenario where $\socialV = 0$, meaning that the society does not care about whether the resource is utilized, the average welfare and utilization achieved by the mechanisms are as shown in Figure~\ref{fig:W0}. The generalized CP(0) mechanism coincides with the second price auction, and both achieve the first-best social welfare. The CSP mechanism has higher utilization, but achieves lower welfare, since when the society does not value utilization, incentivizing agents to use the resource when they naturally would not hurts the welfare.

\begin{figure}[t!]
\centering
\subfloat[Social Welfare]{\label{fig:welfare_W0}
	\includegraphics[scale=\figScale]{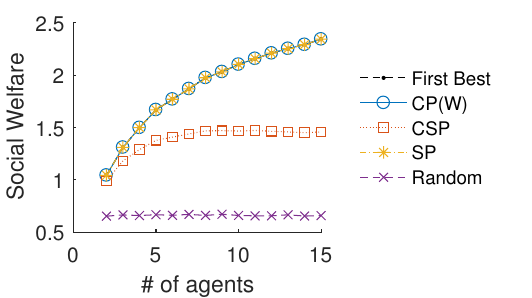}
}
%
\subfloat[Utilization]{\label{fig:utilization_W0}
 	\includegraphics[scale=\figScale]{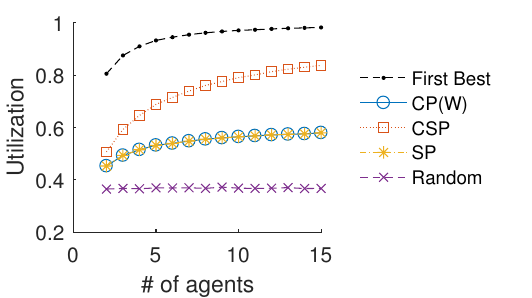}
}
\caption{Social welfare and utilization for a single resource, with small societal value $\socialV = 0$.
\label{fig:W0} 
}
\end{figure}

For the setting where $\socialV = 10$, modeling the scenario where the society have a stronger preference for the resource being utilized, 
%
the average welfare and utilization under different mechanisms for 10,000 randomly generated economies are as shown in Figure~\ref{fig:W10}.  
Figure~\ref{fig:welfare_W10} shows that when the societal value $\socialV$ is higher, the CP$(\socialV)$ mechanism achieves a much higher social welfare than the SP mechanism. The CP$(\socialV)$ and CSP social welfare coincide since when $\socialV$ is high, no agent is willing to accept $\socialV$ as a period~1 penalty thus the outcomes under the two mechanisms are the same. Their performance is competitive to the first-best benchmark under IR and ND, especially when the number of agents is large.
Figure~\ref{fig:utilization_W10} presents the average utilization under these mechanisms. CP($\socialV$) and CSP coincide, and both achieve much higher utilization than SP.

\begin{figure}[t!]
\centering
\subfloat[Social Welfare]{\label{fig:welfare_W10}
	\includegraphics[scale=\figScale]{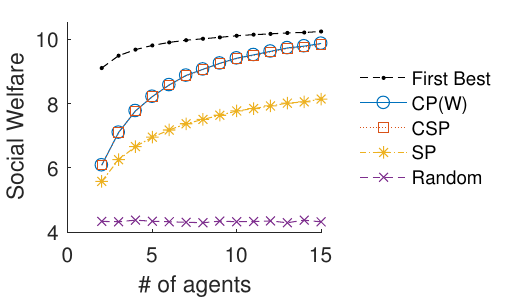}
}
%
\subfloat[Utilization]{\label{fig:utilization_W10}
 	\includegraphics[scale=\figScale]{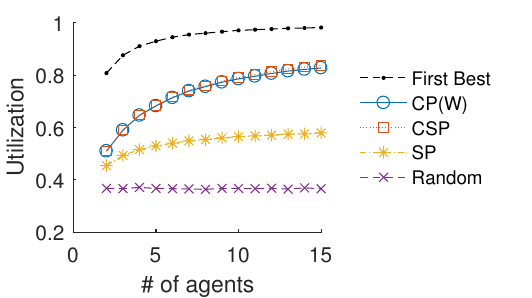}
}
\caption{Social welfare and utilization for a single resource, with large societal value $\socialV = 10$.
\label{fig:W10} 
}
\end{figure}



\if 0

\newpage

\noindent{}\textbf{\huge{Appendices Excluded from Submissions}}

\section{A List of Notations}

\paragraph{The Model}

\begin{itemize}
	\item $T = 0,~1$: time periods 
	\item $N$: set of agents
	\item $i \in N$: a generic agent
	\item $n$: the total number of agents
	\item $\items = \{a, b, \dots \}$: the set of items 
	\item $m$: the number of items
	\item $V_i$: the random value of agent $i$
	\item $v_i$: the realization of $V_i$
	\item $\CDF_i$: the CDF of agent $i$'s random value
	\item $f_i$: the PDF or PMF of agent $i$'s random value
	\item $\socialV$: the societal welfare gain from utilization
	\item $z$: the period 1 penalty
	\item $y$: the period 0 base payment
	\item $u_i(z)$: agent $i$'s expected utility as a function of penalty $z$
	\item $u_N(z)$: the frontier of the zero-profit curve of the set of agents $N$	
	\item $\zc_i$: agent $i$'s zero crossing
	\item $\yzc_i$: the zero crossing with the vertical axis 
	\item $\fixedV_i$, $\fixedP_i$: parameter for the $(\fixedV_i, \fixedP_i)$ type model
	\item $\lambda_i$: parameter for the exponential model
	\item $O_i$: the random opportunity cost 
\end{itemize}

\paragraph{The Mechanisms}

\begin{itemize}
	\item $\mech$: a mechanism
	\item $\reportSet$: the message space of the mechanism
	\item $\report_i \in \reportSet$: agent $i$'s report
	\item $x$: the allocation rule
	\item $\tone$, $\tzero$: the payment rules
	\item $b$, $\bzero$, $\bone$: a bid, bid on base payment, and bid on penalty
	\item $\maxZ$: the maximum penalty in a CP mechanism
	\item $\gamma$: the parameter for $\gamma$-CSP
	\item $\paymentSpace$: the payment space used by some mechanism
\end{itemize}


\section{Additional Discussions}


\subsection{Randomization and Positive Responsiveness}

If randomization is allowed, the CP$(\socialV)$ mechanism is no longer optimal profile by profile. Think about the following simple economy, where agents all have $(\fixedV_i, \fixedP_i)$ types, where agents 1$\sim$99 have $\fixedV_i = 1$ and $\fixedP_i = 1-\eps$, whereas agent $100$ has type $\fixedV_{100} = 2(1-\eps)/\eps$ and $\fixedP_{100} = 1/2$. When the objective is to optimize utilization, we can see that the CSP mechanism allocates the resource to agent $100$, achieving a utilization of $1/2$. However, the mechanism that allocates the resource to the agents uniformly at random achieves a much higher utilization: $(1-\eps)*0.99 + 0.01*0.5$.

In mechanism design, a typical way of defining ``deterministic" mechanism is that the mechanism cannot randomize unless more than one agent have identical types. This does not work for us, however, since we are working with indirect mechanisms, and some agents with different types will inevitably make the same report. 
The natural definition of deterministic mechanism is therefore the mechanism breaks ties only when more than one agent submit the identical report. However, without proper requirement on the report space (e.g. if the the report space is allowed to include only a single dummy report), the mechanism will till be able to randomize.

Such ``randomized" mechanisms are ruled out by the \emph{positive responsiveness} (PR) condition, as defined in Definition~\ref{defn:PR}. Intuitively, this condition requires that the mechanism cannot ``pool" agents together, if they are ``sufficiently" different. In other words, we can think about PR as a condition requiring that the mechanism uses ``minimum tie-breaking". This PR condition rules out mechanisms of the following kind: every one can bid a penalty up to $5$, but if more than one agent bid $5$, then one of them is randomly assigned and charged $5$ as penalty. The mechanism is still incentive compatible, however, people who are willing to pay $\geq 5$ as penalty are ``pooled" together, and the allocation is effectively randomized.


The PR condition can be thought of as a requirement on the richness of the report space--- if an agent has a strictly higher (in terms of FOSD) type than an agent tho is tied with some other agent, then there must be a way that the agent can express this information, such that she gets allocated with probability one.  However, this is not only a condition of the report space, since it's requiring that the ``higher report" is the DSE bid of the higher type agent. Without this requirement, we may still have randomized mechanisms that are not ruled out. Consider the a mechanism, where each agent can either submit a null report, or report a number that represent the highest willingness to pay as the base payment, if they are willing to accept an infinite penalty. When the type space is all types that satisfy (A1)-(A3), we know that no agent is willing to accept an infinite penalty, therefore although there is a possibility of getting allocated with probability $1$, no agent will actually submit such a report. The outcome of the mechanism is therefore each agent getting allocated uniformly at random.

Note that  the current proof still requires ``the mechanism can only randomize when agents' reports are the same", since this is helpful for the proof of Lemma 5, which relies on the fact that payments must be the same for all agents that might be allocated. I think this is not essential, though. The proof of lemma 4 does not depend on this, and I feel that we can relax this for lemma 5 as well. I'll work on it later. 

\fi

\if 0

\section{Critiques and Defenses}

\begin{itemize}
	\item 
\end{itemize}

\newpage

\section{Obsoleted Proofs}

\subsection{An Alternative Lemma on Crossing Points}

\begin{lemma} \label{obs:lem_crossing_pt}
Let $u_1(z)$ and $u_2(z)$ be the expected utility functions of two agents whose types satisfy (A1) and (A2). Let $\xpt$ be a crossing point such that $u_1(\xpt) = u_2(\xpt)$ and that there exists $\eps>0$ s.t. $\forall z\in (\xpt, \xpt+\eps)$, $u_1(z) < u_2(z)$. We have:
\begin{enumerate}[(i)]
	\setlength\itemsep{0em}
	\item $\ut_1(\xpt) \leq \ut_2(\xpt)$.
	\item $\sw_1(\xpt) \leq \sw_2(\xpt)$ if $\xpt \leq \socialV$, 
	and $\sw_1(\xpt) \geq \sw_2(\xpt)$ if $\xpt > \socialV$.
\end{enumerate}
\end{lemma}

\begin{proof} We first prove part (i). Given the assumption that $u_1(\xpt) = u_2(\xpt)$ and that there exists $\eps>0$ s.t. $u_1(z) < u_2(z)$ for all $z \in (\xpt, \xpt+\eps)$, the right derivatives at $\xpt$ must satisfy $u_1'(\xpt+) \leq u_2'(\xpt+)$. Part (i) of Lemma~\ref{lem:util_welfare} then implies $\ut_1(\xpt) \leq \ut_2(\xpt)$. 
For part (ii), note that 
\begin{align*}
	\sw_i(\xpt) = & \E{(V_i + \socialV)\one{V_i \geq -\xpt}}\\
	=& \E{V_i \one{V_i \geq \xpt}} - \xpt(1-\ut_i(\xpt)) + \xpt(1-\ut_i(\xpt)) + \socialV\ut_i(\xpt) \\
	=& u_i(\xpt) + \xpt + (\socialV - \xpt)\ut_i(\xpt).
\end{align*}
As a result:
\begin{align*}
	\sw_2(\xpt) - \sw_1(\xpt) =& u_2(\xpt) + \xpt + (\socialV - \xpt)\ut_2(\xpt) - (u_1(\xpt) + \xpt + (\socialV - \xpt)\ut_1(\xpt)).\\
	=& (\socialV - \xpt)(\ut_2(\xpt) - \ut_1(\xpt)).
\end{align*}
Since $\ut_1(\xpt) \leq \ut_2(\xpt)$, we know that $\sw_2(\xpt) - \sw_1(\xpt) \geq 0$ if $\socialV - \xpt \geq 0$, and vice versa. 
\end{proof}

\subsection{Obsoleted Proof of Theorem~\ref{thm:CSP_beat_SP}} 

\begin{proof}
Here we provide a proof for Case 2 as outlined in the body of the paper, where the SP auction allocates the resource to agent $1$ and when the CP($\socialV$) mechanism allocates the resource to agent $2$.
Let $z^\ast$ be the penalty that agent $2$ is charged under CP($\socialV$), we show by analyzing the following cases that $u_2(\min\{\zc_1,\socialV\}) \geq u_1(\min\{\zc_1,\socialV\})$ and $z^\ast \in[ \min\{\zc_1,\socialV\}, \socialV]$ hold:
%
\begin{enumerate}[1.]
	\setlength\itemsep{0em}
	\item $0 \leq u_1(\socialV) \leq u_2(\socialV)$, in which case $ \min\{\zc_1,\socialV\} = \socialV  = z^\ast$ and we have $u_2(\min\{\zc_1,\socialV\}) = u_2(\socialV) \geq u_1(\socialV) = u_1(\min\{\zc_1,\socialV\})$. 
	\item $u_1(\socialV) \leq 0 $, in which case $\min\{\zc_1,\socialV\} = \zc_1$. We know that either $\zc_1 \leq \zc_2$ holds, or $ u_2(\socialV) \geq 0$, which implies $\zc_2 \geq \socialV \geq \zc_1$. Therefore, $u_2(\zc_1) \geq 0 = u_1(\zc_1)$ thus $u_2(\min\{\zc_1,\socialV\}) \geq u_1(\min\{\zc_1,\socialV\})$. Agent $2$'s penalty is $z^\ast = \min\{ \socialV, \max_{i \neq 2} \zc_i\}$, which is at least $\zc_1$.
\end{enumerate} 

Given that agent $1$ is the SP winner we also know that  $u_1(0) \geq u_2(0)$. Apply Lemmas~\ref{lem:util_welfare} and~\ref{lem:crossing_utilities}, we know that $ut_2(z^\ast) \geq ut_2(\min\{\zc_1,\socialV\}) \geq \ut_1(0)$ and $\sw_2(z^\ast) \geq \sw_2(\min\{\zc_1,\socialV\}) \geq \sw_1(0)$. 
This completes the proof of this theorem, that CP$(\socialV)$ dominates SP in both utilization and welfare.
\if 0

Let $z^\ast$ denote the penalty that agent $2$ is charged under CP$(\socialV)$, we know $z^\ast \in [\min\{\zc_1,\socialV\}, \socialV]$. 
Let $\xpt$ be the largest crossing point of $u_1(z)$ and $u_2(z)$ before $\min\{\zc_1,\socialV\}$, i.e. 
\begin{align*}
	\xpt  \triangleq  \sup \{  \left. z \in \setR ~\right|  z \leq  \min\{\zc_1, \socialV \}], ~ u_1(z) \geq u_2(z) \}.
\end{align*}
We know that $\xpt$ exists and is non-negative, since it is the supreme of a non-empty set that is upper bounded, which includes $z = 0$. From the continuity of $u_1(z)$ and $u_2(z)$, we know $u_1(\xpt) = u_2(\xpt)$ holds. Consider the following two cases, depending on whether agents $1$ and $2$ are tied under the CP($\socialV$) mechanism. 

\medskip

\noindent{}\emph{Case 1: No Ties.} 
When agents $1$ and $2$ are not tied under the CP($\socialV$) mechanism, we know $u_2(\min\{\zc_1, \socialV \}) > u_1(\min\{\zc_1, \socialV \})$ holds. As a result, $u_2(z) > u_1(z)$ holds for all $z \in (\xpt, \min\{\zc_1, \socialV \})$. Lemma~\ref{lem:lem_crossing_pt} then implies that $\ut_2(\xpt) \geq \ut_1(\xpt)$, and $\sw_2(\xpt) \geq \sw_1(\xpt)$ (since $\xpt \leq \socialV$). Therefore,
\begin{align*}
	\ut_2(z^\ast) \geq  \ut_2(\min\{\zc_1,\socialV\}) \geq \ut_2(\xpt) \geq ut_1(\xpt) \geq ut_1(0),
\end{align*}
and
\begin{align*}
	\sw_2(z^\ast) \geq \sw_2(\min\{\zc_1,\socialV\}) \geq \sw_2(\xpt) \geq \sw_1(\xpt) \geq sw_1(0).
\end{align*}

\if 0

$\bzero_1 + \bone_1 < \bzero_2 + \bone_2$. 

In this case, we either have $u_1(\socialV) < u_2(\socialV)$, or we have $\zc_1 < \zc_2$, in which case $u_2(\zc_1) > u_1(\zc_1)$.
 Either case, the largest crossing point $\hat{z}$ of $u_1(z)$ and $u_2(z)$ before the smaller of $\zc_2$ and $\socialV$ can be formally defined as
\begin{align*}
	\hat{z}  \triangleq  \sup \{ z \in [0, \min\{  \left. \zc_2, \socialV \}] ~\right| ~ u_1(z) \geq u_2(z) \}.
\end{align*}

First, $u_1(\hat{z}) < u_2(\hat{z})$ contradicts the definition of $\hat{z}$, since in this case the continuity of implies that $\hat{z}$ cannot be a limit point of the set of $z$ s.t. $u_1(z) \geq u_2(z)$. Therefore $u_1(\hat{z}) \geq u_2(\hat{z})$. From the continuity of $u_1(z)$ and $u_2(z)$, we know that $\hat{z}$ is strictly smaller than $\min\{\zc_2, \socialV \}$: either $\zc_1 < \zc_2 \leq \socialV$, in which case $u_2(\zc_2) > u_1(\zc_2)$ implying $\zc_1 < \zc_2 \leq \socialV$; or $u_2(\socialV) \geq 0$ and $u_2(\socialV) > u_1(\socialV)$, in which case $\hat{z} < \socialV \leq \zc_2 $.  
Thus by definition of $\hat{z}$, we know that $u_1(z) < u_2(z)$ for all $z \in (\hat{z}, \min\{\zc_2, \socialV\}]$. Again from the continuity of $u_1$ and $u_2$, we know $ u_1(\hat{z}) \leq u_2(\hat{z})$ must hold. This implies  $u_1(\hat{z}) = u_2(\hat{z})$, and that the right derivatives satisfy $u_1'(\hat{z}+) < u_2'(\hat{z}+)$.

We also claim that $\hat{z} < \min\{\zc_1, \socialV \}$:
if $\zc_1 < \zc_2 \leq \socialV$, we must have $u_2(\zc_1) > 0$ thus $\hat{z} < \zc_2$ implies $u_2(\hat{z}) > 0$, therefore $u_1(\hat{z}) = u_2(\hat{z}) > 0 \Rightarrow \hat{z}< \zc_1$; if $\zc_2 > \socialV$, $u_2(\hat{z}) > u_2(\socialV) > 0$, thus $u_1(\hat{z}) > 0$, and $\hat{z}< \zc_1$. 
Given this, we know that the penalty $z^\ast = \bone_{i'}$ that agent $2$ is charged (where $i' = \arg\max_{i\neq 2} \bzero_i + \bone_i$) is weakly above $\min\{\zc_1, \socialV \}$, thus strictly above $\hat{z}$. 

Therefore, from Lemma~\ref{lem:util_welfare}, we know that the CPM mechanism achieves a higher utilization:
\begin{align*}
	\ut_2(z^\ast) \geq  \ut_2(\hat{z}) = u_2'(\hat{z}+) + 1 > u_1'(\hat{z}+) + 1 = ut_1(\hat{z}) \geq ut_1(0).
\end{align*}

Given $u_1(\hat{z}) = u_2(\hat{z})$, we know that the social welfare satisfy
\begin{align*}
	\sw_2(\hat{z}) - \sw_1(\hat{z}) = & \sw_2(\hat{z}) - \sw_1(\hat{z}) - (u_2(\hat{z}) - u_1(\hat{z}))  \\
	= & \E{V_2 \one{V_2 \geq -\hat{z}}} + \socialV \ut_2(\hat{z}) - (\E{V_1 \one{V_1 \geq -\hat{z}}} + \socialV \ut_1(\hat{z})) \\
	&  - \left( \E{V_2 \one{V_2 \geq -\hat{z}}} - \hat{z}(1-\ut_2(\hat{z})) - \E{V_1 \one{V_1 \geq -\hat{z}}} - \hat{z}(1-\ut_1(\hat{z})) \right) \\
	= & (\socialV - \hat{z}) (\ut_2(\hat{z}) - \ut_1(\hat{z})) > 0.
\end{align*}
Thus, given $z^\ast < \socialV$ and Lemma~\ref{lem:util_welfare}, we know CPM achieves better social welfare:
\begin{align*}
	\sw_2(z^\ast) \geq  \sw_2(\hat{z}) >  \sw_1(\hat{z}) \geq \sw_1(0).
\end{align*}
\fi

\medskip

\noindent{}\emph{Case 2: Agents $1$ and $2$ Tied under CPM.} In this case, we know that the penalty that agent $2$ is charged has to satisfy $z^\ast = \min\{ \zc_1, \socialV \}$, and that $u_2(z^\ast) = u_1(z^\ast)$ holds. 
%
%
Therefore, from $u_1(0) \geq u_2(0)$, and from the fact that $u_i'(z+) = \ut_i(z) - 1$ (part (i) of Lemma~\ref{lem:util_welfare})  we get:
\begin{align*}
	&u_1(z^\ast) - u_1(0) \leq u_2(z^\ast) - u_2(0)  \\
		\Rightarrow & \int_0^{z^\ast} (ut_1(v)  - 1) dv \leq \int_0^{z^\ast} (ut_2(v)  - 1) dv \\
		\Rightarrow & \int_0^{z^\ast} ut_1(v) dv \leq \int_0^{z^\ast}  ut_2(v) dv
\end{align*}
Since $ut_1(z)$ and $ut_2(z)$ are both non-negative and monotonically non-decreasing in $z$, we have:
\begin{equation*}
	z^\ast \cdot ut_1(0) \leq \int_0^{z^\ast} ut_1(v) dv \leq \int_0^{z^\ast}  ut_2(v) dv \leq z^\ast \cdot ut_2(z^\ast).
\end{equation*}
Which implies $ ut_2(z^\ast) \geq ut_1(0)$, that CPM achieves a weakly higher utilization.
For social welfare, given $\ut_i(z) = u_i'(z+) + 1$ and the convexity of $u_i(z)$, we have
\begin{align*}
	\sw_1(0) = & \E{V_1^+}  + \ut_1(0)\socialV = \E{V_1^+}  + \ut_1(0)z^\ast +  \ut_1(0)(\socialV - z^\ast) \\
	=  & \E{V_1^+}  + (u_1'(0+) + 1)z^\ast +  \ut_1(0)(\socialV - z^\ast) \\
	\leq  & u_1(0)  + \int_0^{z^\ast} u_1'(v+)dv +  z^\ast +  \ut_1(0)(\socialV - z^\ast) \\
	= & u_1(z^\ast) + z^\ast +  \ut_1(0)(\socialV - z^\ast) \\
	=& u_2(z^\ast) + z^\ast +  \ut_1(0)(\socialV - z^\ast)  \\
	= & \sw_2(z^\ast) - z^\ast(1-ut_2(z^\ast)) - \socialV ut_2(z^\ast) + z^\ast +  \ut_1(0)(\socialV - z^\ast)  \\
	=& \sw_2(z^\ast) - (\socialV - z^\ast) (ut_2(z^\ast) - ut_1(0)) \\
	\leq & \sw_2(z^\ast).
\end{align*}
\fi
\end{proof}

\subsection{Obsoleted Lemma on Utility Shifts}

\begin{definition} \label{obs:defn_truncation_epsilon} For any random value $V_i$ with CDF $\CDF_i$ that satisfies (A1) and (A2), and corresponding expected utility function $u_i(z) = \E{ \max \{ V_i, -z\} }$, for any $\eps$ s.t. $0 < \eps < \E{V_i^+}$, define
\begin{align}
	\infeps_i \triangleq \inf \{ z \in \setR~|~ u_i(z) \geq -z + \eps \}, \label{equ:infimum_eps}
\end{align}
and the truncated random variable as:
\begin{align}
	\valeps_i \triangleq  \min \{ V_i, - \infeps_i\}. \label{equ:truncated_val_eps}
\end{align}
\end{definition}

The next lemma shows that $\valeps_i$ is a well-defined random value satisfying (A1) and (A2), and that the expected utility $\ueps_i(z)$ is equal to $u_i(z)$ ``shifted down" by $\eps$ (whenever this is above $-z$).

\begin{lemma} \label{obs:lem_shifted_u_eps} Given $V_i$ satisfying (A1) and (A2), $0 < \eps < \E{V_i^+}$ and $\valeps_i$ defined as in \eqref{equ:truncated_val_eps}, we have:

\begin{enumerate}[(i)]
	\setlength\itemsep{0em}
	\item The random value $\valeps_i$ satisfies both (A1) and (A2). If $V_i$ satisfies (A3) then so does $\valeps_i$. 
	\item If $V_i$ is a $(\fixedV_i, \fixedP_i)$ type, then $\valeps_i$ also follows the $(\fixedV_i, \fixedP_i)$ type model (see Figure~\ref{fig:truncated_wipi}). 	
	\item The expected utility for an agent with value $\valeps_i$ satisfies $\ueps_i(z) \leq u_i(z)$ for all $z \in \setR$, and $\ueps_i(z) = \max\{u_i(z)-\eps, -z\}$ for all $z \in \setR$ (see Figure~\ref{fig:truncated_general_v}). More specifically, $\ueps_i(z) = -z$ holds when $z \leq \infeps_i$, and $\ueps_i(z) = u_i(z) - \eps$ when $z \geq \infeps_i$. 
\end{enumerate}
\end{lemma}

\begin{figure}[t!]
\centering 
\subfloat[$(\fixedV_i,\fixedP_i)$ type]{\label{fig:truncated_wipi}

\begin{tikzpicture}[scale = 1.2][font=\small]
\draw[->] 	(-2.5,0) -- (3, 0) node[anchor=north] {$z$};
\draw[->] 	(0,-0.8) -- (0, 1.8) node[anchor=west] {$u_i(z)$};

\draw[dotted] (-1.8, 1.8) -- (0.8, -0.8);

\draw[-] 	(-1.6, 1.6) -- (-1.2, 1.2) -- (2.8, -0.05); 
\draw[dashed] (-1.6, 1.6)--	(-0.4, 0.4) -- (1.6 + 1, -0.225 - 0.302); 

\draw [decorate,decoration={brace,amplitude=4pt},xshift=0cm,yshift=0pt]
      (0,0.82) -- (0,0.28) node [midway,right,xshift=.05cm,yshift=.-0.03cm] {$\eps$};

\draw[-] (1.6, 1.6) -- (2, 1.6) node[anchor=west] {{\scriptsize $u_i(z)$}};
\draw[dashed] (1.6, 1.2) -- (2, 1.2) node[anchor=west] { {\scriptsize $\ueps_i(z)$}};
\draw [dotted](1.6, 0.8) -- (2, 0.8) node[anchor=west] {{\scriptsize $y = -z$}};

\draw[loosely dotted](-0.4, 0.5) -- (-0.4,-0.1);
\draw (-0.4, 0) node[anchor=north] {$\infeps_i$};

\draw[loosely dotted](-1.2, 1.3) -- (-1.2,-0.1);
\draw (-1.2, 0) node[anchor=north] {$\fixedV_i$};

\end{tikzpicture}
}
\hspace{1em}
\subfloat[\small{General $V_i$}]{\label{fig:truncated_general_v}
\begin{tikzpicture}[scale = 1.2][font=\small]
\draw[->] 	(-2.5,0) -- (3, 0) node[anchor=north] {$z$};
\draw[->] 	(0,-0.8) -- (0, 1.8) node[anchor=west] {$u_i(z)$};

\draw[dotted] (-1.8, 1.8) -- (0.8, -0.8);

\draw[-] 	(-1.6, 1.8) parabola[bend at end] (2.8, -0.05); 
\draw[dashed] (-1.6, 1.6)--	(-0.4, 0.4) parabola[bend at end] (2.8, -0.5); 

\draw [decorate,decoration={brace,amplitude=4pt},xshift=0cm,yshift=0pt]
      (0,0.69) -- (0,0.2) node [midway,right,xshift=.05cm,yshift=.-0.03cm] {$\eps$};

\draw[-] (1.6, 1.6) -- (2, 1.6) node[anchor=west] {{\scriptsize $u_i(z)$}};
\draw[dashed] (1.6, 1.2) -- (2, 1.2) node[anchor=west] { {\scriptsize $\ueps_i(z)$}};
\draw [dotted](1.6, 0.8) -- (2, 0.8) node[anchor=west] {{\scriptsize $y = -z$}};

\draw[loosely dotted](-0.4, 0.5) -- (-0.4,-0.1);
\draw (-0.4, 0) node[anchor=north] {$\infeps_i$};

\end{tikzpicture}
}
\caption{Expected utility functions for the truncated random values.
\label{fig:truncated_rv}} 
\end{figure}

\begin{proof} 

We first show that the random variable $\valeps_i$ is well defined by showing that $\infeps_i$ exists. We know from part (ii) of Lemma~\ref{lem:exp_u_appx} that the set $\{ z \in \setR~|~ u_i(z) \geq -z + \eps \}$ is bounded from below. $u_i(0) = \E{V_i^+} > -0 + \eps$ implies that the set is not empty. Therefore, the infimum $\valeps_i$ is well defined, and we know from the monotonicity and continuity of $u_i(z) + z$ that $u_i(\infeps_i) + \infeps_i = \eps$. 

\medskip

\noindent{}\textit{Part (i).} 
To prove part (i), we first observe that $\valeps_i \leq V_i$, therefore $V_i$ satisfying (A2) implies that $\valeps_i$ also satisfies (A2). To show (A1) is satisfied by $\valeps_i$, we only need to show that $\infeps_i < 0$, so that $V_i$ satisfying (A1) implies that $\valeps_i$, which is a truncation of $V_i$ at some positive value $-\infeps_i$, also satisfies (A1). $\infeps_i < 0$ holds, since $u_i(0) = \E{V_i^+} > 0 + \eps$, therefore the infimum of  the set $\{ z \in \setR~|~ u_i(z) \geq -z + \eps \}$ is negative, given the monotonicity of $u_i(z) + z$. 
If $V_i$ satisfies (A3), then $\valeps_i$ satisfying (A3) is obvious, since $\E{V_i} < 0$ and $\valeps_i < V_i$ implies that $\E{\valeps_i} < 0$. 

\medskip

\noindent{}\textit{Part (ii).}  Given an agent with $(\fixedV_i, \fixedP_i)$ type, her utility function is $u_i(z) = \max\{\fixedV_i \fixedP_i - (1 - \fixedP_i)z, -z\}$, as shown in Figure~\ref{fig:truncated_wipi}. It is easy to compute that $\infeps_i  = (\eps - \fixedV_i \fixedP_i )/\fixedP_i = -\fixedV_i + \eps/\fixedP_i$. Given that $\eps \in (0, \E{V_i^+}) = (0, \fixedV_i \fixedP_i)$, we know $\infeps_i \in (-\fixedV_i, 0)$. The truncated variable $\valeps_i$ takes the value $-\infeps_i$ with probability $\fixedP_i$, and takes value $-\infty$ with probability $1-\fixedP_i$ (thus also follows the $(\fixedV_i, \fixedP_i)$ type model), and has expected utility $\ueps_i(z)$ as shown in Figure~\ref{fig:truncated_wipi}.

\medskip

\noindent{}\textit{Part (iii).}  $\ueps_i(z) \leq u_i(z)$ is obvious, since $\valeps_i \leq V_i \Rightarrow \E{\max\{\valeps_i,-z\}} \leq \E{\max\{V_i,-z\}}$.
To prove $\ueps_i(z) = \max\{u_i(z)-\eps, -z\}$, first observe that for any $z \leq \infeps_i$, $\ueps_i(z) = \E{\max\{ \valeps_i, -z \}} = \E{ \max\{ \min \{ V_i, - \infeps_i\}, -z \}} = -z$.  From the definition of $\infeps_i$ and the monotonicity of $u_i(z) + z$, we also know that $u_i(z) + z \leq \eps$ for all $z \leq \infeps_i$ which implies $u_i(z) - \eps \leq -z$. As a result, when $z \leq \infeps_i$, we have $\ueps_i(z) = -z = \max\{u_i(z)-\eps, -z\}$. To show that the equality holds for $z > \infeps_i$ as well, observe that for any $z \geq \infeps_i$, we have $\CDF_i(-z) = \CDFeps_i(-z)$, therefore the right derivatives satisfy $u'(z+) = (\ueps_i)'(z+)$. As a result, we know that for any $z > \infeps_i$, $u_i(z) - u_i(\infeps_i) = \ueps_i(z) - \ueps_i(\infeps_i)$ holds. Since $\ueps_i(\infeps_i) = -\infeps_i$ and $u_i(\infeps_i) = -\infeps_i + \eps$, we know $\ueps_i(z) = u_i(z) - \eps$ holds. This implies $\ueps_i(z) = \max\{u_i(z) - \eps, -z \}$ for $z \geq \infeps_i$, since the monotonicity of $u_i(z) + z$ implies that $u_i(z) - \eps \geq -z$ for $z \geq \infeps_i$.
This completes the proof of this lemma. 
\end{proof}

\subsection{Obsoleted Lemmas on MON}

The next lemma shows that if some ``monotonicity" condition (w.r.t. the truncation) of allocation is violated for a mechanism that satisfies (P1)-(P6), then the ``lower type" must get a zero expected utility.

\begin{lemma} \label{obs:lem_monotonicity} For any two-period mechanism with deterministic payment rules and dominant-strategy equilibrium $\report^\ast(\CDF)$. Let $\report_i^\ast$ be the DSE bid agent $i$ with type $\CDF_i$, and let $\report_i'$ be the DSE bid if agent $i$ has type $\valeps_i$ as defined in \eqref{equ:truncated_val_eps}. For any $\report_{-i} \in \reportSet^{n-1}$, one of the following two conditions must hold:
\begin{enumerate}[(C1)]
	\item $x_i(\report_i^\ast, \report_{-i}) \geq x_i(\report_{i}', \report_{-i})$. 
	\item $\tone_i(\report_{i}', \report_{-i}) < \infeps_i$ (for $\infeps_i$ defined in \eqref{equ:infimum_eps}), and $\ueps_i(\tone_i(\report_{i}', \report_{-i})) = -\tone_i(\report_{i}', \report_{-i})$. 
\end{enumerate}
\end{lemma}

\begin{proof}

To simplify notation for the proof of this lemma, we denote:
\begin{align*}
	&\alloc_i \triangleq x_i(\report_i^\ast, \report_{-i}), ~ z_i \triangleq \tone_i(\report_i^\ast, \report_{-i}), ~ y_i \triangleq \tzero_i(\report_i^\ast, \report_{-i}), \\
	&\alloceps_i \triangleq x_i(\report_{i}', \report_{-i}),~\zeps_i \triangleq \tone_i(\report_{i}', \report_{-i}), ~\yeps_i \triangleq \tzero_i(\report_{i}', \report_{-i}).
\end{align*}
Since we do not insist that the allocation rules be deterministic, $\alloc_i$ and $\alloceps_i$ can take any value in $[0,~1]$, however the payments $z_i, y_i, \zeps_i, \yeps_i$ are deterministic, i.e. if allocated, there is no randomness in an agent's base payment and penalty. 
Incentive compatibility of the mechanism requires that:
\begin{align*}
	\alloc_i u_i(z_i) - y_i & \geq \alloceps_i u_i(\zeps_i) - \yeps_i,\\
	\alloceps_i \ueps_i(\zeps_i) - \yeps_i & \geq \alloc_i \ueps_i(z_i) - y_i
\end{align*}
Adding up the two inequalities, we get 
\begin{align*}
	\alloc_i u_i(z_i) + \alloceps_i \ueps_i(\zeps_i) \geq \alloceps_i u_i(\zeps_i) + \alloc_i \ueps_i(z_i).
\end{align*}
Given part (iii) of Lemma~\ref{obs:lem_shifted_u_eps}, we know that $\ueps_i(z_i) \geq u_i(z_i) - \eps$, therefore 
\begin{align*}
	& \alloc_i u_i(z_i) + \alloceps_i \ueps_i(\zeps_i) \geq \alloceps_i u_i(\zeps_i) + \alloc_i (u_i(z_i) - \eps) \Rightarrow  \alloc_i \eps \geq  \alloceps_i(u_i(\zeps_i) - \ueps_i(\zeps_i)). 
\end{align*}
Since $u_i(\zeps_i) - \ueps_i(\zeps_i) \geq 0$, we know that if (C1) does not hold, i.e. when $\alloc_i  < \alloceps_i$, we must have $u_i(\zeps_i) - \ueps_i(\zeps_i) < \eps$, in order for the above inequality to hold. Given part (iii) of Lemma~\ref{lem:shifted_u_eps}, we know that when $u_i(\zeps_i) - \ueps_i(\zeps_i) < \eps$, it must be the case that $\zeps_i < \infeps_i$ and that $\ueps_i(\zeps_i) = -\zeps_i$. This is exactly (C2), and completes the proof of the lemma that one of (C1) and (C2) must hold.  
\end{proof}

Note that if condition (C2) holds, that agent with type $\CDFeps_i$ is charged $\zeps_i < \infeps_i$ s.t. $\ueps_i(\zeps_i) = -\zeps_i$, the agent never uses the resource, and the mechanism being IR and ND requires that the base payment be $\yeps_i = -\zeps_i$, and the expected utility of the agent with type $\CDFeps_i$ is exactly zero. 

\begin{lemma} \label{obs:lem_monotonicity} Fix a direct two-period mechanism that satisfies (P1)-(P6), and the type profile $\CDF_{-i}$ of all agents except for agent $i$. Let $\CDF_i$ be agent $i$'s type, and let $\CDFeps_i$ be the CDF of $\valeps_i$ as defined in \eqref{equ:truncated_val_eps}. One of the following two conditions must hold:
\begin{enumerate}[(C1)]
	\item $x_i(\CDF_i, \CDF_{-i}) \geq x_i(\CDFeps_i, \CDF_{-i})$, meaning that the allocation is monotone.
	\item $x_i(\CDFeps_i, \CDF_{-i}) \ueps_i(\tzero(\CDFeps_i, \CDF_{-i})) - \tzero(\CDFeps_i, \CDF_{-i}) = 0$, i.e. the truncated type gets zero utility.
\end{enumerate}
\end{lemma}

\begin{proof}

To simplify notation for the proof of this lemma, we denote:
\begin{align*}
	&\alloc_i \triangleq x_i(\CDF_i, \CDF_{-i}), ~ z_i \triangleq \tone_i(\CDF_i, \CDF_{-i}), ~ y_i \triangleq \tzero_i(\CDF_i, \CDF_{-i}), \\
	&\alloceps_i \triangleq x_i(\CDFeps_i, \CDF_{-i}),~\zeps_i \triangleq \tone_i(\CDFeps_i, \CDF_{-i}), ~\yeps_i \triangleq \tzero_i(\CDFeps_i, \CDF_{-i}).
\end{align*}
Incentive compatibility of the mechanism requires that:
\begin{align*}
	\alloc_i u_i(z_i) - y_i & \geq \alloceps_i u_i(\zeps_i) - \yeps_i,\\
	\alloceps_i \ueps_i(\zeps_i) - \yeps_i & \geq \alloc_i \ueps_i(z_i) - y_i
\end{align*}
Adding up the two inequalities, we get 
\begin{align*}
	\alloc_i u_i(z_i) + \alloceps_i \ueps_i(\zeps_i) \geq \alloceps_i u_i(\zeps_i) + \alloc_i \ueps_i(z_i).
\end{align*}
Given part (iii) of Lemma~\ref{obs:lem_shifted_u_eps}, we know that $\ueps_i(z_i) \geq u_i(z_i) - \eps$, therefore 
\begin{align*}
	& \alloc_i u_i(z_i) + \alloceps_i \ueps_i(\zeps_i) \geq \alloceps_i u_i(\zeps_i) + \alloc_i (u_i(z_i) - \eps) \Rightarrow  \alloc_i \eps \geq  \alloceps_i(u_i(\zeps_i) - \ueps_i(\zeps_i)). 
\end{align*}
Since $u_i(\zeps_i) - \ueps_i(\zeps_i) \geq 0$, we know that if (C1) does not hold, i.e. when $\alloc_i  < \alloceps_i$, we must have $u_i(\zeps_i) - \ueps_i(\zeps_i) < \eps$, in order for the above inequality to hold. Given $\ueps_i(z) = \max\{u_i(z)-\eps, -z\}$, we know that when $u_i(\zeps_i) - \ueps_i(\zeps_i) < \eps$, it must be the case that $\ueps_i(\zeps_i) = -\zeps_i$. In order for the payment $(\zeps_i,\yeps_i)$ to reside in both the IR and the ND ranges, we must have $\yeps_i = -\zeps_i$. Moreover, $\ueps_i(\zeps_i) - \yeps_i = 0$, which gives us (C2)--- that the truncated type gets zero utility.  This completes the proof of the lemma, that one of (C1) and (C2) must hold.
\end{proof}

\subsection{Obsoleted Proof for Admissible Payments}

Here's the original  proof for the case when the allocated agent is not tied with any other agent. 

\begin{proof}
 
\noindent{}\emph{Part (i).}

Now consider a new agent $i'$ with random value $V_{i'}$ that follows the $(\fixedV_i, \fixedP_i)$ type model (see Example~\ref{ex:vipi}), and has $\fixedP_{i'} = \max \{1 + y^\ast/(2z^\ast), 1/2 \}$ and $\fixedV_{i'}$ be any positive number.  Since $y^\ast < 0$ and $z^\ast > 0$, we know that  $\fixedP_{\winner}' \in [1/2, 1)$ is a valid parameter for an $(\fixedV_i, \fixedP_i)$ type distribution. For any $z \geq 0$, the expected utility function is given by $u_{i'}(z) = \fixedV_{i'} \fixedP_{i'} - z(1 - \fixedP_{i'})$, thus the zero-profit curve of agent $i'$ is the dashed line in Figure~\ref{fig:negative_y}. Note that $(z^\ast, y^\ast)$ is below the agent's budget balance curve $\rev_{i'}(z,y) = 0$ (the dotted line): $\rev_{i'}(z^\ast, y^\ast) = y^\ast + z^\ast (1 - \fixedP_{i'}) \leq y^\ast + z^\ast (1 - (1 + y^\ast/(2z^\ast))) = y^\ast / 2 < 0$.

Now consider the economy, where the reports of the rest of the agents are fixed, however, the type of agent $\winner$ is replaced with agent $i'$. We know that if agent $i'$ is allocated the resource, and charged two-part payment $(z^\ast, y^\ast)$, her utility is:
\begin{align*}
	u_{i'}(z^\ast, y^\ast) =& \fixedV_{i'} \fixedP_{i'} - (1-  \fixedP_{i'}) z^\ast - y^\ast \geq  \fixedV_{i'} \fixedP_{i'} - (1-  (1 + y^\ast/2z^\ast)) z^\ast - y^\ast	= \fixedV_{i'} \fixedP_{i'} - y^\ast / 2z^\ast > 0.
\end{align*}

Therefore, if agent $i'$ is not allocated, or if she is tied with some other agent (in which case she gets zero utility given Lemma~\ref{lem:zero_utility_when_tied}), or if she is allocated and gets utility lower than $ \fixedV_{i'} \fixedP_{i'} - y^\ast / 2z^\ast $, she will have an incentive to report agent $\winner$'s type and get a higher expected utility. This violates IC. Therefore, agent $i'$ must be assigned the resource with probability one, and let $(z_{i'}, y_{i'})$ be the two-part payment that she is charged. For $(\fixedV_i, \fixedP_i)$ types, we know that for all $z \geq 0$, the social surplus is fixed and is equal to the sum of agent's expected utility and mechanism's expected revenue:
\begin{align*}
	\sw_{i}(z) = \fixedV_{i'} \fixedP_{i'} = u_{i'} (z_{i'})- y_{i'} + \rev_{i'}(z_{i'}, y_{i'}) 
\end{align*}
As a result, $u_{i'} (z_{i'})- y_{i'} \geq \fixedV_{i'} \fixedP_{i'} - y^\ast / 2z^\ast $ implies that $ \rev_{i'}(z_{i'}, y_{i'})  \leq \fixedV_{i'} \fixedP_{i'} - (\fixedV_{i'} \fixedP_{i'} - y^\ast / 2z^\ast) = y^\ast / 2z^\ast < 0$. This contradicts ND, thus we conclude $y^\ast \geq 0$ must hold if $\winner$ is not tied with any other agent and gets allocated with probability 1.

\smallskip 

Now we consider the case where agent $\winner$ is allocated with probability $\alloc < 1$. First, it cannot be the case that $y^\ast < 0$ and $z ^\ast \leq 0$ both hold --- otherwise, the agent must get positive expected utility, and this violates Lemma~\ref{lem:zero_utility_when_tied}. Thus we only need to consider the case when $y^\ast < 0$ and $z^\ast > 0$. Construct agent $i'$ where $\fixedP_{i'} = \max\{ 1 + 3 \alloc y^\ast / (4 z^\ast), ~1/2 \}$ and $\fixedV_{i'} = - \alloc_i y^\ast / (4\fixedP_{i})$. We know that $\fixedP_{i'}  \in (0,1)$, and that $\fixedV_{i'}> 0$, thus this is a valid agent type. 

In the economy where agent $i$ is replaced by agent $i'$, if agent $i'$ reports agent $i$'s type as her own, she gets expected utility $\alloc(\fixedV_{i'}\fixedP_{i'} - (1-\fixedP_{i'}) z_i^\ast - y^\ast)$

\end{proof}

\subsection{An Old Proof of Lemma~\ref{lem:lem_P1P5_characterization}}

\begin{proof} Part (i) is implied by individual rationality. We have already proved part (iii) in Lemma~\ref{lem:admissible_payments}. 
For part (ii), we show that if exists agent $i' \neq \winner$ s.t. $u_{i'}(z^\ast) - y^\ast > 0$, then there is a contradiction.

To simplify notation, consider a type profile $\CDF = (\CDF_1,\dots, \CDF_n)$, where the allocated agent is $\winner = 1$, and let there exists agent $2$ s.t. $u_{2}(z^\ast) - y^\ast > 0$. Denote $\pi \triangleq u_{2}(z^\ast) - y^\ast $. 
Now consider agent $1'$, whose type is identical to that of agent $2$: $\CDF_{1'} = \CDF_2$. Now for economy $(\CDF_{1'}, \CDF_{-1})$, 
we know from anonymity that agents $1'$ and $2$ must have the same outcome (i.e. allocation and payments). Under this outcome, if both agents are getting zero utility, then each one of them has incentive to report $\CDF_1$ as her type, gets allocated and charged $(z^\ast, y^\ast)$, and as a result gets utility $\pi > 0$. 
Therefore, given economy $(\CDF_{1'}, \CDF_{-1})$, both agents $1'$ and $2$ need to get expected utility at least $\pi$. Since the mechanism has to be anonymous, each of them is allocated with probability 0.5. Denote the two-part payment that each of the two agents is charged as $(z',y')$, we know $0.5u_2(z') - y' \geq \pi \Rightarrow u_2(z') \geq 2(y'+\pi)$.

Now consider an agent $1''$, whose type is given by $\CDF_{1''} = \CDFeps_2$ (as defined in \eqref{equ:truncated_val_eps}, the CDF of the truncated $V_2$ with distribution $\CDF_2$) for $\eps = \min\{ \pi, \E{V_2^+}/2 \}$. We know from Lemma~\ref{obs:lem_shifted_u_eps} that if $\CDF_2$ satisfies (A1)-(A3) then $\CDF_{1''}$ also satisfies (A1)-(A3). We also know that if $\CDF_2$ is a $(\fixedV_i, \fixedP_i)$ type then $\CDF_{1''}$ also follows the $(\fixedV_i, \fixedP_i)$ type model. In economy $(\CDF_{1''}, \CDF_{-1})$, if agent $1''$ reports the type of agent $1'$, she gets expected utility $0.5\ueps_2(z') - y'$. Part (iii) of Lemma~\ref{obs:lem_shifted_u_eps} implies that $\ueps_2(z') \geq u_2(z') - \pi$, therefore $0.5\ueps_2(z') - y' \geq 0.5(u_2(z') - \pi) - y' \geq 0.5(2(y'+\pi) - \pi) - y'  \geq  y' + \pi - 0.5\pi - y' =0.5\pi > 0$. Therefore, agent $1''$ must get non-zero utility in $(\CDF_{1''}, \CDF_{-1})$, in order not to violate IC. As a result, the mechanism being deterministic requires that $x_{1''}(\CDF_{1''}, \CDF_{-1}) = 1$ (assuming that agent $1''$ is not tied with any other agent--- if tied, then set $\eps$ to be slightly smaller than $\pi$).

Now comparing the economies $(\CDF_{1'}, \CDF_{-1})$ and $(\CDF_{1''}, \CDF_{-1})$, we know that $x_{1'}(\CDF_{1'}, \CDF_{-1}) = 0.5$ and $x_{1''}(\CDF_{1''}, \CDF_{-1}) = 1$. Since $\CDF_{1''} = \CDFeps_2$ is a truncation of $\CDF_{1'} = F_2$, this violates condition (C1) in Lemma~\ref{obs:lem_monotonicity}, therefore (C2) must hold, i.e. agent $1''$ gets zero utility in economy $(\CDF_{1''}, \CDF_{-1})$. This contradicts the fact that agent $1''$ must get utility at least $\pi/2$ in $(\CDF_{1''}, \CDF_{-1})$ in order not to violate IC. This completes the proof of part (ii) of the lemma. 
\end{proof}

\if 0

\subsection{Old Proof of the Main Lemma}

\paragraph{Part (v)} We know from parts (i) through (iv) that agent $\winner$ resides on the frontier at $z^\ast$. When $u_{N \backslash \{\winner\} }(z^\ast) < u_{\winner}(z^\ast)$, from the convexity of $u_{\winner}$ and the definition of $u_N$, we know that $u_{\winner}(z) = u_{N}(z)$ for $z$ in a small neighborhood around $z^\ast$. Therefore, from Lemma~\ref{lem:util_welfare}, the utilization corresponds to the slope of the frontier at $z^\ast$:
\begin{align*}
	\Pm{V_{\winner} \geq -z^\ast} = \left.  \frac{d}{dz} u_{\winner}(z)	\right|_{z = z^\ast+} + 1 =  \left. \frac{d}{dz}  u_N(z) \right|_{z = z^\ast+} + 1.
\end{align*}
When there is a tie, i.e. when $u_{N \backslash \{\winner\} }(z^\ast) = u_{\winner}(z^\ast)$, there are (at least) two agents that may be allocated, when the payment determined by the mechanism is $z^\ast$. However, from the definition and convexity of $u_{N}(z)$, we know that the left and right derivatives of $u_{N}(z)$ at $z^\ast$ exists. Regardless of which which one of the tied agents is allocated, we know the utilization is bounded by $\ut_N(z-)$ and $ut_N(z+)$, since the left and right derivatives of the expected utility functions 

Regarding the social welfare, again from the observation that $u_\winner(z)$ and $u_N(z)$ must coincide for a small neighborhood around $z^\ast$, we know that $u_i(z^\ast) = u_N(z^\ast)$ and also $\ut_i(z^\ast) = ut_N(z^\ast)$. Observing that for any type, $\sw_i(z) = u_i(z) + z (1-ut_i(z)) + \socialV \ut_i(z)$, we conclude that the welfare of the winner coincides with that of the frontier agent. Same analysis on the left and right limits of the social welfare function gives us the range of achieved welfare when there are ties at the frontier at $z^\ast$.

\if 0
Moreover, $\left. \frac{d}{dz} u_N(z) \right|_{z = z^\ast-}  \leq \left. \frac{d}{dz} u_N(z) \right|_{z = z^\ast+ }$, and since $u_{\winner}(z) \leq u_N(z)$ for all $z$, $\left. \frac{d}{dz} u_{\winner}(z) \right|_{z = z^\ast} \in \left[ \left. \frac{d}{dz} u_N(z) \right|_{z = z^\ast-} , \left. \frac{d}{dz} u_N(z) \right|_{z = z^\ast+ } \right]$. This shows that the utilization still resides in the range of the derivatives of the frontier of the economy. 

With the correspondence of the utilization and the slope of the frontier, and the convexity of the frontier, we know that the larger $z^\ast$ is, the higher the utilization. We know from parts (i)-(iv) that the highest possible utilization is achieved by charging the highest possible penalty under (P1)-(P6): $z^\ast \leq \zc_N$ (point $A$ in Figure~\ref{fig:P1P5_characterization}).
\fi
\qed
\fi

\if 0

\subsubsection{Proof of Lemma~\ref{lem:no_pos_util_when_identical}}\ 

Before proving Lemma~\ref{lem:no_pos_util_when_identical}, we provide two more lemmas that are useful for constructing agent types and proving violation of ND.

\begin{lemma}[Payment Range] \label{lem:payment_range} Under any mechanism that satisfies (P1)-(P6), the two-part payment $(z^\ast, y^\ast)$ facing the assigned agent must satisfy
\begin{equation}
	y^\ast \geq \max \{0, -z^\ast \}.
\end{equation}
\end{lemma}

\begin{proof} We have just proved in part (ii) that the base payment $y^\ast$ facing the assigned agent must be non-negative, thus we only need to prove $y^\ast \geq -z^\ast$ must hold. When $z^\ast \geq 0$, this is also implied by part (ii). When $z^\ast < 0$ and $y^\ast < -z^\ast$, similar to the proof of part (iv), we can construct an agent with $(\fixedV_i, \fixedP_i)$ model type s.t. $\fixedP_{\winner}'  = 1 + (y^\ast - z^\ast) / 2z^\ast$ and any $\fixedV_{\winner}' > 0$ s.t. $(z^\ast, y^\ast)$ is in her IR range but out of her ND range: Replacing the original winner with the agent with type $u_{\winner}'$ as described above, we know that her expected utility facing payment $(z^\ast, y^\ast)$ is
\begin{align*}
	u_{\winner}' = \fixedV_{\winner}' \fixedP_{\winner}' - (1- \fixedP_{\winner}' )z^\ast - y^\ast = \fixedV_{\winner}' \fixedP_{\winner}' + (y^\ast - z^\ast) / 2 - y^\ast  = \fixedV_{\winner}' \fixedP_{\winner}' + -(y^\ast + z^\ast) / 2  > \fixedV_{\winner}' \fixedP_{\winner}'.
\end{align*}
With similar arguments as in the proof of part (iv), we know that in order for agent $\winner$ with type $u_{\winner}'$ to be willing to report truthfully, $u_{\winner}'$ must be assigned and the mechanism would run into deficit.
\end{proof}

The following lemma proves that given the expected utility function $u(z)$ of an agent, we can vertically shift the utility function downwards with some adjustment, and still obtain a valid type in the original type space, be it all types satisfying (A1)-(A3), or the set of all $(\fixedV_i,\fixedP_i)$ types.

\begin{lemma}\label{lem:vertical_type_shift} Given the expected utility function $u(z)$ of any type $F$ that satisfies (A1)-(A3), for any constant $0 \leq c < u(0)$, the following function:
\begin{align*}
	u_c(z) = \max\{u(z)-c, -z\}
\end{align*}
is also the expected utility of a valid type $F'$ that satisfy (A1)-(A3). In particular, if $F$ follows the $(\fixedV_i, \fixedP_i)$ type model, $F'$ is also a type under the $(\fixedV_i, \fixedP_i)$ type model. 
\end{lemma}

\begin{proof} We prove this lemma by construction. For a general type $F$, the corresponding expected utility function $u(z)$ and any $c \in [0 ,u(0)]$, define the threshold $T_c$ as $T_c \triangleq - \inf \{z \in \setR: u(z) - c \geq -z \}$.
Since $u(0) - c > 0$, we know that $u(0) - c > -0 $ holds thus $T_c  > 0$. Now define the random variable $V_c$ as the original random variable truncated at $T_c$, i.e. $V_c = V$ whenever  $V \leq T_c$, and $V_c = T_c$ if $V > T_c$. We know from the continuity of $u_c(z)$ that $u_c(z) = -z$ for $z \leq T_c$ and $u_c(z) = u(z) - c$ for $z > T_c$.

Observing $V_c \leq V$ and $T_c > 0$, we know that the distribution of $V_c$ satisfies (A2)-(A3). What is left to prove is (A1) and that the corresponding expected utility function, which we call $u_c'(z)$, is exactly $u_c(z)$.
First, since $V_c \leq T_c$, for any penalty level $z < -T_c$ (i.e. when the payment to the agent for no-show is high enough), the agent never uses the resource and always gets paid $-z$ for no-show, thus the agent's expected utility coincides with $u_c'(z) = -z = u_c(z)$ for $z < -T_c$. We know from the continuity of the expected utility function that $u_c'(z) = u_c(z) = -z$ for $z = -T_c$ as well, and $u(-T_c) - u_c(-T_c) = c$ holds.

What is left to prove is $u_c'(z) = u_c(z)$ for $z > -T_c$, since $u_c'(z) = u_c(z)$ implies $u_c'(0) = u(0) - c > 0$ thus (A1) holds. Observing that the CDF of $V$ and $V_c$ coincides at values lower than $T_c$, i.e. $F(v) = F_c(v)$ for $v < T_c$ we know from Lemma~\ref{lem:util_welfare} that the derivatives $\frac{d}{dz} u(z) = \frac{d}{dz} u_c'(z)$ for all $z > T_c$, thus $u(z) - u_c'(z) = u(-T_c) - u_c'(-T_c) = c \Rightarrow u_c'(z) = u(z) - c$ for all $z > -T_c$. This completes the proof for the general case.

For the $(\fixedV_i,\fixedP_i)$ case, assume that the original random variable $V$ takes value $\fixedV > 0$ w.p. $\fixedP$ and value $-\infty$ w.p. $1-\fixedP$. It is easy to show that the random variable $V_c$ which takes value $\fixedV - c/\fixedP_i > 0$ (since $c < u(0) = \fixedV\cdot \fixedP$) w.p. $\fixedP$ and $-\infty$ w.p. $1-\fixedP$ has the exact utility function $u_c(z) = \max\{u(z)-c, -z\}$. 

\end{proof}

We are now ready to prove our lemma.

\begin{proof}[Proof of Lemma~\ref{lem:no_pos_util_when_identical}]
Assume towards a contradiction, that in an economy, there exists two agents, 1 and 2, with type $u_1 = u_2$, such that under the outcome of an mechanism satisfying (P1)-(P6), both agents get strictly positive expected utilities. Since we require the mechanisms to be anonymous, we must have $util_1 = util_2 = a > 0$.

First, we prove that ND is violated if $a > u_1(0) / 2$. Recall that an agent's expected utility is equal to her utility from using the resource minus her expected payment to the mechanism. For an agent with type $u_1$, the highest expected utility she can get from using the resource is to use the resource iff $V_1 \geq 0$, which gives her expected utility $\E{V_1^+} = u_1(0)$. Though the resource can be randomly assigned to either of the two agent, the resource can be consumed by at most one of the agents at a time, thus the expected utility from the two agents combined from using the resource cannot exceed $u_1(0)$. Now if $u_1(0) < 2a$, denote the revenue from agent 1 and 2 as $rev_1$ and $rev_2$ respectively, we know that $util_1 + util_2 \leq u_1(0) - rev_1 - rev_2 \Rightarrow rev_1+ rev_2 \leq u_1(0)  - util_1 - util_2 < 0$, i.e. the mechanism gets negative expected revenue.

We now claim that if both agents $u_1$ and $u_2$ gets expected utility at least $a$ from the outcome of the mechanism and that $a < u_1(0)/2$, then they can also both get expected utility at least $\frac{7}{4}a$. Applying the claim repeatedly, we know that the expected utility agents need to receive would exceed $u_1(0)$, thus ND must be violated, as we proved in the above discussion. 

We now prove the claim. First, when both agents are assigned with probability 1/2, in order for them to get expected utility $a$, the payment they face, once assigned, must reside below the curve $u_1(z) - 2a$. Now we construct a type $u_1'(z) = \max \{ -z, u_1(z) - \frac{3}{2}a\}$. Since $\frac{3}{2}a < u_1(0)$, we know from Lemma~\ref{lem:vertical_type_shift} that $u_1'(z)$ is a valid type from the same class of types as $u_1(z)$. In the economy  where agent 1 is replaced by agent $1'$, i.e. $E' = (u_1', u_2, u_{-1,-2})$, we know that if agent $1'$ reports $u_1$ as her type, she will get expected utility at least $1/4 a$, since she will get allocated with probability 1/2, and that her payment would be weakly below $u_1'(z) - a/2$ by construction.

Since $u_1' \neq u_2$, in economy $E'$, in order for agent $1'$ to get utility at least $a/4$, agent $1'$ must be allocated with probability once (from the requirement that the mechanism be deterministic). Denote the payment she face as $(z',y')$. 
We know from Lemma~\ref{lem:payment_range} that $u_1'(z') = u_1(z') - \frac{3}{2}a$ must be true. Otherwise, $u_1'(z')= -z'$ must hold, and  $u_1'(z',y') \geq 1/4a \Rightarrow -z' - y' \geq 1/4a \Rightarrow y < -z'$, thus the mechanism runs into deficit.

Now we examine agent $1$ in the original economy $E = (u_1, u_2, u_{-1,-2})$: if agent $1$ report $u_1'$ as her type, she must also be allocated with probability one, and charged the same payment $(z',y')$. Since $u_1(z') = u_1'(z') + \frac{3}{2}a$, the expected utility of agent 1 must be $u_1(z',y') = u_1'(z',y') + \frac{3}{2}a \geq \frac{1}{4}a +\frac{3}{2}a  = \frac{7}{4}a$. Repeat the same argument for agent 2, we completed the proof of the claim.

\end{proof}
\fi

\subsection{The Agent Type Corresponding to the Frontier} \label{appx:proof_frontier_agent}

\begin{proposition} \label{obs:prop_frontier_agent} For any set of agents $N$ with type profile $\CDF$ satisfying (A1)-(A2), the function $\CDF_N(v) \triangleq -u_N'((-z)-)$ is also the distribution of an agent type that satisfies (A1)-(A2). 
\end{proposition}

\subsection{Old Proof of Theorem~\ref{thm:cpm_not_dom}}

\begin{proof}

Assume that there is a mechanism $\mech$ s.t. the welfare under $\mech$ is always as good as that of CP($\socialV$) for every type profile $\CDF$ that satisfy (A1)-(A2). We proceed in steps to show that $\mech$ must be identical to CP($\socialV$).
We know from Lemma~\ref{lem:lem_P1P5_characterization} that in any mechanism that satisfy (P1)-(P6), the allocated agent must be on the frontier, thus any mechanism that dominate CP($\socialV$) must also allocate to the agent that gets allocated in CP($\socialV$). 
Note that this rules out the use of reserve prices, which might leave the resource unallocated.
Consider an economy $E$ and the allocated agent in $E$ under mechanism $M$, say agent 1. 
In order for mechanism $M$ to dominate CP($\socialV$) in welfare, agent 1 must face a penalty at least as high as the minimum of $\socialV$ and second highest zero-crossing (let's denote it as $\zc_2$). If the penalty is indeed $\min\{\socialV, \zc_2\}$, then the outcome coincides with that of CP($\socialV$), and we are all set.

Now assume that agent 1 is charged a payment $(z_1, y_1)$ where the penalty $z_1$ is higher than $\min\{\socialV, \zc_2\}$. When $\zc_2 \geq \socialV$, we know that agent $1$ is charged the optimal penalty, thus there is no way to achieve a better welfare. Therefore, consider $\zc_2 < \socialV$, and that the penalty $z_1 > \zc_2$. 
We first claim that $\bar{u}_{N \backslash \{1\}}(z_1,y_1) > 0$ must hold, i.e. $(z_1, y_1)$ resides above the frontier of the rest of the economy. This is because if $\bar{u}_{N \backslash \{1\}}(z_1,y_1) \leq 0$ were to hold, the fact that $z_1 > \zc_2$ implies $y_1 < 0$, which violates part (iv) of Lemma~\ref{lem:lem_P1P5_characterization}.

Consider now agent $1'$, whose expected utility dominates $\bar{u}_{N \backslash \{1\}}(z)$, but satisfy $u_{1'}(z_1,y_1)<0$. Such type is guaranteed to exist since $(z_1,y_1)$ resides above $\bar{u}_{N \backslash \{1\}}(z)$, as we have just proved.
We know from part (ii) of Lemma~\ref{lem:lem_P1P5_characterization} that agent $1'$ is the only agent that can be allocated in economy $E'$ with agents $1'$, 2, \dots, $n$, and she has to be charged another two-part payment $(z_{1'}, y_{1'}) \neq (z_1,y_1)$ since the later violates IR for agent $1'$.
It is easy to see that if $(z_{1'}, y_{1'}) = (\zc_2, 0)$ which is the second highest zero-crossing, the mechanism is not DSIC for agent 1 in the original economy since $(z_{1'}, y_{1'})$ is a better payment than $(z_{1}, y_{1})$ as $y_1 \geq y_{1'}$, $z_{1'} < z_1$ thus agent 1 has an incentive to misreport $u_{1'}$ as her type.

If $(z_{1'}, y_{1'})$ is not the second highest zero-crossing $(\zc_2, 0)$, it must reside above the horizontal axis. Since we can choose $u_{1'}$ arbitrarily close to $\bar{u}_{N \backslash \{1\}}(z)$, $(z_{1'}, y_{1'})$ must be weakly below $\bar{u}_{N \backslash \{1\}}(z)$, $(z_{1'}, y_{1'})$. $z_{1'} \neq \zc_2$ therefore implies  $z_{1'} < \zc_2$. 

Now, we can construct economy $E''$, where agent $1''$ follows the $(\fixedV_i, \fixedP_i)$ type model, s.t. $\zc_{1''} < \zc_2$ but $u_{1''}(z_{1'}) > y_{1'}$. This is possible since $y_{1'} \leq \bar{u}_{N \backslash \{1\}}(z_{1'})$ as we just proved, and we can choose $\fixedP_{1''}$ small enough s.t. $u_{1''}(z)$ has a very sharp slope. In the economy $E''$, agent $1''$ must be allocated under mechanism $M$, whereas under CP($\socialV$), agent $2$ would be allocated, and we can show that welfare under the later is higher.

\end{proof}

\subsection{Old Proof of Theorem~\ref{thm:cpm_opt}}

\begin{proof}
The optimality result can be proved in a very similar way as we prove the uniqueness result of the CSP mechanism, provided in Appendix~\ref{appx:proof_csp_uniq}. 

From the characterization result stated in Lemma~\ref{lem:lem_P1P5_characterization}, we know that the payment facing the allocated agent must reside below her own expected utility function, and weakly above the expected utility function of the rest of the agents.  
Given an ordered payment space, we know that the payment must reside in the intersection of this area with the payment space.
A similar argument as in the proof in Appendix~\ref{appx:proof_csp_uniq} shows that the payment must in fact reside on the 2nd highest frontier, otherwise either the resource is not allocated in certain economy, or we would violate IC. Given that an ordered payment space has only a unique crossing point with the zero-profit curve of each agent, the payment facing the allocated agent is pinned down. Let's denote the allocated agent as $\winner$ and her payment in the economy $(u_\winner, u_{-\winner}$ as $(z^\ast, y^\ast)$. 

First, we know that for any type $\tilde{u}_\winner$ that is willing to accept $(z^\ast, y^\ast)$ (i.e. $\tilde{u}(z^\ast) - y^\ast > 0$), the agent $\winner$ in the economy $(\tilde{u}_\winner, u_{-\winner})$ must be allocated the resource. Otherwise, she gets zero expected utility, and would therefore report $u_\winner$ as her type. Given the one-dimensional payment space, we know that there is a unique zero-crossing point between $u_{N- \{\winner\}}(z)$ and $\paymentSpace$, which has to be $(z^\ast, y^\ast)$, we know that the payment that agent $\winner$ is charged must also be $(z^\ast, y^\ast)$, when she reports $\tilde{u}$ and gets allocated. This proves the ``agent-independent" property, that an agent's payment under the mechanism depends not on her own report.
Fix any economy. If the allocated agent is charged $\socialV$ under the CP($\socialV$) mechanism, we have already achieved the highest possible social welfare. Therefore, the only interesting cases to study is when the allocated agent is charged a penalty $z^\ast < \min \{ \socialV, \zc_\winner \}$. In this case, a higher penalty induces a more optimal utilization decision in Period~1.

Note that when $z^\ast < \min \{ \socialV, \zc_\winner \}$, we must have $z^\ast = \max_{i\neq \winner} \zc_i$. This is already the point on the second frontier (with $y \geq 0$) of the economy that has the highest penalty part. Given that we've shown that a mechanism with the desired properties cannot charge an even higher penalty, we conclude that the CP$(\socialV)$ mechanism is welfare-optimal.
\end{proof}

\subsection{Another Old Proof of Theorem~\ref{thm:cpm_opt}}

\begin{proof}

For any mechanism that satisfies (P1)-(P6) and uses an ordered payment space, we can show that the payments under this mechanism must be effectively agent-independent, meaning that for any profile $\CDF_{-i}$, there exists a two-part payment $(z^\ast, y^\ast)$ such that for any $\CDF_i$ s.t.  $x_i(\report^\ast(\CDF_i, \CDF_{-i})) = 1$, one of the following three cases must hold
\begin{enumerate}[(C1)]
	\item $\tone_i(\report^\ast(\CDF_i, \CDF_{-i})) = z^\ast$ and $\tzero_i(\report^\ast(\CDF_i, \CDF_{-i})) = y^\ast$.
	\item $\tzero_i(\report^\ast(\CDF_i, \CDF_{-i})) = y^\ast$, $\tone_i(\report^\ast(\CDF_i, \CDF_{-i})) > z^\ast$ but $u'_i(z^\ast+) = 0$.
	\item $y^\ast = -z^\ast$, $\tzero_i(\report^\ast(\CDF_i, \CDF_{-i})) = - \tone_i(\report^\ast(\CDF_i, \CDF_{-i})) > y^\ast$, and $u_{i}'(z^\ast -) = -1$.
\end{enumerate}

\hma{TODO here add a figure to illustrate all three cases and provide more details.}

For case (C2), if agent $i$ is charged penalty $z^\ast$ instead of $\tone_i(\report^\ast(\CDF_i, \CDF_{-i}))$, this does not change the outcome of the mechanism at all, since for any penalty weakly larger than $z^\ast$, the agent is already using the resource with probability one. For case (C3), if instead we charge the agent $(z^\ast, y^\ast)$, either the outcome of the mechanism remains the same, i.e. the agent uses the resource with zero probability, or the agent uses the resource with probability $u'_i(z^\ast +) + 1$, if the right derivative $u'_i(z^\ast +) > -1$. If this is the case, then this is in fact an improvement of the outcome (both welfare and utilization).  As a result, to show the optimality of CP$(\socialV)$, we only need to consider mechanisms that uses agent-independent payments.  

\hma{In the following paragraph provide a better construction}

From Lemma~\ref{lem:lem_P1P5_characterization}, we know that the payment $(z^\ast, y^\ast)$ facing the allocated agent $\winner$ must reside below $u_\winner(z)$, and weakly above $\bar{u}_{N \backslash \{\winner\}}(z)$, which is the frontier of the rest of the economy.
For a mechanism that uses agent-independent payments and always allocates the resource, we claim that fixing any $\CDF_{-i}$, the payment $(z^\ast, y^\ast)$ facing the allocated agent must reside on the the frontier $\bar{u}_{N \backslash \{\winner\}}(z)$, i.e. $\bar{u}_{N \backslash \{\winner\}}(z^\ast) = y^\ast$ must hold. If this is violated, we can construct a type $\CDF_i$ of agent $i$ s.t. $u_i(z) > \bar{u}_{N \backslash \{\winner\}}(z)$ for all $z \in \setR$, however, $u_i(z^\ast) - y^\ast < 0$. In the economy, the resource must be allocated to agent $i$, since allocating to any other agent violates part (ii) of Lemma~\ref{lem:lem_P1P5_characterization}, however, charging this agent $(z^\ast, y^\ast)$ violates IR. 

Given that the base payment charged by a mechanism must be non-negative, to optimize social welfare, the agent-independent price on $\bar{u}_{N \backslash \{\winner\}}(z)$ that optimizes welfare would be $(\socialV, \bar{u}_{N \backslash \{\winner\}}(\socialV))$ if $\bar{u}_{N \backslash \{\winner\}}(\socialV)) \geq 0$, and $(\zc_{N \backslash \{\winner\}}, 0)$ otherwise. This is exactly the agent-independent payment determined by the CP$(\socialV)$ mechanism, therefore the  CP$(\socialV)$ mechanism is welfare-optimal among all mechanism that satisfies (P1)-(P6), always allocates, and uses an ordered payment space. 

\medskip

What is left to show for the proof of the theorem is that one of conditions (C1)-(C3) must hold. Fix $\CDF_{-i}$. Let $\CDF_i$, $\CDF_{i'}$ be two types of agent $i$ that gets agent $i$ allocated, i.e.  $x_i(\report^\ast(\CDF_i, \CDF_{-i})) = x_{i'}(\report^\ast(\CDF_{i'}, \CDF_{-i}))$. Let the two-part payment agent that the mechanism charges agent $i'$ be $(z_i, y_i)$, and let the payment of agent $i'$ be $(z_i', y_i')$. If $(z_i, y_i) = (z_i', y_i')$, we have payments consistent with the agent-independent payments discussed above. If not, we know from DSE that
\begin{align*}
	u_i(z_i) - y_i & \geq u_i(z_i') - y_i', \\
	u_{i'}(z_i') - y_i' & \geq u_{i'}(z_i) - y_i.
\end{align*}
If one of these inequalities is strict, e.g. $u_i(z_i) - y_i > u_i(z_i') - y_i'$, we know from the assumption of ordered payment space that $u_{i'}(z_i) - y_i \geq u_{i'}(z_i') - y_i'$ also holds. As a result, $u_{i'}(z_i) - y_i = u_{i'}(z_i') - y_i'$ must hold. Therefore, at least one of the agents is indifferent between $(z_i,y_i)$ and $(z_i',y_i')$. Assume w.l.o.g. that $u_i(z_i) - y_i = u_i(z_i') - y_i'$.

\begin{figure}[t!]
\centering 
\subfloat[\small{Case (i)}]{\label{fig:indiff_case_1}
\begin{tikzpicture}[scale = 0.95][font=\small]


\draw[->] 	(-1.8,0) -- (3,0) node[anchor=north] {$z$};
\draw[->] 	(0,-0.2) -- (0,2) node[anchor=west] {$u_i(z)$};

\draw[-] 	(-1.8, 1.8) -- (-1, 1)-- (1.5, 0.1) -- (2.8, 0.1);
\draw[dotted](-2, 2) -- (0.2,-0.2);

\filldraw [black] (-1.7, 1.4) circle (1pt);
\draw [black] (-1.2, 0.9) circle (1pt);

\draw[-] (1.5, 2) -- (1.9, 2) node[anchor=west] {$u_i(z)$}; 
\draw [black] (1.7, 1.5) circle (1pt);
\draw (1.9, 1.5) node[anchor=west] { $(z_i,y_i)$};
\filldraw [black] (1.7, 1) circle (1pt);
\draw (1.9, 1) node[anchor=west] { $(z_i',y_i')$};

\end{tikzpicture}
}
\hspace{0.5em}
\subfloat[\small{Case (ii)}]{\label{fig:indiff_case_2}
\begin{tikzpicture}[scale = 0.95][font=\small]


\draw[->] 	(-1.8,0) -- (3,0) node[anchor=north] {$z$};
\draw[->] 	(0,-0.2) -- (0,2) node[anchor=west] {$u_i(z)$};
\draw[dotted](-2, 2) -- (0.2,-0.2);

\draw[-] 	(-1.8, 1.8) -- (-1, 1)-- (1.5, 0.1) -- (2.8, 0.1);

\filldraw [black] (0.2, 0.8) circle (1pt);
\draw [black] (1.2, 0.4) circle (1pt);

\draw[-] (1.5, 2) -- (1.9, 2) node[anchor=west] {$u_i(z)$}; 
\draw [black] (1.7, 1.5) circle (1pt);
\draw (1.9, 1.5) node[anchor=west] { $(z_i,y_i)$};
\filldraw [black] (1.7, 1) circle (1pt);
\draw (1.9, 1) node[anchor=west] { $(z_i',y_i')$};

\end{tikzpicture}
}
\hspace{0.5em}
\subfloat[\small{Case (iii)}]{\label{fig:indiff_case_3}
\begin{tikzpicture}[scale = 0.95][font=\small]

\draw[->] 	(-1.8,0) -- (3,0) node[anchor=north] {$z$};
\draw[->] 	(0,-0.2) -- (0,2) node[anchor=west] {$u_i(z)$};
\draw[dotted](-2, 2) -- (0.2,-0.2);

\draw[-] 	(-1.8, 1.8) -- (-1, 1)-- (1.5, 0.1) -- (2.8, 0.1);

\filldraw [black] (1.8, -0.2) circle (1pt);
\draw [black] (2.6,-0.2) circle (1pt);

\draw[-] (1.5, 2) -- (1.9, 2) node[anchor=west] {$u_i(z)$}; 
\draw [black] (1.7, 1.5) circle (1pt);
\draw (1.9, 1.5) node[anchor=west] { $(z_i,y_i)$};
\filldraw [black] (1.7, 1) circle (1pt);
\draw (1.9, 1) node[anchor=west] { $(z_i',y_i')$};

\end{tikzpicture}
}
\caption{Illustration for the proof of Theorem~\ref{thm:cpm_opt}, the optimality of CP$(\socialV)$.
\label{fig:proof_thm_cpm_optimal}} 
\end{figure}

As a result, in order for agent $i$ with utility function $u_i$ to be indifferent toward $(z_i, y_i)$ and $(z_i', y_i')$, it has to be one of the three cases, as illustrated in Figure~\ref{fig:proof_thm_cpm_optimal}. We know from Lemma~\ref{lem:payment_not_ordered} that case (ii) cannot hold without violating the orderedness of the payment space. Therefore, 
it must be the case that either $y_i = y_i'$, and $u_i'(\min\{ z_i, z_i'\}+) = 0$, i.e. the utility function is flat starting $\min\{ z_i, z_i'\}+$ as shown in Figure~\ref{fig:indiff_case_1}, or we have $y_i = -z_i$, $y_i' = -z_i'$, and $u_i'(\max\{ z_i, z_i'\}-) = -1$, as in Figure~\ref{fig:indiff_case_3}. 
These correspond to (C2) and (C3), thus completes the proof of this theorem.
\if 0

Given that an ordered payment space has only a unique crossing point with the zero-profit curve of each agent, the payment facing the allocated agent is pinned down. Let's denote the allocated agent as $\winner$ and her payment in the economy $(u_\winner, u_{-\winner}$ as $(z^\ast, y^\ast)$. 

First, we know that for any type $\tilde{u}_\winner$ that is willing to accept $(z^\ast, y^\ast)$ (i.e. $\tilde{u}(z^\ast) - y^\ast > 0$), the agent $\winner$ in the economy $(\tilde{u}_\winner, u_{-\winner})$ must be allocated the resource. Otherwise, she gets zero expected utility, and would therefore report $u_\winner$ as her type. Given the one-dimensional payment space, we know that there is a unique zero-crossing point between $u_{N- \{\winner\}}(z)$ and $\paymentSpace$, which has to be $(z^\ast, y^\ast)$, we know that the payment that agent $\winner$ is charged must also be $(z^\ast, y^\ast)$, when she reports $\tilde{u}$ and gets allocated. This proves the ``agent-independent" property, that an agent's payment under the mechanism depends not on her own report.
Fix any economy. If the allocated agent is charged $\socialV$ under the CP($\socialV$) mechanism, we have already achieved the highest possible social welfare. Therefore, the only interesting cases to study is when the allocated agent is charged a penalty $z^\ast < \min \{ \socialV, \zc_\winner \}$. In this case, a higher penalty induces a more optimal utilization decision in Period~1.

Note that when $z^\ast < \min \{ \socialV, \zc_\winner \}$, we must have $z^\ast = \max_{i\neq \winner} \zc_i$. This is already the point on the second frontier (with $y \geq 0$) of the economy that has the highest penalty part. Given that we've shown that a mechanism with the desired properties cannot charge an even higher penalty, we conclude that the CP$(\socialV)$ mechanism is welfare-optimal.
\fi
\end{proof}

\fi


\end{document}